\documentclass[11pt]{article}
\usepackage{amsthm,amssymb,amsfonts,amsmath,amscd}
\usepackage{nccmath}
\usepackage{graphicx}
\usepackage[pdfa]{hyperref}

\makeatletter \@addtoreset{equation}{section} \makeatother

\newtheorem{theorem}{Theorem}[section]
\newtheorem{lemma}{Lemma}[section]

\newtheorem{proposition}{Proposition}[section]

\newcommand{\E}{\mathbf{E}}
\newcommand{\mdet}{\mathrm{det}}

\newcommand{\Tr}{\mathrm{Tr}\,}

\begin{document}

\title{Characteristic polynomials of non-Hermitian random band matrices near the threshold}
\author{ Mariya Shcherbina
\thanks{School of Mathematics in University of Bristol and Institute for Low Temperature Physics, Kharkiv, Ukraine, e-mail: shcherbi@ilt.kharkov.ua. The
research was supported by the ERC Advanced Grant "RMTBeyond" No. 101020331.  MS completed the work during her stay as Royal Society Wolfson Visiting Fellow (RSWVF/25/R2/1008) at the School of Mathematics in Bristol University.} \and
 Tatyana Shcherbina
\thanks{ Department of Mathematics, University of Wisconsin - Madison, USA, e-mail: tshcherbyna@wisc.edu. This material is based upon work supported  in part by Alfred P. Sloan Foundation grant FG-2022-18916 and the National Science Foundation  grant DMS-2346379}
}
\date{}
\maketitle

\begin{abstract}
The paper \cite{SS:band_com_d} shows that the asymptotic behavior of the second correlation function of characteristic polynomials of the $N\times N$
non-Hermitian random band matrices with a bandwidth $W$ exhibits the transition at $W\sim \sqrt{N}$, as $W,N\to \infty$:  it coincides with those for 
Ginibre ensemble for $W\gg \sqrt{N}$, and
factorized as $1\ll W\ll \sqrt{N}$. 
In this work we extend the techniques of \cite{SS:band_com_d} to study the critical regime when the 
bandwidth $W$ is proportional to the threshold $\sqrt{N}$.

\end{abstract}

\section{Introduction}\label{s:1}
As in \cite{SS:band_com_d}, we consider $N\times N$ non-Hermitian random band matrices (RBM) $H_N$ with independent Gaussian entries $H_{jk}$ such that
\begin{equation}\label{ban}
\E\{H_{jk}\}=0,\quad \mathbf{E}\big\{ H_{jk}\bar H_{jk}\big\}=J_{jk}
\end{equation}
with
\begin{equation}\label{J}
J=\left(-W^2\Delta+1\right)^{-1}.
\end{equation}
Here $\Delta$ is the discrete Laplacian on $[1,N]\cap \mathbb{Z}$ with Neumann boundary conditions.
It is easy to see that  the variance of matrix elements $J_{ij}$ is exponentially small when $|i-j|\gg W$, and so  $W$ can be considered as the  bandwidth.

The probability law of  $H_N$ can be written in the form
\begin{equation}\label{band}
P_N(d H_N)=\prod\limits_{j,k=1}^N\dfrac{dH_{jk}d\overline{H}_{jk}}{\pi J_{jk}}e^{-\frac{|H_{jk}|^2}{J_{jk}}}.
\end{equation}
The Hermitian analog of matrices (\ref{ban}) have various application in mathematical physics: the
eigenvalue statistics of RBM is in relevance in quantum chaos, the quantum dynamics
associated with RBM can be used to model conductance in thick wires, etc. In particular, the local eigenvalue statistics of Hermitian RBM poses 
the transition similar to the famous Anderson metal-insulator phase transition in dimension one: for $W\gg \sqrt{N}$ the eigenvalues have universal GUE 
local statistics and eigenvectors are  delocalized, while $W\ll \sqrt{N}$ gives the Poisson statistics and the localized eigenvectors.
This result was supported by the physical derivation due to Fyodorov and Mirlin  \cite{FM:91}, and rigorously confirmed recently at \cite{SS:Un}, \cite{YY:1d_band}, \cite{ErR:band}, \cite{Dr:band}. The recent higher dimensional results can be found in \cite{YY:2d_band}, \cite{YY:3d_band}.

The non-Hermitian RBM are much less studied. The empirical spectral distribution is expected to converge to the circular law, i.e. to the uniform distribution on a unit disk appearing as a limiting distribution for the non-Hermitian matrices with iid entries (see \cite{TV:08}, \cite{TVKr:10} and references therein), if $W,N\to\infty$.
For RBM with the specific block tridiagonal structure the circular law is proved in \cite{H:25_1} for $W\gg n^\varepsilon$ , however, for the general band profile the best recent result \cite{H:25} shows such convergence for $W\gg N^{1/2+c}$ only (see also \cite{H:24}, \cite{JJLO:21}, \cite{Tikh:23} and references therein for previous results). 

As the first attempt to test the local eigenvalue statistics of non-Hermitian RBM, in the paper \cite{SS:band_com_d} we studied the asymptotic local behaviour of the 
second correlation functions of characteristic polynomials defined as
\begin{equation}\label{Theta_k}
\Theta (z_1,z_2)=\mathbf{E}\Big\{\prod\limits_{s=1}^{2}\det(X_n-z_s)\det(X_n-z_s)^*\Big\},
\end{equation}
where expectation is taken with respect to (\ref{band}), and
\begin{align}\label{z_1,2}
z_1=z+\zeta/N^{1/2},\, z_2=z-\zeta/N^{1/2}, \quad |z|<1,
\end{align}
with $\zeta$ varying in a compact set in $\mathbb{C}$. 

The main result of \cite{SS:band_com_d}  shows the transition in the local behavior of $\Theta(z_1,z_2)$ at $W\sim \sqrt{N}$ similar 
to that for the Hermitian (or real symmetric) RBM (see \cite{TSh:14}, \cite{TSh:15}, \cite{SS:17},  \cite{TS:22})): 
\begin{theorem}\label{t:trans} Given the band matrix of the form (\ref{band}) -- (\ref{J}), and $z_1, z_2$ of (\ref{z_1,2}), we have
with any fixed $\varepsilon_0>0$
\begin{align*}
\lim_{N\to\infty,\frac{W^2}{N\log^2 N}\to \infty}\frac{\Theta(z_1,z_2)}{\Theta^{1/2}(z_1,z_1)\Theta^{1/2}(z_2,z_2)}=\left\{\begin{array}{ll}
\frac{1-e^{-4|\zeta|^2}}{4|\zeta|^2},& \sqrt {N\log^2 N} \le W\le N^{1-\varepsilon_0},\\
 e^{-2|\zeta|^2},& N^{\varepsilon_0}\le W\le \sqrt{\frac{N}{\log N}}.
\end{array}\right.
\end{align*}
The limit in the ``delocalized" regime $\sqrt {N\log^2 N} \le W\le N^{1-\varepsilon_0}$ coincides with the same limit taken for Ginibre matrices with iid Gaussian entries computed in \cite{Ak-Ve:03}. 
\end{theorem}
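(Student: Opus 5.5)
The approach is the supersymmetric (Grassmann) integral representation followed by a Hubbard--Stratonovich transformation and a saddle-point analysis. I would first write each factor $\det(H_N-z_s)\det(H_N-z_s)^*$ as a Gaussian integral over Grassmann variables $\psi_{js},\chi_{js}$, $j=1,\dots,N$, $s=1,2$. Averaging over the Gaussian entries with variance profile $J_{jk}$ then gives a quartic form $\sum_{j,k}J_{jk}(\bar\psi_j\cdot\psi_k)(\bar\chi_k\cdot\chi_j)$-type. I would decouple it with $2\times2$ complex matrices $Q_j$, one per site, with covariance $J^{-1}=-W^2\Delta+1$. Integrating out the Grassmann variables gives
\[
\Theta(z_1,z_2)=C_N\int \exp\Big\{-\sum_j \Tr Q_j^*Q_j-W^2\sum_j\Tr(Q_{j+1}-Q_j)^*(Q_{j+1}-Q_j)\Big\}\prod_j\mdet\begin{pmatrix}Q_j & Z\\ -Z^* & Q_j^*\end{pmatrix}dQ,
\]
with $Z=\mathrm{diag}(z_1,z_2)$, up to the precise block structure. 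This is a one-dimensional nearest-neighbour chain, or transfer operator, in $Q_j$.

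Next I would run the saddle-point analysis. At $Z=z I$ the single-site action has a degenerate saddle manifold, $Q=\sqrt{1-|z|^2}\,U$ with $U$ ranging over a unitary group $U(2)$ (more precisely a coset). In directions transversal to this manifold the Hessian is of order $1$, so the transversal fluctuations are Gaussian with covariance $(1-W^2\Delta)^{-1}$. Their contribution cancels in the normalized ratio up to $1+o(1)$, because the $\zeta$-dependence of the Gaussian determinant is $O(N\cdot W^{-1}N^{-1})$. What remains is a sigma-model in the Goldstone modes $U_j$:
\[
\int\prod_j dU_j\,\exp\Big\{-\tilde W^2\sum_j\Tr|U_{j+1}-U_j|^2+\frac{c}{N}\sum_j\Tr\big(U_j\,\zeta\sigma_3U_j^*\sigma_3\big)\Big\},
\]
where $\tilde W^2\asymp W^2(1-|z|^2)$ and the perturbation comes from expanding $\mdet$ to first order in $\zeta/\sqrt N$, together with the second-order $|\zeta|^2/N$ terms.

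The two regimes are then read off from this sigma-model. If $W^2\gg N\log^2N$, the rotations $U_j$ stay close to a common $U$ over the whole chain. Their variance is $\sum_j|U_j-U_1|^2\sim N^2/W^2\cdot N^{-1}\ll1$ after the $\log$ control, so the integral reduces to a single integral over $U(2)/(U(1)\times U(1))$ of $e^{\,c|\zeta|^2\cos\theta}$-type. Together with the normalization, this Itzykson--Zuber/Harish-Chandra integral yields $(1-e^{-4|\zeta|^2})/(4|\zeta|^2)$, which matches Ginibre. If $W\ll\sqrt{N/\log N}$, the chain decorrelates over length $W^2\ll N$. The transfer operator then has a spectral gap of order $W^{-2}$, and the sum of $N$ weakly correlated terms satisfies a law of large numbers. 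Each $U_j$ is distributed according to the invariant (Haar) measure, and the average of the perturbation factorizes into the product of diagonal contributions, giving $e^{-2|\zeta|^2}$ after division by $\Theta^{1/2}(z_1,z_1)\Theta^{1/2}(z_2,z_2)$.

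The main obstacle will be the rigorous control of the integral away from the saddle manifold. The $Q_j$ are non-compact, and the integrand is not positive in general. One needs a priori bounds showing that sites with $Q_j$ far from the manifold contribute negligibly, uniformly over $N$ sites. This should be done via the transfer operator approach: bound $\|K\|$ on the complement of a neighbourhood of the manifold, and use the upper bound $W\le N^{1-\varepsilon_0}$ and the lower bound $W\ge N^{\varepsilon_0}$ to absorb polynomial factors. The other delicate step is the spectral gap of the sigma-model transfer operator needed in the localized regime. Here I would use the explicit Fourier (character) decomposition on $U(2)$ to diagonalize the nearest-neighbour kernel.
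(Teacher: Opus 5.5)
Your overall route is the one of \cite{SS:band_com_d} summarized in Section~\ref{s:2}: a Grassmann/Hubbard--Stratonovich representation as a nearest-neighbour chain in $Q_j$, concentration near $Q=u_*U$, and a transfer operator in $U$ analysed through the irreducible representations of $SU(2)$, with gap $1-\lambda_\ell=\ell(\ell+1)/8(u_*W)^2$. However, your reduction to the sigma model contains a false step, and that step is what fixes the constants in the answer.

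You argue that the transversal (radial) modes drop out of the ratio because the $\zeta$-dependence of their Gaussian determinant is $O(W^{-1})$. That estimate is fine, but it ignores the source term that is \emph{linear} in the transversal fluctuation. Write $Q=U(u_*+Y)$ with $Y=Y^*$ and $L=\mathrm{diag}\{1,-1\}$. The per-site action $-\Tr QQ^*+\log\mdet\mathcal{Q}$ then contains
\[
-2u_*^2\Tr Y^2-\frac{2u_*}{\sqrt N}\Tr\big((\zeta\bar z\,U^*LU+\bar\zeta z L)Y\big),
\]
which is the shift $R\to R-\epsilon\mathcal{M}(U)$ in (\ref{pert}). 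Completing the square and integrating out $Y$ gives $N^{-1}|z|^2|\zeta|^2\Tr(U^*LUL)$ per site. Summed over $N$ sites this is an $O(1)$, $U$-dependent contribution.

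Evaluating on the saddle manifold alone gives only $N^{-1}u_*^2|\zeta|^2\Tr(U^*LUL)$. The $z$-independent potential $2\nu(U)/N$ of (\ref{nu}) appears only as the sum of the two terms, since $u_*^2+|z|^2=1$. The same holds for the normalization. The growth $\Theta(z_s,z_s)\sim e^{2N|z_s|^2}$, and hence the factor $e^{-2|\zeta|^2}$ in the ratio, requires re-optimizing the radius (an envelope-theorem effect). Taken literally, your reduction gives both limits with $|\zeta|^2$ replaced by $(1-|z|^2)|\zeta|^2$. Your displayed perturbation is also linear in $\zeta$, whereas the first-order term vanishes on the saddle manifold because $\Tr L=0$; the correct potential is quadratic in $\zeta$. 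With $c$ left undetermined, the two closed formulas are asserted rather than derived.

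In the paper this coupling is handled by Lemma~\ref{l:shift}. The operator is tested on shifted functions $\Psi(R-\epsilon\mathcal{M}(U))h(U)$ with $\epsilon=(W/N)^{1/2}$. On these, $\mathcal{K}_\zeta$ acts as $e^{2\nu(U)/N}\mathcal{K}_0$ up to $O(W^{-2})$. The $U$-dependence of the shift is controlled by (\ref{diff_Z}), (\ref{diff_K_eps}) and commutator bounds. The $O(\epsilon W^{-3/2})$ terms vanish by the unitary invariance (\ref{int_2}).

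The points you list only as obstacles also need specific tools that your proposal does not supply.
\begin{itemize}
\item Positivity (self-adjointness) of the kernel comes from treating $\zeta,\bar\zeta$ as independent. One restricts to $\zeta=\xi e^{i\phi}$, $\bar\zeta=\eta e^{-i\phi}$ with $\phi=\arg z$ and real $\xi,\eta$, then uses analyticity.
\item The reduction to finitely many radial modes and to $\ell\le M$ needs $\|\mathcal{K}_\zeta\|\le 1+k/N$ together with the Schur-complement truncations of Propositions~\ref{p:CT}, \ref{p:norm} and Lemmas~\ref{l:K_L}, \ref{l:P}.
\item These yield the effective matrix $\delta_{\ell\ell'}\delta_{kk'}\lambda_\ell+N^{-1}(\nu t^{(\ell)}_{0k},t^{(\ell')}_{0k'})$ of (\ref{cor:1.2}), which must be controlled to precision $o(1/N)$.
\end{itemize}
Heuristics such as ``all $U_j$ close to a common $U$'' or a law of large numbers do not give that precision.
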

Notice that although  $\Theta$ is not a local object in terms of eigenvalue statistics, the limit  in the ``delocalized" regime holds also 
for any non-Hermitian matrices with iid entries as soon as the first four moments of elements distribution are Gaussian (see \cite{Af:19}), as well as for sparse
non-Hermitian random matrices (see \cite{AS:sp_det}). This is the manifistation of universality similar to the bulk universality of the local spectral statistics represented by the classical correlation function (see \cite{Ta-Vu:15}, \cite{MO:24}). Similar results were obtained for many classical Hermitian random matrix ensembles
(see, e.g., \cite{Br-Hi:00}, \cite{Br-Hi:01}, \cite{St-Fy:03},\cite{TSh:ChW}, \cite{TSh:ChSC},\cite{Af:16}, etc.)


The aim of the current paper is to complement Theorem \ref{t:trans} by the study of the asymptotic local behaviour of the 
second correlation functions of characteristic polynomials (\ref{Theta_k}) for non-Hermitian RBM (\ref{band}) -- (\ref{J}) near  
the threshold $W\sim \sqrt{N}$. The main result is
\begin{theorem}\label{t:new}
Assume that $W=\kappa_*N^{1/2}(1+o(1))$ as $N\to \infty$. Then
\begin{align}\label{t1.1}
\lim_{N\to\infty}\frac{\Theta(z_1,z_2)}{\Theta^{1/2}(z_1,z_1)\Theta^{1/2}(z_2,z_2)}=(e^{\mathbb{A}_0}\mathbf{1},\mathbf{1})
\end{align}
where the differential operator $\mathbb{A}_0$ has the form
\begin{align}\label{bbA}
\mathbb{A}_0\varphi(z)=\frac{1}{8(\kappa_*u_*)^2}\Big((1-z^2)\frac{d^2}{dz^2}-2z\frac{d}{dz}\Big)\varphi(z)+2z\varphi(z),\quad z\in[-1,1],
\end{align}
with a boundary conditions $|\varphi(1)|<\infty$, $|\varphi(-1)|<\infty$, and $\mathbf{1}(z)\equiv 1$.
\end{theorem}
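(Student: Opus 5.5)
We follow the supersymmetric transfer-operator scheme of \cite{SS:band_com_d} and change only the final asymptotic analysis. The target is a one-dimensional diffusion on $[-1,1]$ run for a time of order $N/W^2\sim\kappa_*^{-2}$.

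\textbf{Step 1: integral representation.} First we write $\Theta(z_1,z_2)$ through Grassmann and complex Gaussian integrals and perform the Hubbard--Stratonovich transformation, exactly as in \cite{SS:band_com_d}. Since $J^{-1}=-W^2\Delta+1$ is tridiagonal, this yields a representation
\[
\Theta=\big(\mathcal K^{N-1}f,\bar f\big),
\]
where $\mathcal K$ is an integral operator in the site variables. Those variables are a pair of $2\times 2$ matrices together with a compact (unitary-type) variable coming from the bosonic/fermionic sector. The kernel has the form $e^{-W^2|X_j-X_{j+1}|^2/2}$ times single-site factors that depend on $z_1,z_2$.

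\textbf{Step 2: saddle-point manifold and the massless mode.} Next we locate the saddle-point manifold of the single-site factor. As in \cite{SS:band_com_d}, it is a family parametrized by a $U(2)$-type variable. After the $\zeta$-independent directions are factored out, this reduces to one coordinate $t=\cos\theta\in[-1,1]$, the analogue of the polar variable of the sphere $U(2)/U(1)^2\simeq S^2$.

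The transversal massive directions are Gaussian with width $O(W^{-1})$. We integrate them out using the contour shifts and bounds already established in \cite{SS:band_com_d}, which control both $\Theta(z_s,z_s)$ and the cross term. We then normalize by $\Theta^{1/2}(z_1,z_1)\Theta^{1/2}(z_2,z_2)$, which removes the $\zeta$-independent constants.

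On the massless manifold, the kernel in $t$ then behaves like
\[
\mathcal K_0=\exp\Big\{\tfrac{1}{W^2 u_*^2}\,\tfrac18 L+\tfrac{1}{N}\,2t\Big\}\big(1+o(N^{-1})\big),
\]
where $L=(1-t^2)\partial_t^2-2t\partial_t$ is the Laplace--Beltrami operator on $S^2$ in the polar variable. This heat-kernel form follows from expanding the Gaussian $e^{-W^2|\cdot|^2}$ restricted to the sphere. The potential $2t$ arises from the term in $|\zeta|^2$ produced by $z_1-z_2=2\zeta/\sqrt N$. After the rescaling built into \eqref{bbA}, its coefficient is $1/N$ per site.

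\textbf{Step 3: the limit.} With $W^2=\kappa_*^2N$ we get $\mathcal K_0=\exp\{N^{-1}\mathbb A_0+o(N^{-1})\}$, so
\[
\mathcal K_0^{N-1}\to e^{\mathbb A_0}.
\]
We justify this by a Trotter/Chernoff product-formula argument on the self-adjoint operator $L$ on $L^2[-1,1]$, with $L$ bounded above by $0$. The boundedness condition at $\pm1$ comes from the fact that the variable lives on the whole sphere. The end vectors $f,\bar f$ reduce to constants on the massless manifold, so after normalization $\Theta(z_1,z_2)/\big(\Theta^{1/2}(z_1,z_1)\Theta^{1/2}(z_2,z_2)\big)\to(e^{\mathbb A_0}\mathbf 1,\mathbf 1)$.

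As a consistency check, the limit should reproduce Theorem \ref{t:trans} at the two extremes. As $\kappa_*\to\infty$ the diffusion freezes, giving $\int e^{2|\zeta|^2 t}dt$-type behaviour (Ginibre after the normalization). As $\kappa_*\to0$ it averages completely.

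\textbf{Main obstacle.} The hard step is Step 2 with uniform control. We must show that $\mathcal K$ restricted to the complement of the massless manifold has norm at most $1-c/W^2$ relative to its top part, while $N-1$ steps with $N\sim W^2$ are taken. We must also control the error terms to order $o(N^{-1})$ per step near the poles $t=\pm1$, where the coordinate is singular. I would attempt this as in \cite{SS:band_com_d}: decompose $\mathcal K=\mathcal K_0\oplus\mathcal K_\perp$ in a neighbourhood of the saddle manifold, bound $\|\mathcal K_\perp\|$ via a spectral gap of the massive Gaussian part, and treat the cross terms by resolvent perturbation theory. Unlike in \cite{SS:band_com_d}, where one of the two regimes made the massless part trivial or fully mixing, here one must keep the massless part exactly.
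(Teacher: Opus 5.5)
Your overall architecture matches the paper's. Both use the transfer operator of Proposition~\ref{prop:IR}, concentration near $Q=u_*U$, and reduction to zonal functions on $U(2)/U(1)^2\simeq S^2$, i.e.\ Legendre polynomials in $\cos\theta$. In both, the angular eigenvalues $\lambda_\ell=1-\ell(\ell+1)/8(u_*W)^2$ produce the Legendre part of $\mathbb{A}_0$, and cross terms are handled by resolvent arguments. Replacing the paper's final step $(I+N^{-1}\mathbb{A}^{(M)})^{N-1}\to e^{\mathbb{A}}$ by a Chernoff argument would be harmless.

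The genuine gap is in Step 2. The reduced kernel with $o(N^{-1})$ error per step does not follow from integrating out the massive directions and expanding the Gaussian on the sphere, because $\zeta$ acts on the massive sector itself. By \eqref{pert} the centre of the massive Gaussian sits at $R=\epsilon\mathcal{M}(U)$, with $\epsilon=(W/N)^{1/2}\sim W^{-1/2}$ and $\mathcal{M}(U)$ depending on the massless variable. Meanwhile the massive gap is only $1-\lambda_*\sim W^{-1}$, and the radial width is $W^{-1/2}$, not $W^{-1}$. This produces three effects in an unshifted frame:
(a) $\Psi_{*\bar 0}$ couples to the excited Hermite modes with strength $\epsilon/W\sim W^{-3/2}\gg N^{-1}$, so the second-order effect is $\epsilon^2/W=N^{-1}$ per step and $U$-dependent;
(b) on $R=0$, $f_\zeta$ equals $N^{-1}\nu(U)-\frac{u_*^2}{2N}\Tr\mathcal{M}(U)^2$ up to $o(N^{-1})$, not $N^{-1}\nu(U)$;
(c) the term $\epsilon W^{-3/2}\Tr\mathcal{M}(U)R^2\varphi_1$ of \eqref{ti-f} is again of order $N^{-1}$.
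Over $N$ steps each of these is $O(1)$, the same size as the potential you are after. So the potential in $\mathbb{A}_0$, which is the remainder $\nu(U)/N$ after completing the square in $R$, is obtained only once you prove these cancel.

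The paper does this in four steps:
(i) it works with the shifted functions $\Psi_{\ell,k}(R-\epsilon\mathcal{M}(U),U)$;
(ii) it shows the shift almost commutes with the $U(2)$-convolution $K_{R_1,R_2}$ and with $\mathcal{Z}$ (Lemma~\ref{l:shift}, via \eqref{diff_Z}, \eqref{diff_K_eps} and commutator bounds for multiplication by $\cos\theta,\sin\theta$);
(iii) it kills the residual $O(\epsilon W^{-3/2})$ term using unitary invariance of $\Psi_{\bar 0}$ and $\Tr\mathcal{M}=0$, see \eqref{int_2};
(iv) it plays $\|\mathbb{K}^{(12)}\|\le C\epsilon W^{-3/2}$ against $\mathbb{K}^{(22)}\le 1-C_0/W$ in Lemma~\ref{l:P}, so cross terms contribute $O(\epsilon^2W^{-3}\cdot W\cdot N)=O(W^{-1})$.
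None of this is in your proposal. Deferring to ``contour shifts'' of \cite{SS:band_com_d} does not supply it; the paper itself re-proves \eqref{cor:1.2} in the form needed here.

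The scales in your ``main obstacle'' are also off. A bound $1-c/W^2$ on the complement of the massless sector would not suffice: after $N-1\sim\kappa_*^{-2}W^2$ steps it leaves a contribution of order $e^{-c/\kappa_*^2}$. What is needed, and true, is $1-c/W$ for the massive complement. The angular modes with $1-\lambda_\ell\lesssim W^{-1}$, i.e.\ $\ell\lesssim W^{1/2}$, have no such gap and must be handled separately.

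The asymptotics for $\mathcal{K}_\zeta$ have good remainders only for Hermite index $\lesssim\log^2 W$ and $\ell\lesssim W^{1/4}$. You therefore also need the stability bound $\|\mathcal{K}_\zeta\|\le 1+k/N$ and the reduction to $\mathbb{K}$ of Lemma~\ref{l:K_L}. That reduction truncates the Hermite modes via \eqref{exp_dec} and Lemma~\ref{l:2}. It truncates in $\ell$ at $M=[L_0W^{1/4}]$ via Proposition~\ref{p:CT}, a Combes--Thomas-type estimate using that the $\zeta$-perturbation is block tridiagonal in $\ell$ with entries $O(\epsilon/W)$. That stability bound is exactly the hypothesis a Chernoff argument requires, so it cannot be bypassed.

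By contrast, the poles $t=\pm1$ are not a difficulty if you work with the matrix elements $t^{(\ell)}_{0k}$, which are smooth on $SU(2)$. Also, Proposition~\ref{prop:IR} has a single complex $2\times 2$ matrix $Q_j$ per site. The compact variable $U$ comes from its polar decomposition, not from a separate sector.
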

The proof of Theorem \ref{t:new} is based on the extension of SUSY transfer matrix approach developed in \cite{SS:band_com_d}.
Notice that similar result for the Hermitian/real symmetric RBM was obtained in \cite{TS:20}, \cite{TS:22}.

%
The paper is organized as follows. In Section 2 we  summarize and refine the results of \cite{SS:band_com_d} needed for the proof of Theorem \ref{t:new}.
The proof of Theorem \ref{t:new} is presented in Section 3.

%


We denote by $C$, $C_1$, etc. various $W$ and $N$-independent quantities below, which
can be different in different formulas. To reduce the number of notations, we also use the same letters for the integral operators and their kernels.

\section{Transfer matrix representation and preliminary operator analysis}\label{s:2}
In this section we summarize the results of \cite{SS:band_com_d} needed for the proof of Theorem \ref{t:new}.
For the reader convenience, we provide here the main steps of the transfer operator analysis as well as the sketch of some proofs in order to explain the main ideas without
going deeply into technical details of  \cite{SS:band_com_d}. The reader interested in the full proofs should be referred to \cite{SS:band_com_d}.

Following \cite{SS:band_com_d}, it is more convenient to consider the normalized version of $\Theta$:
\begin{align}\label{ti-theta}
\tilde\Theta(z_1,z_2)=(\pi^2W^2\lambda_*^{-1})^{2(N-1)} C_{N,W}^{-1}\,\Theta(z_1,z_2),
\end{align}
where  
\begin{align}\label{lambda_*}
 \lambda_*=1-W^{-1}(\alpha-u_*^2W^{-1}), \quad \alpha=u_*(2+u_*^2W^{-2})^{1/2},\quad u_*=(1-|z|^2)^{1/2},
\end{align}
and 
\begin{equation}\label{C}
C_{N,W}=\pi^{-4N} \mdet^{-4} J \cdot e^{-2Nu_*^2}.
\end{equation}
We are going to use the following operator form of the integral representation of  $\tilde\Theta(z_1,z_2)$ obtained in \cite{SS:band_com_d} via supersymmetry techniques:
\begin{proposition}\label{prop:IR}
	Let  $H$ be the non-Hermitian Gaussian random band matrices defined by (\ref{ban}) -- (\ref{band}). Then 
	the normalized second correlation function of the characteristic polynomials  $\tilde\Theta$ defined by (\ref{ti-theta}) can be represented in the following form
\begin{align}\label{repr_Theta}
\tilde\Theta(z_1,z_2)=&\int\limits_{(H_2)^N} e^{f(Q_1)}(\prod_{j=1}^{N-1}\mathcal{K}_\zeta(Q_j,Q_{j+1}))e^{ f(Q_N)}
\prod\limits_{j=1}^n dQ_j=(\mathcal{K}_\zeta^{N-1}g,g),
\end{align}
where $H_2$ is the space of $2\times 2$ complex matrices, $\mathcal{H}=L_2(H_2)$, and $\mathcal{K}_\zeta: \mathcal{H}\to\mathcal{H}$ is an integral operator  with the kernel
\begin{align}
\mathcal{K}_\zeta(Q_j,Q_{j+1})=\pi^4W^4\lambda_*^{-2}&\exp\Big\{ -W^2 \Tr (Q_j- Q_{j+1})(Q_j- Q_{j+1})^*+f(Q_j)+f(Q_{j+1})\Big\},\label{K}
\end{align}
where
\begin{align}
 f(Q_j)=&\frac{1}{2}(-\Tr Q_jQ_j^*+\log \det \mathcal{Q}_j +2u_*^2),\quad g(Q)=e^{f(Q)},\label{f}\\
\mathcal{Q}_j =&\left(\begin{array}{cc}\hat z&iQ_j\\iQ_j^*&\hat z^*\end{array}\right),
\quad\hat z=\mathrm{diag}\{z_1,z_2\},
\label{Q_cal}\end{align}
and $\lambda_*$, $u_*$ are defined in (\ref{lambda_*}). 
\end{proposition}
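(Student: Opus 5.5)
The plan is the standard supersymmetric route for this proposition. First, represent each determinant through Gaussian integrals. The factors $\det(H-z_s)$ are written as integrals over Grassmann vectors, and so are their adjoints. Concretely, for $s=1,2$ I introduce Grassmann variables $\psi_s,\bar\psi_s,\phi_s,\bar\phi_s\in\Lambda^N$ and use
\[
\det(H-z_s)\det(H-z_s)^*=\int e^{-\bar\psi_s(H-z_s)\psi_s-\bar\phi_s(H-z_s)^*\phi_s}\,d\psi_s d\phi_s .
\]
Equivalently, I write it as the determinant of the $2N\times 2N$ Hermitized matrix $\begin{pmatrix} z_s & H\\ H^* & \bar z_s\end{pmatrix}$ up to sign conventions. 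Since the expectation of products of determinants involves only fermions, no bosonic variables are needed; "SUSY" here is really a fermionic computation.

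Next, average over $H$ using (\ref{band}). Each entry $H_{jk}$ appears linearly in the exponent, coupled to a product such as $\bar\psi_{js}\psi_{ks}$ and, through $H^*$, to $\bar\phi_{ks}\phi_{js}$. The Gaussian average therefore gives $\exp\{-\sum_{j,k}J_{jk}\,(\sum_s\bar\psi_{js}\psi_{ks})(\sum_t\bar\phi_{kt}\phi_{jt})\}$. This quartic expression is rearranged into $\exp\{\sum_{j,k}J_{jk}\Tr X_jY_k\}$, where $X_j=(\psi_{js}\bar\phi_{jt})_{s,t}$ and $Y_k$ is the analogous $2\times 2$ matrix of even Grassmann elements.

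The quartic term is then linearized by a Hubbard--Stratonovich transformation. I introduce complex $2\times 2$ matrices $Q_j\in H_2$, $j=1,\dots,N$, with the Gaussian weight $\exp\{-\sum_{j,k}(J^{-1})_{jk}\Tr Q_jQ_k^*\}$. This step produces the factor $\pi^{-4N}\mdet^{-4}J$ that appears in $C_{N,W}$, and after it the interaction is linear in the fermions at each site. Integrating out the fermions at each site $j$ gives $\det\mathcal{Q}_j$, with $\mathcal{Q}_j$ as in (\ref{Q_cal}), after a rescaling of $Q$ and a choice of the factor $i$. The resulting integral is
\[
\Theta\propto\int\exp\Big\{-\sum_{j,k}(J^{-1})_{jk}\Tr Q_jQ_k^*\Big\}\prod_j\det\mathcal{Q}_j\,dQ .
\]

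Now I use $J^{-1}=1-W^2\Delta$ with Neumann boundary conditions. This gives
\[
\sum_{j,k}(J^{-1})_{jk}\Tr Q_jQ_k^*=\sum_j\Tr Q_jQ_j^*+W^2\sum_{j=1}^{N-1}\Tr(Q_j-Q_{j+1})(Q_j-Q_{j+1})^* .
\]
Each on-site factor $e^{-\Tr Q_jQ_j^*}\det\mathcal{Q}_j=e^{2f(Q_j)-2u_*^2}$ is split as $e^{f(Q_j)}e^{f(Q_j)}$. For an interior site, one half is assigned to the kernel $\mathcal{K}(Q_{j-1},Q_j)$ and the other half to $\mathcal{K}(Q_j,Q_{j+1})$. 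The endpoint sites $1$ and $N$ keep one unabsorbed half, and this gives the factors $g(Q_1)$ and $g(Q_N)$. The constants $e^{2u_*^2}$, $\pi^4W^4\lambda_*^{-2}$ and $\pi^{-4N}\mdet^{-4}J$ are collected and matched against the normalization (\ref{ti-theta}) and (\ref{C}). The powers come out as follows: $N-1$ kernels give $(\pi^4W^4\lambda_*^{-2})^{N-1}=(\pi^2W^2\lambda_*^{-1})^{2(N-1)}$, and the $N$ sites give $e^{2Nu_*^2}$. This reproduces (\ref{repr_Theta}).

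The main obstacle is the bookkeeping in the fermionic Gaussian average and the Hubbard--Stratonovich step. Three things must be checked carefully: the signs, the placement of $i$ and of the conjugate block so that exactly $\det\mathcal{Q}_j$ appears, and the powers of $\pi$ and of $\det J$. The Hubbard--Stratonovich step itself is legitimate because $J>0$ makes the Gaussian convergent.
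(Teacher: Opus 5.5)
Your proposal follows essentially the route by which this representation is derived in \cite{SS:band_com_d}; the present paper only quotes the result. The steps match: Grassmann integrals for the four determinants, the Gaussian average over $H$, a Hubbard--Stratonovich transformation with weight $e^{-\sum_{j,k}(J^{-1})_{jk}\Tr Q_jQ_k^*}$, the identity $J^{-1}=1-W^2\Delta$ with Neumann conditions, and the symmetric splitting of $e^{2f(Q_j)}$ between neighbouring kernels. Your accounting of $\pi^{-4N}\mdet^{-4}J$, $e^{2Nu_*^2}$ and $(\pi^2W^2\lambda_*^{-1})^{2(N-1)}$ is also right.

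Two slips should be corrected.

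\emph{The sign of the Gaussian average.} In fact $\mathbf{E}\,e^{-aH_{jk}-b\bar H_{jk}}=e^{+J_{jk}ab}$, not $e^{-J_{jk}ab}$. So with $X_j=(\psi_{js}\bar\phi_{jt})_{s,t}$ and $Y_k=(\phi_{kt}\bar\psi_{ks})_{t,s}$, the quartic term is $\exp\{-\sum_{j,k}J_{jk}\Tr X_jY_k\}$. This minus sign is what forces the imaginary coupling $i\Tr(Q_jX_j+Q_j^*Y_j)$ in the Hubbard--Stratonovich step, and hence the $i$ in $\mathcal{Q}_j$; it is not a free choice. With your sign you would get, at $\zeta=0$, the site factor $\det(|z|^2-Q^*Q)$ instead of $\det\mathcal{Q}_j=\det(|z|^2+Q^*Q)$. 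The case $N=1$ decides which is correct: there $\mathbf{E}|h-z|^4=2+4|z|^2+|z|^4$, while $\pi^{-4}\int e^{-\Tr QQ^*}\det(|z|^2\pm Q^*Q)\,dQ=|z|^4\pm4|z|^2+2$.

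\emph{The Hermitization remark.} The ``equivalent'' Hermitization is false: $\det\begin{pmatrix} z_s & H\\ H^* & \bar z_s\end{pmatrix}=\det(|z_s|^2-H^*H)\ne|\det(H-z_s)|^2$. That block structure only appears after the Hubbard--Stratonovich step, with $iQ_j$ in place of $H$. Also, no rescaling of $Q$ is needed.
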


\subsection{Concentration of eigenfunctions of $\mathcal{K}_\zeta$}\label{s:3}
Following \cite{SS:band_com_d}, we notice first that using analyticity of  $\tilde\Theta(z+\zeta,z-\zeta)$ with respect to $\zeta$ and $\bar\zeta$ taken as independent variables and uniqueness theorem, one can consider only $\zeta=\xi e^{i\phi},\bar \zeta=\eta e^{-i\phi}$ with $\phi=\mathrm{arg}\, z$ and  $\xi,\eta\in \mathbb{R}$.
Since by (\ref{f}) $\det \mathcal{Q}_j\in\mathbb{R}$ for such $\zeta, \bar\zeta$, starting from this moment, we can consider $\mathcal{K}_\zeta$ of (\ref{K}) as a positive operator while for simplicity keeping notations $\mathcal{K}_\zeta$, $\zeta,\bar\zeta$.

Next, diagonalizing $Q$ by unitary matrices, it is easy to see that for $\zeta=\bar{\zeta}=0$ the function $f$ of (\ref{f}) takes its maximum at $Q=u_*U$, with some unitary $U$ and $u_*$ of (\ref{lambda_*}).

The first step in the analysis of $\mathcal{K}_\zeta$ is to show that the main contribution to (\ref{repr_Theta}) is given  by  $\cap_j \{\Omega_W(Q_j)\}$ with 
\begin{align}\label{Omega}
\Omega_W=\{Q:\|Q^*Q-u_*^2I_2\|\le \log W/W^{1/2}\}.
\end{align}

\medskip

 Let $\mathbb{P}_W=\mathbf{1}_{\Omega_W}$ be the orthogonal projection in $\mathcal{H}=L_2(\mathbb{C}^4)$ on 
functions whose support lies in the domain $\Omega_W$ of (\ref{Omega}). Then we have
\begin{lemma}\label{l:conc} There is $N,W$-independent $C_1$ such that
\begin{align}\label{Om1}
\|(1-\mathbb{P}_W)\mathcal{K}_\zeta(1-\mathbb{P}_W)\|\le 1-C_1 \log W/W.
\end{align}
\end{lemma}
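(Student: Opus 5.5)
The plan is to reduce the operator-norm bound to a pointwise bound on the kernel, via a Schur test. Write $\mathcal{K}_\zeta(Q,Q')=A(Q,Q')\,e^{f(Q)+f(Q')}$ with
\[
A(Q,Q')=\pi^4W^4\lambda_*^{-2}e^{-W^2\Tr (Q-Q')(Q-Q')^*}.
\]
The Gaussian factor $\pi^4W^4e^{-W^2\Tr(\cdot)}$ is a probability density on $H_2\cong\mathbb{C}^4$, so $A$ is a convolution operator with $\|A\|\le\lambda_*^{-2}=1+2\alpha W^{-1}+O(W^{-2})$.

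Hence
\[
\|(1-\mathbb{P}_W)\mathcal{K}_\zeta(1-\mathbb{P}_W)\|\le \lambda_*^{-2}\,\sup_{Q\notin\Omega_W}e^{2\,\mathrm{Re} f(Q)} .
\]
This is just the bound $\|D A D\|\le\|D\|^2\|A\|$ with $D$ the multiplication operator by $\mathbf{1}_{\Omega_W^c}e^{f}$. With $\zeta,\bar\zeta$ in the real parametrization chosen above, $f$ is real and the kernel is positive, so no cancellations need to be tracked. The statement therefore reduces to a quantitative gap for $f$ off the concentration set:
\[
f(Q)\le -C\log^2 W/W \quad\text{for }Q\notin\Omega_W,
\]
uniformly in $\zeta$ in a compact set. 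Given this, $\lambda_*^{-2}e^{-2C\log^2W/W}\le 1+2\alpha/W-2C\log^2 W/W+O(W^{-2})\le 1-C_1\log W/W$ for large $W$. This holds with plenty of room, since $\log^2W\gg\log W$ and the $2\alpha/W$ term is absorbed.

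To prove the gap, first take $\zeta=0$. Let $s_1,s_2\ge0$ be the singular values of $Q$. Then $\det\mathcal{Q}$ factorizes as $\prod_{k}(|z|^2+s_k^2)$, so
\[
f(Q)=\tfrac12\sum_{k=1}^2\big(-s_k^2+\log(|z|^2+s_k^2)+u_*^2\big)=\sum_k h(s_k^2),
\qquad h(t)=\tfrac12\big(-t+\log(|z|^2+t)+u_*^2\big).
\]
The function $h$ is concave on $t\ge0$, with its unique maximum at $t=u_*^2$ where $h=0$, and $h''(u_*^2)=-\tfrac12$. Consequently $h(t)\le -c\min\{(t-u_*^2)^2,|t-u_*^2|\}$ with $c>0$ depending on $|z|<1$. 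If $Q\notin\Omega_W$, some $|s_k^2-u_*^2|>\log W/W^{1/2}$, which gives $f(Q)\le -c\log^2W/W$. The remaining term is linear in $t$ for large $t$, and this also keeps $e^{f}$ bounded at infinity.

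For $\zeta\ne0$ I would treat the perturbation. Since $\hat z-z=O(N^{-1/2})$, we have
\[
\log\det\mathcal{Q}_\zeta-\log\det\mathcal{Q}_0=O(|\zeta|N^{-1/2}(1+\|Q\|^2)^{-1/2})
\]
on the region where $\det\mathcal{Q}_0$ stays bounded away from $0$. That region is all of $H_2$ when $|z|>0$. Near $z=0$, one separately handles $s_k$ small, where $h$ is very negative anyway. Because $W\le N^{1-\varepsilon_0}$ is not assumed here but $W\sim\kappa_*N^{1/2}$ is, the perturbation is $O(W^{-1})$. This is $o(\log^2W/W)$, so it does not destroy the gap.

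The main obstacle is exactly this perturbative step. The $O(N^{-1/2})$ shift must be controlled uniformly in $Q$, including near the set where $\det\mathcal{Q}$ could vanish, i.e.\ small $|z|$ together with small singular values. If needed, I would use there the cruder bound $\log\det\mathcal{Q}\le \log\det\mathcal{Q}_0+C|\zeta|N^{-1/2}$, obtained by comparing the diagonal blocks. The only real point is keeping $\lambda_*^{-2}$ exactly at $1+O(1/W)$.
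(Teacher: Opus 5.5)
Your argument is correct. The paper does not prove this lemma itself; it is quoted from \cite{SS:band_com_d}, so there is no in-text proof to compare with. Your route is the natural one:
\begin{itemize}
\item write $(1-\mathbb{P}_W)\mathcal{K}_\zeta(1-\mathbb{P}_W)=DAD$ and bound its norm by $\|A\|\,\sup_{\Omega_W^c}e^{2f}$;
\item use that $h(t)=\frac12(-t+\log(|z|^2+t)+u_*^2)$ is concave with a non-degenerate maximum $0$ at $t=u_*^2$.
\end{itemize}
It actually gives the stronger bound $1-c\log^2W/W$. The reason is that the $\log^2W/W$ gap of $f$ off $\Omega_W$ beats the excess $\lambda_*^{-2}-1=O(1/W)$.

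A few points to tidy.

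(1) Unit mass of the Gaussian. That $\pi^4W^4e^{-W^2\Tr XX^*}$ has mass one is the normalization implicit in (\ref{K}); it is what makes the top of the spectrum of $\mathcal{K}_0$ close to $1$. With Lebesgue measure on $\mathbb{C}^4$ the printed constant literally gives mass $\pi^8W^{-4}$. So state it as a convention. All you use is mass $\le 1$, i.e. $\|A\|\le\lambda_*^{-2}$.

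(2) The perturbation estimate. The decay factor $(1+\|Q\|^2)^{-1/2}$ you wrote is false when one singular value of $Q$ is large and the other is small. You don't need it. Write $\hat z=e^{i\phi}a$ and $\hat z^*=e^{-i\phi}b$, where $a,b$ are positive diagonal matrices equal to $|z|(1+O(N^{-1/2}))$ and $\det a\le |z|^2$. Then
$\det\mathcal{Q}_\zeta=\det a\,\det(b+Q^*a^{-1}Q)\le(1+CN^{-1/2})\det(|z|^2+Q^*Q)$.
Hence $f_\zeta\le f_0+CN^{-1/2}=f_0+O(W^{-1})$ uniformly in $Q$, which is all your argument requires.

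(3) The discussion of $z$ near $0$ is unnecessary. The reduction to real $\det\mathcal{Q}$ uses $\phi=\arg z$, so $z\neq 0$ is fixed throughout, and your constants may depend on $|z|$.
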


Now let us study $\mathbb{P}_W\mathcal{K}_\zeta\mathbb{P}_W$. 

Consider the cylinder  change of variables (see, e.g., \cite{Hu:63}) 
\begin{align}\label{c_ch}
Q_i=U_i\mathcal{R}_i, \quad U_i\in U(2), \quad \mathcal{R}_i>0,\quad J(\mathcal{R})=\pi^3(\Tr \mathcal{R})^2\det \mathcal{R}.
\end{align}
Everywhere below we consider our operators acting in $\mathcal{H}_0\otimes L_2(U(2))$ with
\begin{align}\label{H_0}
\mathcal{H}_0=L_2(\mathcal{H}_{2,+}) \,\text{with innner product}\, (\psi_1(\mathcal{R}),\psi_2(\mathcal{R}))=
\int_{\mathcal{H}_{2,+}}\psi_1(\mathcal{R})\overline{\psi_2(\mathcal{R})})d\mathcal{R}.
\end{align}
Here $\mathcal{H}_{2,+}$ is the space of all positive $2\times 2$ matrices and $d\mathcal{R}$ means the Lebesgue measure on $\mathcal{H}_{2,+}$.

Since $\mathbb{P}_W$ is the projector on $\Omega_W$ (see (\ref{Omega})),  Lemma \ref{l:conc} implies that we can restrict the integration with respect to $\mathcal{R}$ by $O(W^{-1/2}\log W)$ neighbourhood
 of $u_*I_2$, i.e. 
 \begin{align}\label{R}
 \mathcal{R}_i=u_*(I_2+W^{-1/2}R_i),\quad R_i=R_i^*,\quad\|R_i\|\le\log W+o(1).
 \end{align}
 Then we get
\begin{align}\label{Theta1}
\tilde\Theta(z_1,z_2)=&(\mathcal{K}_\zeta^{N-1}g,g), 
\end{align}
where $\mathcal{K}_\zeta$ is an integral operator with the kernel
\begin{align}\label{cal_K}
&\mathcal{K}_\zeta(R_1,U_1,R_2,U_2)= \mathcal{A}_\zeta(R_1,U_1,R_2,U_2) K_{R_1,R_2}(U_2^*U_1),
\\
\label{K,Z}
&K_{R_1,R_2}(U)=Z^{-1}(R_1,R_{2})e^{k_{R_1,R_2}(U^*_{2}U_1)},\\
\notag &k_{R_1,R_2}(U)=u_*^2W^2\Tr \Big((U-1)(1+R_1/W^{1/2})(1+R_{2}/W^{1/2})\Big)+cc,\\
& Z(R_1,R_{2})=(\pi u_*W)^{3}\int dU \exp\{k_{R_1,R_2}(U)\},\notag\\
  &\mathcal{Z}(R_1,R_2)=J^{1/2}(u_*(1+R_1/W^{1/2}))J^{1/2}(u_*(1+R_2/W^{1/2}))Z(R_1,R_{2}),
\label{cal_Z}\end{align}
with $J(\mathcal{R})$ defined in (\ref{c_ch}).
Operator $\mathcal{A}_\zeta$ of (\ref{cal_K}) has the form
\begin{align}\label{A}
\mathcal{A}_\zeta(R_1,U_1,R_2,U_2)=&e^{f_\zeta(R_1,U_1)}B(R_1-R_2)
e^{f_\zeta(R_2,U_2)}\mathcal{Z}(R_1,R_2)\\
f_\zeta(R,U)=&f(u_*U(1+R/W^{1/2})),\quad B(R)=(\lambda_*\pi u_*^2W)^{-2}e^{-Wu^2_*\Tr R^2},\notag
\end{align}
where $u_*,\lambda_*$ are defined in (\ref{lambda_*}), and $f$ is from (\ref{f}).

Function $g$ in (\ref{Theta1}) is obtained by the change of variables  (\ref{c_ch}) and (\ref{R}) in  $g$ of (\ref{f}):
\begin{align}\label{new_g}
g=e^{f_\zeta(R,U)},\quad \|g\|=CW(1+o(1)).
\end{align}
Now let us expand $f_\zeta(R,U)$ around  $Q_*=u_*U$.  Denote 
\begin{align}
L_{U^*}=U^*LU, \quad \epsilon=\big(W/N\big)^{1/2},\quad\mathcal{M}(U)=-\frac{1}{2u_*^2}(\zeta \bar zL_{U^*}+\bar\zeta  zL).
\label{M(U)}\end{align}
Notice that in the conditions of Theorem \ref{t:new} we have $\varepsilon\sim W^{-1/2}$.

%
Then,
\begin{align}\label{pert}
f_\zeta(R,U)
=&-\frac{u_*^2}{2W}\Tr (R-\epsilon\mathcal{M}(U))^2+N^{-1}\nu(U)+\tilde f_\zeta (R,U)+O(\epsilon^2 W^{-3/2}),
\end{align}
where
\begin{align}\label{ti-f}
\tilde f_\zeta (R,U)=&W^{-3/2}\Tr R^3\varphi_0(1+R/W^{1/2})+(\epsilon/ W^{3/2})\Tr \mathcal{M}(U) R^2\varphi_1(1+R/W^{1/2}),\\
\nu(U)=&|\zeta|^2\Tr LU^*LU/2.
\label{nu}\end{align}
Here $\varphi_0(x)$ and $\varphi_1(x)$ are some $N,W$-independent function analytic around $x=1$, whose concrete form is not important for us.
We also denote  $\hat\nu(U)$ the operator of multiplication  by $\nu$.

Operator $\mathcal{A}_\zeta$ of  (\ref{A})  takes the form
\begin{align}\label{A_zeta}
&\mathcal{A}_\zeta(R_1, U_1,R_2,U_2)=F_\zeta(R_1, U_1)B(R_1-R_2)\mathcal{Z}(R_1,R_2)F_\zeta(R_2, U_2)\big(1+O(N^{-1}W^{-1/2})\big)
\\
&F_\zeta(R, U)= e^{-2u_*^4\Tr (R-\epsilon\mathcal{M}(U))^2/W+\nu(U)/N+\tilde f(R,U)},\qquad\quad F_0(R)=F_\zeta(R, U)\Big|_{\zeta=\bar\zeta=0}.
\notag
\end{align}

\subsection{Analysis of $\mathcal{A}_\zeta$ of (\ref{A_zeta}) and $K_{R_1,R_2}$ of (\ref{K,Z})}\label{s:4}
Consider
\begin{align}\label{hat_A}
\mathcal{A}(R_1,R_2)=&F_0(R_1)B(R_1-R_2)\mathcal{Z}(R_1,R_2)F_0(R_2)=\mathcal{A}_\zeta(R_1,R_2)\Big|_{\zeta=\bar\zeta=0},\\
\mathcal{A}_0(R_1,R_2)=&F_0(R_1)B(R_1-R_2)F_0(R_2)\label{A_0}\\
=&e^{W^{-3/2}\Tr R_1^3\varphi_0(1+R_1/W^{1/2})}\mathcal{A}_*(R_1,R_2)e^{W^{-3/2}\Tr R_2^3\varphi_0(1+R_2/W^{1/2})}\notag\end{align}
 with $\varphi_0$ of (\ref{ti-f}) and $\mathcal{A}_*$ being the quadratic approximation of $\mathcal{A}$:
\begin{align}
&\mathcal{A}_*(R_1,R_2)=e^{-u_*^4\Tr R_1^2/W}B(R_1-R_2)e^{-u_*^4\Tr R_2^2/W}\label{A_*},\\\
&\hskip1.8cm=\mathcal{A}_{*1}(x_{01},x_{02})\mathcal{A}_{*1}(x_{11},x_{12})\mathcal{A}_{*1}(x_{21},x_{22})\mathcal{A}_{*1}(x_{31},x_{32}),\notag\\
&\mathcal{A}_{*1}(x,y)=\Big(\frac{u_*^2W}{\pi\lambda_* }\Big)^{1/2}e^{-2u_*^4x^2/W}e^{-2Wu^2_*(x-y)^2}e^{-2u_*^4y^2/W}.
\notag\end{align}
In (\ref{A_*})  we  represent 
 \begin{align}\label{expR}
 R_{l}=x_{0l}I_2+x_{1l}\sigma_1+x_{2l}\sigma_2+x_{3l}\sigma_3,\quad l=1,2,
 \end{align}
where $\sigma_1,\sigma_2,\sigma_3$  are the Pauli matrices.

Eigenvalues and eigenvectors of $\mathcal{A}_*$
of (\ref{A_*})  can be computed explicitly via Hermite polynomials, and the next lemma shows that their deviation from  eigenvalues and eigenvectors of operator $\mathcal{A}$ is small:
\begin{lemma}\label{l:A}
Let $\{\Psi_{*\bar m}(R),\lambda_{*\bar m}\}$  be eigenvectors and eigenvalues
of the operator $\mathcal{A}_*$ of (\ref{A_*}). Then
\begin{align}
 \Psi_{*\bar m}(R)=&P_{\bar m}(R)e^{-\alpha u_*^2\Tr R^2},\quad P_{\bar m}(R)=\prod_{i=0}^3H_{m_i}(u_*(2\alpha)^{1/2} x_i)/ \kappa_{m_i}, \label{psi_k}\\
 \lambda_{*\bar m}=&\lambda_{*}^{|m|},\quad
\bar m=(m_0,m_1,m_2,m_3),\, m_i=0,1,\dots,\quad |\bar m|=\sum_{i=0}^3|m_i|.
\notag\end{align}
Here $H_m(x)$ is the $m$th Hermite polynomial, $\kappa_m$ is a normalization factor, and $\lambda_*$, $\alpha$ are defined in (\ref{lambda_*}).

Let $E_{|\bar m|}=\mathrm{Lin}\{\Psi_{*\bar j}\}_{|\bar j|=|\bar m|}$ and $\gamma(|\bar m|)=\mathrm{dim}E_{|\bar m|}$. Then  there are   $\gamma(|\bar m|)$
 eigenvalues $\{\lambda^{(\mu)}_{|\bar m|}\}_{\mu=1}^{\gamma(|\bar m|)}$ of $\mathcal{A}$ of (\ref{hat_A}) 
 such that
\begin{align}
&|\lambda^{(\mu)}_{|\bar m|}-\lambda_{*\bar m}|\le C(|\bar m|+1)W^{-2}. \label{diff_eig}\end{align}
and if 
  an eigenvector $\Psi_{\bar 0}(R)$ corresponds to  $\lambda_{max}$, then 
\begin{align}
\Psi_{\bar 0}(R)=\Psi_{*\bar 0}(R)+W^{-1/2}\tilde \Psi_{\bar 0}(R),\quad \|(1-P_L)\Psi_{\bar 0}\|\le CN^{-3}
\label{de_Psi.0}
\end{align}
where we denote by $P_L$  the orthogonal projection in $\mathcal{H}_0=L_2(\mathcal{H}_{2,+})$  on the subspace \\
$\mathcal{H}_L=\mathrm{Lin}\{\Psi_{*\bar k}(R)\}_{|\bar k|\le L}$, $ L=C\log^2W$.
\end{lemma}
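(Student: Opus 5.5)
The first part of Lemma \ref{l:A} can be checked directly, because $\mathcal{A}_*$ factorizes. By (\ref{A_*}) it is the tensor product of four copies of the one-dimensional Gaussian kernel $\mathcal{A}_{*1}$. So it is enough to diagonalize $\mathcal{A}_{*1}$.

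\emph{Diagonalizing the one-dimensional kernel.} Write
\[
\mathcal{A}_{*1}(x,y)=c\,e^{-a x^2}e^{-b(x-y)^2}e^{-a y^2},\qquad a=\frac{2u_*^4}{W},\quad b=2Wu_*^2 .
\]
This kernel is the Mehler kernel up to rescaling and a constant factor. Substituting the ansatz $e^{-\beta x^2}$ and computing the Gaussian integral fixes $\beta$ through a quadratic equation. With $\beta=2\alpha u_*^2$, which matches $e^{-\alpha u_*^2\Tr R^2}$ once one notes $\Tr R^2=2\sum x_i^2$, one obtains $\alpha$ of (\ref{lambda_*}). The Mehler formula then gives the eigenfunctions $H_m(\sqrt{2\beta}\,x)e^{-\beta x^2}$ (appropriately scaled) with eigenvalues $\mu_0 q^m$. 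The identities $\mu_0=1$ and $q=\lambda_*$ follow by comparing with (\ref{lambda_*}); this is a routine check of the normalisation $(u_*^2W/\pi\lambda_*)^{1/2}$. Taking tensor products gives (\ref{psi_k}) with $\lambda_{*\bar m}=\lambda_*^{|\bar m|}$.

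\emph{Perturbation from $\mathcal{A}_*$ to $\mathcal{A}$.} Write $\mathcal{A}=\mathcal{A}_*+\delta\mathcal{A}$. By (\ref{A_0}) and (\ref{hat_A}), the difference comes from two sources: the cubic factors $e^{W^{-3/2}\Tr R^3\varphi_0}$, and the factor $\mathcal{Z}(R_1,R_2)$. For the latter I would expand $Z$ by Laplace's method on $U(2)$ near $U=1$, using $J^{1/2}$, and show
\[
\mathcal{Z}=1+W^{-1/2}(\text{linear in }R_1+R_2)+O(W^{-1}\|R\|^2).
\]
The cubic factor and the linear $\mathcal{Z}$-term are odd. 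They therefore map $E_k$ into $E_{k\pm1}\oplus E_{k\pm3}$ and have no diagonal block in the $\Psi_*$ basis. Their first-order contribution to the eigenvalues vanishes. The second-order contribution is (size $W^{-1/2}$ or $W^{-3/2}\cdot\|R\|^3$ in matrix elements)$^2$ divided by the spectral gap $1-\lambda_*\sim W^{-1}$. The even terms at order $W^{-1}\|R\|^2$ and $W^{-3}\|R\|^6$ contribute at first order.

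The main obstacle is getting the bound $W^{-2}$ rather than $W^{-1}$. The difficulty is that $\mathcal{A}_*$ itself has the form $1-O(W^{-1})$, so all corrections must be measured relative to it. I would try to exploit the structure $\mathcal{A}=e^{h}\mathcal{A}_*e^{h}$, where $h=W^{-3/2}\Tr R^3\varphi_0+\log\mathcal{Z}$ (treating $\mathcal{Z}$ after symmetrisation as a multiplicative factor up to $O(W^{-2})$). On the relevant region $\|R\|\le\log W$, the Gaussian $B$ forces $R_1-R_2=O(W^{-1/2})$, so $h(R_1)+h(R_2)$ can be replaced by $2h$ evaluated at the midpoint, up to small errors. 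Matrix elements of $h$ between $\Psi_{*\bar m}$ are controlled by the Hermite recursion, which gives the factor $(|\bar m|+1)$. I would then apply a Schur complement/Feshbach reduction on $E_{|\bar m|}$: project onto $E_{|\bar m|}$, use the gap $\lambda_*^{k}-\lambda_*^{k\pm1}\sim W^{-1}$, and bound the off-diagonal blocks by $CW^{-3/2}$ in operator sense. Their squares over the gap then give $W^{-2}$, and the diagonal block shift is shown to be $O(W^{-2})$ by parity together with $\mathcal{Z}$ normalisation. A min-max argument then yields $\gamma(|\bar m|)$ eigenvalues satisfying (\ref{diff_eig}).

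\emph{The top eigenvector.} For $\Psi_{\bar 0}$, the same Feshbach resolvent expression gives $\Psi_{\bar 0}-\Psi_{*\bar 0}=(\lambda-\bar P\mathcal{A}\bar P)^{-1}\bar P\delta\mathcal{A}\Psi_{*\bar 0}$, where $\bar P$ is the projection onto the complement of $E_0$. The norm is $W\cdot W^{-3/2}=W^{-1/2}$. For the tail bound, components in $E_k$ are generated only through chains of $k/3$ or more perturbation steps, each contributing at most $C(k)^{3/2}W^{-1/2}$. This gives decay roughly $(Ck^{3/2}W^{-1/2})^{k/3}$. Alternatively, the Gaussian decay $e^{-\alpha u_*^2\Tr R^2}$ of $\Psi_{\bar 0}$ combined with $\|R\|\le\log W$ shows the weight beyond degree $L=C\log^2W$ is $e^{-cL}\le N^{-3}$, which proves (\ref{de_Psi.0}).
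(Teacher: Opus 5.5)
Your Mehler diagonalisation of $\mathcal{A}_*$ is correct. The ansatz gives $\beta^2=a^2+2ab$, i.e. $\beta=2\alpha u_*^2$. The ratio is $q=b/(a+b+\beta)=\lambda_*$, because $(1+\alpha/W+u_*^2/W^2)(1-\alpha/W+u_*^2/W^2)=1$. (If you actually do the normalisation check with the prefactor printed in (\ref{A_*}), you get $\mu_0=2^{-1/2}$, so that constant has to be read as $(2u_*^2W/\pi\lambda_*)^{1/2}$; this is cosmetic.)

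The genuine gap is your treatment of $\mathcal{Z}$, which is exactly where the $W^{-2}$ precision is decided. Run your expansion $\mathcal{Z}=1+W^{-1/2}(\text{linear})+O(W^{-1}\|R\|^2)$ through your own bookkeeping:
\begin{itemize}
\item The linear term contributes at second order $(W^{-1/2})^2/W^{-1}=O(1)$; it would move the centre of the Gaussian.
\item The even $W^{-1}\|R\|^2$ term contributes at first order $O((|\bar m|+1)W^{-1})$. This is comparable to the spacing $\lambda_*^k-\lambda_*^{k+1}$; it would change the oscillator frequency, i.e. $\alpha$.
\item For the same reasons $\Psi_{\bar 0}-\Psi_{*\bar 0}$ would not be $O(W^{-1/2})$.
\end{itemize}
Parity does nothing to an even term, and a normalisation of $\mathcal{Z}$ fixes a constant, not an $R$-dependent quadratic form. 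So your later claims (off-diagonal blocks $O(W^{-3/2})$, diagonal shift $O(W^{-2})$) are unsupported.

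What is needed is the estimate (\ref{act_D}) of Lemma \ref{l:Z_0}. It rests on two facts your plan does not contain.
\begin{itemize}
\item \emph{There is no linear term at all.} Write $k=k_*+\rho_1+\rho_2$ as in (\ref{k_*})--(\ref{rho}). The averages $\langle\rho_1\rangle,\langle\rho_2\rangle,\langle\rho_1\rho_2\rangle$ vanish by the symmetries in $\gamma,\sigma,\delta$. The Laplace factor $J^{1/2}(\mathcal{R}_1)J^{1/2}(\mathcal{R}_2)(\Tr S)^{-2}$ is homogeneous of degree $0$ in $(\mathcal{R}_1,\mathcal{R}_2)$, symmetric and conjugation invariant, so it has no first-order term at $u_*I$.
\item \emph{The quadratic terms cancel only weakly.} The $O(W^{-1}\Tr (R^\circ)^2)$ terms from this factor and from $\langle\rho_1^2\rangle$ do not vanish pointwise. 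They are cancelled by the $\langle\rho_2^2\rangle$ term, which is proportional to $\Tr[R_2-R_1,R_1][R_1,R_2-R_1]$. That term becomes a function of $R_1$ only after integrating by parts in $R_2$ against $B(R_1-R_2)$, see (\ref{int_com}).
\end{itemize}
Thus $\mathcal{Z}-1$ is $O(W^{-2})$ only after integration against $\mathcal{A}_0$. Treating $\mathcal{Z}$ as a multiplicative factor evaluated at the midpoint throws away precisely the $R_2-R_1$ dependence that produces the cancellation. Conceptually, $B\mathcal{Z}$ is the $J^{1/2}$-conjugated radial part of the flat heat kernel in $Q$. Its generator differs from the flat one only by terms whose effect per step is $O(W^{-2})$. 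Once this is in hand, the cubic factor is the only remaining perturbation, with $O(W^{-3/2})$ off-diagonal elements, and your Schur-complement scheme is the right mechanism.

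Finally, drop the ``alternative'' tail argument. A pointwise Gaussian bound on $\Psi_{\bar 0}$ does not control its Hermite coefficients beyond degree $L$. The chain argument is the right one, but it needs two inputs: off-diagonal decay of the type (\ref{exp_dec}), which again depends on the $\mathcal{Z}$ analysis, and a bound on $(1-P_L)\mathcal{A}(1-P_L)$ as in Lemma \ref{l:2}.
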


Next we show that $\mathcal{Z}(R_1,R_2)$ in the definition (\ref{A}) of $\mathcal A$ of  can be changed by $1$ with a small correction.
We also compare $\mathcal{Z}(R_1,R_2)$ with ``shifted" $\mathcal{Z}(R_1-\epsilon\mathcal{M},R_2-\epsilon\mathcal{M})$.
\begin{lemma}\label{l:Z_0} 

Given $\mathcal{Z}(R_1,R_2)$ of the form (\ref{cal_Z}) and
 $\Psi(R)\in \mathrm{Lin}\{\Psi_{*\bar k}\}_{|\bar k|\le m}$  ($m\ge 0$), we have
\begin{align}\label{act_D}
&\int \mathcal{A}_0(R_1,R_2)\Big(\mathcal{Z}(R_1,R_2)-1\Big) \Psi(R_2)dR_2=O((m+1)W^{-2}\|\Psi\|),
\end{align}
where   $\mathcal{A}_0(R_1,R_2)$  was defined in  (\ref{A_0}).

In addition, for every fixed $2\times 2$ matrix $\mathcal{M}=\mathcal{M}^*$ and $\epsilon=(W/N)^{1/2}$
\begin{align}\label{diff_Z}
\int\mathcal{A}_0(R_1,R_2)\Big(\mathcal{Z}(R_1,R_2) -\mathcal{Z}(R_1-\epsilon\mathcal{M},R_2-\epsilon\mathcal{M})\Big)\Psi(R_2)dR_2
=O(\epsilon W^{-2}\|\Psi\|).
\end{align}
 
\end{lemma}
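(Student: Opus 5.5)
Our plan is to reduce both statements to an explicit asymptotic computation of the function $Z(R_1,R_2)$ of (\ref{K,Z}) followed by a Gaussian-weighted estimate. We first expand $Z(R_1,R_2)$ via the Laplace method on $U(2)$. The exponent $k_{R_1,R_2}(U)$ is $u_*^2W^2\Tr\big((U-1)S\big)+cc$ with $S=(1+R_1/W^{1/2})(1+R_2/W^{1/2})$. Writing $U=e^{iX/W}$ with $X=X^*$, the exponent becomes $-u_*^2\Tr X^2 \cdot(\text{sym. part of }S)+iu_*^2W\Tr X(S-S^*)+O(W^{-1}|X|^3)$. Here $S-S^*=W^{-1}[R_1,R_2]$, so the linear term is $O(1)$ and can be handled by completing the square. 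Also $\mathcal{Z}$ contains the factors $J^{1/2}$, which contain $(\Tr\mathcal{R})^2\det\mathcal{R}$ evaluated at $u_*(1+R/W^{1/2})$. Combining these, we expect
\[
\mathcal{Z}(R_1,R_2)=1+W^{-1/2}\ell(R_1+R_2)+W^{-1}q(R_1,R_2)+O(W^{-3/2}\|R\|^3),
\]
with $\ell$ linear and $q$ quadratic, and with the normalization $(\pi u_*W)^3$ and $\pi^3$ fixed so that the constant term is exactly $1$.

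The key point is cancellation. The operator $\mathcal{A}_0$ is, up to the $W^{-3/2}\Tr R^3$ factors, the Gaussian kernel $\mathcal{A}_*$. This kernel is concentrated on $|R_1-R_2|\lesssim W^{-1/2}$ and on $|R_i|\lesssim\log W$, and its spectral gap is $1-\lambda_*\sim W^{-1}$. We would argue that the correction terms in $\mathcal{Z}$ depend on $R_1,R_2$ in a way that, after integration against $\mathcal{A}_0$, only produces $O(W^{-2})$. This needs some structure. Since the exact problem should have $\mathcal{Z}$ reflecting the Jacobian symmetry, we expect $\mathcal{Z}(R_1,R_2)-1$ to depend only on the differences $R_1-R_2$ and commutators at leading orders. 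Terms depending on the single-site variable can be absorbed into the definition of $f$ in \cite{SS:band_com_d}. The difference terms are small: $\int \mathcal{A}_*(R_1,R_2)(R_1-R_2)^{k}\,dR_2$ gains $W^{-k/2}$, so first and second order difference terms give $W^{-1/2}\cdot W^{-1/2}\cdot W^{-1}$ after combining with the $W^{-1}$ prefactors. Acting on $\Psi\in\mathrm{Lin}\{\Psi_{*\bar k}\}_{|\bar k|\le m}$, every polynomial weight in $R_2$ costs at most a factor $(m+1)$ via Hermite recurrences. This yields $O((m+1)W^{-2}\|\Psi\|)$, i.e. (\ref{act_D}). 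The tails $|R|>\log W$ are exponentially small by the Gaussian factor.

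For (\ref{diff_Z}) we write $\mathcal{Z}(R_1,R_2)-\mathcal{Z}(R_1-\epsilon\mathcal{M},R_2-\epsilon\mathcal{M})=\epsilon\int_0^1 \big(\mathcal{M}\cdot\nabla_{R_1}+\mathcal{M}\cdot\nabla_{R_2}\big)\mathcal{Z}(R_1-t\epsilon\mathcal{M},R_2-t\epsilon\mathcal{M})\,dt$. The expansion above shows that the derivative of $\mathcal{Z}-1$ along a common shift has the same structure, with one fewer power of $R$ but the same $W$-scaling. Repeating the previous argument then gives the extra factor $\epsilon$.

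The main obstacle is to verify precisely that the $W^{-1/2}$ and $W^{-1}$ terms of $\mathcal{Z}$ cancel, or reduce to difference or commutator terms, when paired with the Gaussian kernel. This requires carefully tracking the Laplace expansion on $U(2)$ to second order, including the Jacobian factor. We would do this computation explicitly in the Pauli coordinates (\ref{expR}).
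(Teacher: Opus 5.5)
\textbf{There is a genuine gap in how you expect the cancellation to happen.} You correctly flag the cancellation as the main obstacle, but the structure you expect is wrong. Your hypothesis is that at order $W^{-1}$ the function $\mathcal{Z}-1$ contains only difference/commutator terms carrying a further $W^{-1}$ prefactor, and that any single-site remainder can be absorbed into $f$. Both parts fail.

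First, $\mathcal{Z}-1$ has a genuine single-site term. At $R_1=R_2=R$, where differences and commutators vanish, your own Laplace computation (anisotropic Gaussian on $U(2)$ times $J$) gives $\mathcal{Z}(R,R)=\frac{(a+b)^2}{2(a^2+b^2)}(1+O(W^{-2}))=1-\frac{\Tr (R^\circ)^2}{2W}+O(W^{-3/2})$. Here $a,b$ are the eigenvalues of $1+R/W^{1/2}$. For general $R_1,R_2$ the non-commutator part is $1-\Tr R_1^\circ R_2^\circ/(2W)+O(W^{-3/2})$. Absorbing this into $f$ is not allowed: $f$, $\mathcal{A}_0$ and $\mathcal{Z}$ are fixed in the statement. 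An uncancelled term $W^{-1}c(R_1)$ would make the left side of (\ref{act_D}) of order $W^{-1}\|\Psi\|$ already for $\Psi=\Psi_{*\bar 0}$.

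Second, the commutator term is not $O(W^{-2})$ after integration. The term produced by completing the square is $\frac{u_*^2}{4}\Tr[R_2-R_1,R_1][R_1,R_2-R_1]$, essentially $\frac12\langle\rho_2^2\rangle$ in the paper's notation. It has an $O(1)$ coefficient, not $W^{-1}$. Integrating it against the Gaussian kernel in $R_2$ gives exactly the single-site term $+\frac{\Tr(R_1^\circ)^2}{2W}\Psi(R_1)$; this is (\ref{int_com}). So your power counting discards an $O(W^{-1})$ contribution, and the lemma holds precisely because it cancels the single-site term above. This cancellation exists only after integration in $R_2$, not pointwise, and it is the missing idea. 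It is how the paper argues: split $k=k_*+\rho_1+\rho_2$, expand in $\rho_1,\rho_2$, integrate over $U$, and integrate the $\rho_2^2$ term by parts in $R_2$.

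\textbf{Two consequences for the rest of your plan.}

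Your expansion stops at $W^{-1}q+O(W^{-3/2}\|R\|^3)$, which can give at best $O(W^{-3/2})$. There are also nonzero single-site terms at order $W^{-3/2}$. For example, at $R_1=R_2$ one gets $\frac{(r_1-r_2)^2(r_1+r_2)}{4W^{3/2}}$, with $r_i$ the eigenvalues of $R$. These cancel only against the $W^{-1/2}$-correction to the averaged commutator term, so the expansion must be carried to order $W^{-2}$.

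(\ref{diff_Z}) inherits the same issue. A common shift changes $[R_1,R_2]$ by $\epsilon[R_2-R_1,\mathcal{M}]$, so the averaged commutator term changes by $-\epsilon\Tr R_1^\circ\mathcal{M}^\circ/W$. This must cancel the change $+\epsilon\Tr R_1^\circ\mathcal{M}^\circ/W$ of the single-site term. It is not ``the same structure with one fewer power of $R$''.

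\textbf{How to make your ``Jacobian symmetry'' intuition rigorous.} The correct statement is integrated, not pointwise. Since $\int_{H_2}e^{-W^2\Tr(Q_1-Q_2)(Q_1-Q_2)^*}dQ_2$ does not depend on $Q_1$, the cylinder change of variables shows that $\int B(R_1-R_2)\big(J(\mathcal{R}_2)/J(\mathcal{R}_1)\big)^{1/2}\mathcal{Z}(R_1,R_2)\,dR_2$ is independent of $R_1$, up to negligible truncation errors. The same holds for shifted arguments. This removes the $R_1$-dependence of $\int B\,\mathcal{Z}\,dR_2$ up to $O(W^{-2})$. After that, the terms involving $\Psi(R_2)-\Psi(R_1)$ can be handled by your Taylor/Hermite argument.
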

Now we are going to study the ``unitary part" $K_{R_1,R_2}$ of  operator $\mathcal{K}_\zeta$ (see (\ref{K,Z})), in particular, we need to know to what extend the operator
depend on $R_1,R_2$, if $|R_1-R_2|\sim W^{-1/2}$ (this condition is guaranteed by the factor $e^{-Wu_*^2\Tr (R_1-R_2)^2}$ in (\ref{A})).

Since the kernel of $K_{R_1,R_2}(U_1^*U_2)$  depends on
the "matrix difference" $U_1^*U_2$ only, we conclude that for any $\ell$ the operator  is reduced by the $\ell$th  space of the irreducible representation of $SU(2)$  
which is a linear span of  the coefficients $\{t_{pk}^{(\ell)}(U)\}_{p,k=-\ell}^\ell$ of
 the $\ell$th irreducible representation $T^{(\ell)}(U)$ of $SU(2)$. Hence, eigenvectors of $K_{R_1,R_2}(U_1^*U_2)$
are  linear combinations  of $\{t_{pk}^{(\ell)}(U)\}_{p,k=-\ell}^\ell$, and we study the action of $K_{R_1,R_2}$ on these functions.

Consider some $M\gg 1$ and set
\begin{align}\label{E^ell}
\mathcal{E}^{(\ell)}=\hbox{Lin}\{t^{(\ell)}_{0p}(U)\}_{p=-\ell}^\ell,\quad
\mathcal{E}_{M}=
\cup_{\ell=0}^{M}\mathcal{E}^{(\ell)},\quad \mathcal{E}=\cup_{\ell} \mathcal{E}^{(\ell)}.
\end{align}
We will denote also by $\mathcal{E}_{M}$ the orthogonal projection on $\mathcal{E}_{M}$, and by $\mathcal{E}^{(\ell)}$ the orthogonal projection
on $\mathcal{E}^{(\ell)}$. Notice also that since the function $g$ of  (\ref{f})  and $\mathcal{K}_\zeta$ depend on $U$ only  via $L_{U^*}$,  $\mathcal{K}_\zeta$  can be considered as an operator acting in $\mathcal H_0\otimes\mathcal{E}$ (recall that $\mathcal{H}_0$
 means the $L^2$-space on all positive $2\times 2$ matrices). 
 \begin{lemma}\label{l:KPsi}
Given operator $K_{R_1,R_2}$ with a kernel (\ref{K,Z}), we have for $\|R_1-R_2\|\le W^{-1/2}\log W$ and $\ell\le CW^{3/4}\log W$ with any fixed $C$:
\begin{align}\label{Kt.0}
K_{R_1,R_2}t^{(\ell)}_{0k}=&\tilde\lambda_{\ell}t^{(\ell)}_{0k}
+b^{(\ell)}_{k+1}t^{(\ell)}_{0k+1}+b^{(\ell)}_{k}t^{(\ell)}_{0k-1}
+O(\ell^2\log^2W/W^3),\\
\tilde\lambda_{\ell}=&\lambda_\ell+O(\ell^2/W^2)\big(O(R_1/W^{1/2})+O(R_2/W^{1/2})\big),\notag\\
b^{(\ell)}_{k}=&d^{(\ell)}_{k}(\ell/W) [R_2-R_1,R_1]_{12}+O(\ell W^{-5/2}\log^3 W),
\notag
\end{align}
where $d^{(\ell)}_{k}$ are some bounded constants which are not important for us, and
\begin{equation}\label{lam_l}
\lambda_\ell=1-\ell(\ell+1)/8(u_*W)^{2}.
\end{equation}
Moreover, for any function   $\Psi_h(R,U)=\Psi(R)h(U)$
with  $\Psi(R)\in \mathcal {H}_L$ (see (\ref{P_L})) and  \\
  $h\in\mathcal{E}^{(\ell)},$ we have
\begin{align}
\Big\|(\mathcal{K}_0\Psi_h)(U_1,R_1)-\lambda_{\ell}h(U_1)( \mathcal{A}\Psi)(R_1)\Big\|
\le C\|\Psi_h\|(W^{-1/2}\log W (\ell/W)^2+L \ell/W^2),
\label{KPsi.1}\end{align}
where $\mathcal{K}_0=\mathcal{K}_\zeta\Big|_{\zeta=0}$ and $\mathcal{A}$ was defined in (\ref{hat_A})
 If $\Psi_{0,h}(R,U)=\Psi_{ 0}(R)h(U)$, where $\Psi_0$ is an eigenvector of $\mathcal{A}$ corresponding to $\lambda_{\max}$
  and $h\in\mathcal{E}^{(\ell)},\,\ell\le cW^{3/4},\|h\|=1$, then
\begin{align}
\Big\|(\mathcal{K}_0\Psi_{0,h})-\lambda_\ell\lambda_{\max} \Psi_{0,h}\Big\|
= O(W^{-1/2}(\ell/W)^2).
\label{KPsi0}\end{align}

In addition, for every fixed $2\times 2$ matrix $\mathcal{M}=\mathcal{M}^*$, $\epsilon=(W/N)^{1/2}$, 
$h\in\mathcal{E}_{M},\,M\le W^{1/2}/L$, and $\Psi \in \mathcal{H}_L$
\begin{align}\label{diff_K_eps}
\int\mathcal{A}(R_1,R_2) \Big((K_{R_1,R_2}h)(U_1)- (K_{R_1-\epsilon\mathcal{M},R_2-\epsilon\mathcal{M}}h)(U_1)\Big)&\Psi(R_2)dR_2
=O(\epsilon W^{-2}ML\|\Psi\|),
\end{align}
and 
\begin{align}
\label{b_K_R}
\mathcal{E}^{(\ell)}K_{R_1,R_2}\le
1-CW^{-1/2}\log^2W, \quad \ell>W^{3/4}\log W
\end{align}

\end{lemma}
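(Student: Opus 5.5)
The plan is to study $K_{R_1,R_2}$ as a heat-kernel-type operator on $SU(2)$ (more precisely on $U(2)$, with the phase part handled the same way). The idea is to expand it near the identity, using that $u_*^2W^2$ is a large parameter, and then to treat the $R$-dependence as a perturbation of the symmetric case $R_1=R_2$.

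\emph{Symmetric case.} For $R_1=R_2=R$ write $(1+R/W^{1/2})^2=:S$. The function $k_{R,R}(U)$ is $u_*^2W^2\,\Tr\big((U-1)S\big)+cc$.
\begin{itemize}
\item If $S=I$, then $k$ is a class function, $e^{k}$ is central, and by Schur's lemma it acts on each $t^{(\ell)}_{pk}$ by a scalar.
\item Put $U=e^{iX/(u_*W)}$ and do a Laplace expansion of $\int e^{k}\chi_\ell$. The normalization $Z$ cancels the leading Gaussian mass, and the next order gives the Casimir eigenvalue $-\ell(\ell+1)$ divided by $8(u_*W)^2$. This yields $\lambda_\ell$ of (\ref{lam_l}).
\item The error terms are $O(\ell^4/W^4)$. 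This is where the restriction $\ell\le CW^{3/4}\log W$ enters: in this range the Gaussian localization at scale $W^{-1}$ still dominates the oscillation of $t^{(\ell)}$, which has scale $\ell^{-1}$.
\item If $S\neq I$, write $S=I+W^{-1/2}(2R)+\dots$. The non-central part, linear in $R$, produces first-order generators acting on $t^{(\ell)}_{0k}$. Averaged against the Gaussian, these contribute only at relative order $\ell^2/W^2\cdot O(R/W^{1/2})$, which gives the correction in $\tilde\lambda_\ell$.
\end{itemize}

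\emph{Asymmetric case.} For $R_1\ne R_2$, the antisymmetric part of $(1+R_1/W^{1/2})(1+R_2/W^{1/2})$ enters through the commutator $[R_2-R_1,R_1]/W$. Its linear term in $X$ acts as a Lie-algebra element on $t^{(\ell)}_{0k}$. Representation theory of $SU(2)$ (raising and lowering operators $J_\pm$) shows that it shifts $k\to k\pm1$, with coefficient of size $\ell/W$ times the off-diagonal entry $[\cdot,\cdot]_{12}$. This gives $b_k^{(\ell)}$. Higher-order terms are bounded using $\|R_1-R_2\|\le W^{-1/2}\log W$, giving the stated remainders.

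\emph{Consequences (\ref{KPsi.1}) and (\ref{KPsi0}).} Combine the previous step with Lemma \ref{l:A} and Lemma \ref{l:Z_0}.
\begin{itemize}
\item Multiply by $\mathcal{A}(R_1,R_2)\Psi(R_2)$ and integrate over $R_2$.
\item The $\tilde\lambda_\ell-\lambda_\ell$ term contributes $W^{-1/2}(\ell/W)^2\log W$, since $\|R\|\le\log W$.
\item The $b$-terms contain the factor $R_2-R_1$. Against the Gaussian $B$, which has width $W^{-1/2}$, and for Hermite functions in $\mathcal{H}_L$, they give $O(L\ell/W^2)$.
\item For $\Psi_0$, the leading $b$-contribution is odd in $R_2-R_1$ around the quadratic approximation. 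So the linear term cancels, leaving $O(W^{-1/2}(\ell/W)^2)$.
\end{itemize}

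\emph{Shift estimate (\ref{diff_K_eps}).} Differentiate $k_{R_1,R_2}$ in the shift $\epsilon\mathcal{M}$. The derivative is of order $\epsilon W^{3/2}\,\Tr((U-1)\mathcal{M})$. Acting on $h\in\mathcal{E}_M$ this is $O(\epsilon M/W^{1/2})$ per unit of $(U-1)$ scaled by $1/W$. Integrating against $\mathcal{A}$ in the same way as in Lemma \ref{l:Z_0} gives $\epsilon W^{-2}ML$.

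\emph{Gap estimate (\ref{b_K_R}).} For large $\ell$ I would avoid expansions. Instead I would use positivity and the estimate $|\int e^{k}\chi_\ell|/\int e^{k}\le 1-c\min(1,\ell^2/W^2)$, which comes from comparison with the heat kernel. Together with the deviation of $R$, this gives $1-CW^{-1/2}\log^2W$.

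\emph{Main obstacle.} I expect the hardest part to be the uniform control of the remainders in the Laplace expansion for $\ell$ up to $W^{3/4}\log W$ with non-commuting $R_1,R_2$. Specifically, one must show that the third-order terms of the Baker–Campbell–Hausdorff type expansion contribute only $O(\ell^2\log^2W/W^3)$. My first attempt would be to write $e^{k}$ in exponential coordinates and use explicit Wigner $d$-function bounds on $t^{(\ell)}$.
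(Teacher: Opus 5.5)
Your architecture matches the paper's. You separate the class-function part $k_*$ of $k_{R_1,R_2}$ (see (\ref{k_*})), which by Schur's lemma gives $\lambda_\ell$. You treat the traceless symmetric part $\rho_1$ and the commutator part $\rho_2$ of (\ref{rho}) as perturbations of size $W^{-1/2}\log W$ near the identity. You read off $b^{(\ell)}_k$ as the $k\to k\pm1$ action of $\rho_2$. You get (\ref{b_K_R}) from the gap of the central kernel plus $\|K_{*R_1,R_2}-K_{R_1,R_2}\|\le CW^{-1/2}\log^2W$. Using exponential coordinates and ladder operators instead of the Euler angles (\ref{U}) and the Legendre asymptotics (\ref{as_Leg.0})--(\ref{as_Leg.1}) is only different bookkeeping.

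\textbf{The gap is in (\ref{diff_K_eps}).} You estimate the effect of the shift as a first-order term $\epsilon W^{3/2}\Tr((U-1)\mathcal{M})$ acting on $h$, of size $O(\epsilon M/W^{1/2})$. You then assert that integrating against $\mathcal{A}$ brings this down to $O(\epsilon W^{-2}ML)$. But a term of that size carrying no factor $R_2-R_1$ is not reduced by the $R_2$-integration at all, so the chain does not close. Note also that the hypothesis $M\le W^{1/2}/L$ never enters your argument.

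The missing observation is this. A Hermitian perturbation $\delta S$ of $(1+R_1/W^{1/2})(1+R_2/W^{1/2})$ enters $k$ only through $u_*^2W^2\Tr\big((U+U^*-2)\,\delta S\big)$, because of the $+cc$. The only part of this that is linear in the $SU(2)$ coordinate is proportional to the phase: $\frac12(U+U^*)^\circ=-\sin\gamma\,\tilde U$. It therefore vanishes at first order, since $t^{(\ell)}_{0k}$ does not depend on $\gamma$ and the weight is even in $\gamma$. This is also why, as your symmetric-case bullet leaves implicit, $S^\circ$ does not produce a first-order term of size $\ell R/W^{1/2}$.

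Consequently the Hermitian part $\approx-2\epsilon W^{-1/2}\mathcal{M}$ of the shift only deforms the Gaussian. It changes the action on $\mathcal{E}^{(\ell)}$ by $O(\epsilon W^{-1/2}(\ell/W)^2)$, which is $O(\epsilon W^{-2}ML)$ for $\ell\le M\le W^{1/2}/L$. The only genuinely first-order piece is the change $-\epsilon[\mathcal{M},R_2-R_1]$ of the commutator in $\rho_2$, i.e.\ a change of $b^{(\ell)}_k$ by $O(\epsilon(\ell/W)\|R_2-R_1\|)$. It is this factor $R_2-R_1$ that integration against $\mathcal{A}(R_1,R_2)\Psi(R_2)$, $\Psi\in\mathcal{H}_L$, turns into $O(L/W)$. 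Equivalently, read (\ref{diff_K_eps}) off from (\ref{Kt.0}) by inserting the shifted arguments into $\tilde\lambda_\ell$ and $b^{(\ell)}_k$.

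\textbf{Your justification of (\ref{KPsi0}) also needs repair.} The linear $b$-term does not cancel by oddness in $R_2-R_1$. The weight $\mathcal{A}(R_1,R_2)\Psi_0(R_2)$ is centred at $R_1(1-O(W^{-1}))$, not at $R_1$. That drift alone would leave a contribution of order $\ell/W^2$, which exceeds $W^{-1/2}(\ell/W)^2$ for $\ell<W^{1/2}$.

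The cancellation comes from unitary invariance. We have $\mathcal{A}(VR_1V^*,VR_2V^*)=\mathcal{A}(R_1,R_2)$ and $\Psi_0$ is conjugation invariant. Hence $\int\mathcal{A}(R_1,R_2)(R_2-R_1)\Psi_0(R_2)\,dR_2$ is a covariant matrix function of $R_1$, of the form $a_0(R_1)I+a_1(R_1)R_1$. Its commutator with $R_1$ therefore vanishes identically; this is the same averaging over $V$ as in (\ref{int_2}).

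Finally, the threshold $W^{3/4}$ is not where localization stops dominating the oscillation of $t^{(\ell)}$; that holds for all $\ell\ll W$. It is where the Casimir gap $(\ell/W)^2$ overtakes the size $W^{-1/2}\log^2W$ of $\rho_1,\rho_2$, which is what the proof of (\ref{b_K_R}) uses.
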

\textit{Sketch of the proof of Lemmas \ref{l:Z_0} and \ref{l:KPsi}}

We transform $k_{R_1,R_2}(U)$ of (\ref{K,Z}) as
\begin{align*}\notag
k_{R_1,R_2}(U)=
k_{*R_1,R_2}(U)+\rho_1+\rho_2,\notag
\end{align*}
where \begin{align}\label{k_*}
k_{*R_1,R_2}(U)=&(u_*W)^2\Tr S\,\Tr((U+U^*)/2-1)),\quad S=\frac{1}{2}\{1+R_1/W^{1/2},1+R_2/W^{1/2}\}\\
\label{rho}
\rho_1=&2u_*^2W^2\Tr \frac{(U+U^*)^\circ}{2} S^\circ,
\qquad\qquad\rho_2=u_*^2W\Tr \frac{(U-U^*)^\circ}{2}[R_1 , R_2].
_*\end{align}
If $U$ is parametrised as
 \begin{align}\label{U}
 U=&\mathcal{T}(\phi)\left(
\begin{array}{cc}\cos(\theta/2)&i\sin(\theta/2)\\
i\sin(\theta/2)&\cos(\theta/2)\end{array}\right)\mathcal{T}(\psi)e^{i\gamma}
\end{align}
where  $\mathcal{T}(\phi)=\mathrm{diag}\{e^{i\phi/2},e^{-i\phi/2}\}$, $\gamma\in [-\pi/2,\pi/2]$, $\sigma,\delta\in[-\pi,\pi]$, $\theta\in[0,\pi]$, 
then  
\begin{align}\label{k_*.1}
k_{*R_1,R_2}(U)=-2(u_*W)^2\Tr S(1-\cos\gamma\cos(\theta/2)\cos\sigma),\quad \sigma=\frac{1}{2}(\varphi+\psi),\, \delta =\frac{1}{2}(\varphi-\psi),
\end{align}

Hence, the essential contribution to the integrals comes only from the domain where
\[
\sin\gamma=O(W^{-1})\quad \sin\sigma=O(W^{-1})\quad \sin(\theta/2)=O(W^{-1})\quad 
\]
Moreover,

\begin{align}\notag
&\frac{1}{2}(U+U^*)^\circ =-\sin\gamma\,\tilde U,\quad \frac{1}{2}( U- U^*)^\circ=
i\cos\gamma\, \tilde U,\\
&\tilde U=\left(
\begin{array}{cc}\cos(\theta /2)\sin\sigma&e^{i\delta }\sin(\theta/2) \\e^{-i\delta }\sin(\theta/2)
&-\cos(\theta/2)\sin\sigma\end{array}\right).\, 
\notag
\end{align}
Therefore, using these relations and taking into account that $S^\circ\sim W^{-1/2}$ and\\ $[R_1,R_2]=[R_1-R_2,R_2]\sim W^{-1/2}$, we obtain
\begin{align}
&|\rho_1|\le W^{-1/2}\log W,\quad |\rho_2|\le W^{-1/2}\log W,
\label{b_rho}\end{align}
and, thus, we can expand $\exp\{k_{R_1,R_2}(U)\}$ with respect to $\rho_1,\rho_2$. The relations (\ref{act_D}) can be obtained by the
expansion with respect to $\rho_1,\rho_2$ up to the order $O(W^{-2})$ and integrating them with respect to $U$ (which means with respect to
 $\theta,\gamma,\sigma,\delta$). Then  one need to take into account  that the term which appears from $\rho_2^2$ after integration with respect to $U$
 can be integrated by parts with respect to $R_2$. Since $\mathcal{A}(R_1,R_2)$ has the form (\ref{A}), we have
\begin{align}\label{int_com}
&\int \mathcal{A}(R_1,R_2)\Tr[R_2-R_1,R_1][R_1,R_2-R_1]\Psi_{\bar k}(R_2)dR_2\\
&\quad =\frac{2\Tr (R_1^\circ)^2}{u_*^2W}\Psi_{\bar k}(R_1)+O(W^{-2}\|\Psi_{\bar k}''\|)+O(W^{-5/2}).
\notag\end{align}
For the detailed computations see Lemma 4.1 of \cite{SS:band_com_d}.

 Now we are going to study the action of $K_{R_1,R_2}$.
 If $U$ is parametrizes as (\ref{U}),
then the coefficients $\{t_{pk}^{(\ell)}(U)\}_{p,k=-\ell}^\ell$ of
 the $\ell$th irreducible representation $T^{(\ell)}(U)$ of $SU(2)$ can be represented as (see \cite{Vil:68})
\begin{align}\label{as_P}
&t_{sk}^{(\ell)}(\tilde U)=P_{sk}^{(\ell)}(\cos\theta) e^{i(s\varphi+k\psi)},
\end{align}
where $P_{sk}^{(\ell)}(\cos\theta)$ is the associated Legendre polynomial.

To prove Lemma \ref{l:KPsi}, we notice first that
an application of $K_{R_1,R_2}$ of (\ref{K,Z})  to $t_{0k}^{(\ell)}$ and changing the integration variable $U_2\to U_1U^*$, yields
\begin{align}\label{Kt}
&({K}_{R_1,R_2}t_{0k}^{(\ell)},t_{0k'}^{(\ell)})=
Z^{-1}\int\exp\{k_{R_1,R_2}(U_2^*U_1)\}t_{0p}^{(\ell)}(U_2)dU_2
=\sum_{s}t_{0s}^{(\ell)}(U_1)\mathcal F_{sk}^{(\ell)}({R_1,R_2}),
\end{align}
 where
\begin{align}\notag
&\mathcal{F}_{sk}^{(\ell)}(R_1,R_2)=Z^{-1}\int \exp\{k_{R_1,R_2}( U)\}t_{sk}^{(\ell)}( U^*)dU.
\end{align}
Then we analyse
\begin{align}\label{F_ks.1}
\mathcal{F}_{sk}^{(\ell)}(R_1,R_2)=&\frac{\left\langle t_{sk}^{(\ell)}(\tilde U^*)(1+\sum (\rho_1+\rho_2)^m/m!)\right\rangle}
{\left\langle 1+\sum  (\rho_1+\rho_2)^m/m!\right\rangle},
\end{align}
where $\rho_1, \rho_2$ are defined in (\ref{rho}), and  for any $f(U)$ we denote
\begin{align}\label{<f>}
\left\langle f\right\rangle=&Z_0^{-1}\int dU f(U)\exp\{-2u_*^2W^2\Tr S(1-\cos(\theta/2)\cos\sigma\cos\gamma)\},\\
Z_0=&\int dU \exp\{-2u_*^2W^2\Tr S(1-\cos(\theta/2)\cos\sigma\cos\gamma)\}\notag\\=&(2\pi u_*^2W^2\Tr S )^{-2}(1+O(W^{-2})).
\label{Z_0}\end{align}

To analyse $\mathcal{F}_{sk}^{(\ell)}(R_1,R_2)$, we use the representation (\ref{as_P})
and the following  asymptotic relations (see \cite{Vil:68})
\begin{align}\label{as_Leg.0} 
P_{k+1,k}^{(\ell)}(\cos\theta)=&i(1+(k+1)/\ell)^{1/2}(1-k/\ell)^{1/2}\ell \sin(\theta/2)(1+O(\ell \sin^2(\theta/2))\\
|P_{k+q,k}^{(\ell)}(\cos\theta)|\le& (\kappa\ell\sin(\theta/2))^q,\quad q\ge 2\notag\\
P^{(\ell)}_{00}(\cos\theta)=&1-\ell(\ell+1)\sin^2(\theta/2)+O((\ell\sin(\theta/2))^3),
\label{as_Leg.1}\end{align}
where $\kappa>0$ is some fixed constant. 

In addition, for any $\ell>W^{3/4}\log^2W$ , if $\|R_1-R_2\|\le CW^{-1}\log W$, then
\begin{align}\label{as_Leg} 
\Big|\left\langle t_{00}^{(\ell)}(U)\right\rangle\Big|\le 1- CW^{-1/2}\log^4 W/2.
\end{align}
These relations were proved in Proposition 4.1 of \cite{SS:band_com_d}

Combining (\ref{as_Leg.0})-(\ref{as_Leg.1}) with
 (\ref{k_*.1})  we get, in particular, 
\begin{align*}
\left\langle \sin^{2m}(\theta/2)\right\rangle\le C/W^{2m}
\end{align*}
which gives after some computations (\ref{Kt.0}). 

Relation (\ref{KPsi0}) and (\ref{KPsi.1}) follow from  (\ref{Kt.0}).
To prove (\ref{b_K_R}), let us observe  that in view of  the bounds (\ref{b_rho})
\begin{align}\notag
&\|K_{*R_1,R_2}-K_{R_1,R_2}\|\le CW^{-1/2}\log^2 W\\
\Rightarrow& \mathcal{E}^{(\ell)}K_{R_1,R_2}\mathcal{E}^{(\ell)}\le
\mathcal{E}^{(\ell)}K_{*R_1,R_2}\mathcal{E}^{(\ell)}+CW^{-1/2}\log^2 W.
\label{diff_1}\end{align}
But in view of (\ref{as_Leg.1}) for $\ell>W^{3/4}\log^2W$, we have that
\[
\mathcal{E}^{(\ell)}K_{*R_1,R_2}\mathcal{E}^{(\ell)}\le 1-C'(W^{-1/2}\log^4W)^2+CW^{-1/2}\log^2 W\le  1-C'(W^{-1/2}\log^4W)^2/2.
\]
$\square$

Denote by $P_L$ the orthogonal projection in $\mathcal{H}_0=L_2(\mathcal{H}_{2,+})$  on the space $\mathcal{H}_L$
\begin{align}\notag
\mathcal{H}_L=&\mathrm{Lin}\{\Psi_{*\bar k}(R)\}_{|\bar k|\le L},\quad L=C\log^2W,  \\
\mathcal{P}_L=&P_L\otimes I\Big|_{L_2(U(2))}.
\label{P_L}\end{align}
In what follows we need  to replace $\mathcal{K}_\zeta\to \mathcal{P}_L\mathcal{K}_\zeta\mathcal{P}_L$. For this aim we will use
the following lemma, based, in particular, on (\ref{b_K_R}).
\begin{lemma}\label{l:2} For $L>C\log^2 W$ with sufficiently big $C$
\begin{align}\label{l2.1}
\|(I-\mathcal{P}_L)\mathcal{K}_\zeta(I-\mathcal{P}_L)\|\le (1-C_2L/W).
\end{align}
\end{lemma}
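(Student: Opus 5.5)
Because $(I-\mathcal{P}_L)$ acts only on the $R$-variable, I will work with the factorized kernel (\ref{cal_K}), $\mathcal{K}_\zeta=\mathcal{A}_\zeta(R_1,U_1,R_2,U_2)K_{R_1,R_2}(U_2^*U_1)$, and compare it with a simpler operator whose restriction to $\mathcal{H}_L^\perp$ is controlled by the Hermite spectrum of Lemma \ref{l:A}. The reduction is to write $\mathcal{K}_\zeta=\mathcal{K}_0+(\mathcal{K}_\zeta-\mathcal{K}_0)$. In the definition (\ref{A_zeta}) of $F_\zeta$, the $\zeta$-dependence enters only through the shift $R\to R-\epsilon\mathcal{M}(U)$ with $\epsilon\sim W^{-1/2}$, the term $\nu/N$, and the $\epsilon W^{-3/2}$ part of $\tilde f$. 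I will therefore use the norm of the $\zeta$-correction as an error term and require it to be at most $o(L/W)$.

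The main step is to bound $(I-\mathcal{P}_L)\mathcal{K}_0(I-\mathcal{P}_L)$. Since $K_{R_1,R_2}$ is a Markov-type kernel in $U$ (normalized by $Z$, with $\mathcal{Z}$ absorbed into $\mathcal{A}$), I will apply the Schur/Cauchy–Schwarz bound in the $U$ variable for each fixed $R_1,R_2$. Positivity of $K_{R_1,R_2}$ and $\int K_{R_1,R_2}(U)\,dU=1$ give $\|K_{R_1,R_2}\|_{L_2(U(2))}\le 1$. This yields the quadratic-form estimate
\[
|(\mathcal{K}_0\Phi,\Phi)|\le\int \mathcal{A}(R_1,R_2)\,\|\Phi(R_1,\cdot)\|\,\|\Phi(R_2,\cdot)\|\,dR_1dR_2=(\mathcal{A}\phi,\phi),\qquad \phi(R)=\|\Phi(R,\cdot)\|_{L_2(U)} .
\]
The obstruction is that $\phi$ need not lie in $\mathcal{H}_L^\perp$ even when $\Phi$ does. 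To get around this I will expand $\Phi=\sum_{\bar k}\Psi_{*\bar k}(R)h_{\bar k}(U)$ with $|\bar k|>L$ and treat $\mathcal{A}$ as nearly diagonal in the Hermite basis. Lemma \ref{l:Z_0} and the form (\ref{A_0}) show that $\mathcal{A}=\mathcal{A}_*+$ (terms from $W^{-3/2}\Tr R^3\varphi_0$ and $\mathcal{Z}-1$). These corrections are not small in norm for large $|R|$, but on $\Omega_W$, where $\|R\|\le \log W$, the cubic term is $O(W^{-3/2}\log^3W)$. Hence $\mathcal{A}=\mathcal{A}_*+O(W^{-3/2}\log^3W)$ as a bounded perturbation.

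Next, I will use that $K_{R_1,R_2}$ depends on $R_1,R_2$ only through $S$ and the commutator $[R_1,R_2]$, via (\ref{rho}) and (\ref{b_rho}). Replacing $K_{R_1,R_2}$ by $K_{*}$ with $S=1$ costs $O(W^{-1/2}\log^2W)$, which is too big to compare with $L/W$. The remedy is to use the bound $\|K\|\le1$ only in the operator sense on $U$. After tensoring, this gives
\[
\|(I-\mathcal{P}_L)\mathcal{K}_0(I-\mathcal{P}_L)\|\le \|(I-P_L)|\mathcal{A}|(I-P_L)\|+\text{(mixing)} .
\]
Here the mixing comes from the $R$-dependence of $K$ coupling $\mathcal{H}_L$ and $\mathcal{H}_L^\perp$. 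It is harmless because both sides are sandwiched by $I-\mathcal{P}_L$: the key point is that $\mathcal{A}(R_1,R_2)K_{R_1,R_2}$ is a positive kernel, so $\mathcal{K}_0\le \mathcal{A}\otimes I$ in the form sense on nonnegative-type arguments. Lemma \ref{l:A} then gives $\|(I-P_L)\mathcal{A}_*(I-P_L)\|=\lambda_*^{L+1}\le 1-\alpha L/W+O(L^2/W^2)$, since $\lambda_*=1-\alpha/W+O(W^{-2})$. Adding the perturbation $O(W^{-3/2}\log^3W)=o(L/W)$ yields $1-C_2L/W$ once $C$ in $L=C\log^2W$ is large. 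Where the comparison with $\mathcal{A}\otimes I$ fails, (\ref{b_K_R}) for large $\ell$ gives the extra gap $W^{-1/2}\log^2W\gg L/W$.

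Finally, for the $\zeta$-correction I will use the shift identities (\ref{diff_Z}) and (\ref{diff_K_eps}), together with the change of variables $R\to R-\epsilon\mathcal{M}(U)$. Under this change $F_\zeta$ becomes $F_0$ up to factors $e^{\nu/N}(1+O(\epsilon W^{-3/2}\log^2W))$, so $\mathcal{K}_\zeta$ is unitarily close to $\mathcal{K}_0$ up to $O(N^{-1}+W^{-2}\log^2 W)=o(L/W)$. The shift moves $\mathcal{H}_L^\perp$ only by $O(\epsilon\sqrt L)$ in the Hermite index, which can be absorbed by replacing $L$ with $L/2$. The main obstacle I expect is the domination step $\mathcal{K}_0\lesssim\mathcal{A}\otimes I$ restricted to $\mathcal{H}_L^\perp$, which requires controlling the $R$-dependence of $K_{R_1,R_2}$ without losing the $L/W$ gap.
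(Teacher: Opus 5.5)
There is a genuine gap, and it sits in your main step, the bound on $(I-\mathcal{P}_L)\mathcal{K}_0(I-\mathcal{P}_L)$. The mechanism you propose cannot supply it.

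Your estimate $|(\mathcal{K}_0\Phi,\Phi)|\le(\mathcal{A}\phi,\phi)$, with $\phi=\|\Phi(R,\cdot)\|$, throws away exactly the oscillation in $R$ that produces the $L/W$ gap. For example, let $h_1,h_2$ be real orthonormal functions in some $\mathcal{E}^{(\ell)}$ with $\ell\ge 1$ fixed, and set
\[
\Phi(R,U)=\Psi_{*\bar 0}(R)\big(\cos\langle p,x\rangle\, h_1(U)+\sin\langle p,x\rangle\, h_2(U)\big),\qquad \langle p,x\rangle=\textstyle\sum_i p_ix_i ,
\]
with $L\ll|p|^2\ll W$. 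Up to a negligible error, $\Phi$ lies in the range of $I-\mathcal{P}_L$, because the Hermite levels of $\Psi_{*\bar 0}e^{\pm i\langle p,x\rangle}$ are Poisson-distributed with mean $\asymp|p|^2$. Yet $\phi=\Psi_{*\bar 0}$ exactly, so your bound returns $(\mathcal{A}\Psi_{*\bar 0},\Psi_{*\bar 0})=1+o(L/W)$, i.e. no gap at all. The actual decrease of the quadratic form comes from $B(R_1-R_2)$ penalizing the phase $\langle p,x\rangle$, and the modulus erases that phase.

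For the same reason, ``nonnegative-type arguments'' is not a usable notion. Every element of $(I-\mathcal{P}_L)\mathcal{H}$ changes sign in $R$, and pointwise positivity of the kernel $\mathcal{A}(R_1,R_2)K_{R_1,R_2}$ gives nothing beyond the modulus bound. The form inequality $\mathcal{K}_0\le\mathcal{A}\otimes I$ that you actually need would follow from $\mathcal{A}\ge 0$ and $\|K\|\le 1$ only if $K$ did not depend on $(R_1,R_2)$, since then $\mathcal{A}\otimes I-\mathcal{K}_0=\mathcal{A}\otimes(I-K)\ge 0$. The real $R$-dependence (through $\Tr S$, $\rho_1$, $\rho_2$) has norm up to $W^{-1/2}\log^2W\gg L/W$ by (\ref{diff_1}), as you noticed yourself. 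Sandwiching with $I-\mathcal{P}_L$ does nothing to make this ``mixing'' small. So the step you call the main obstacle is the whole content of the lemma, and it is left unproved.

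What is needed is a block-by-block argument that keeps the $R$-structure wherever the $U$-part gives no gap. Since $\mathcal{K}_0$ is reduced by each $\mathcal{H}_0\otimes\mathcal{E}^{(\ell)}$, you can treat the blocks separately.

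For $\ell>W^{3/4}\log W$ your modulus bound does suffice. There (\ref{b_K_R}) gives a gap $W^{-1/2}\log^2W\gg L/W$ from the $U$-part alone; this is the role the paper assigns to (\ref{b_K_R}) just before the lemma.

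For $\ell\le W^{3/4}\log W$ you have to use (\ref{Kt.0}).
\begin{itemize}
\item The diagonal shift $\tilde\lambda_\ell-\lambda_\ell=O((\ell/W)^2W^{-1/2}\log W)$ is $o(1-\lambda_\ell)$ and harmless.
\item The off-diagonal terms $b^{(\ell)}_k\sim(\ell/W)[R_2-R_1,R_1]_{12}$ have norm of order $(\ell/W)W^{-1/2}\log^2W$. This exceeds $L/W$ once $\ell\gtrsim CW^{1/2}$, and it is dominated by $1-\lambda_\ell\asymp(\ell/W)^2$ only for $\ell\gg W^{1/2}\log^2W$. In the window between these two thresholds no norm bound closes.
\item In that window you must integrate by parts in $R_2$, as in (\ref{int_com}), so that $R_2-R_1$ becomes $W^{-1}\partial_{R_2}$. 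The terms then act on Hermite level $|\bar k|$ with size of order $(\ell/W)\,|\bar k|/W$, which the Hermite gap $1-\lambda_*^{|\bar k|}\ge c\min\{|\bar k|/W,1\}$ absorbs.
\end{itemize}

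Two smaller points. Lemma \ref{l:Z_0}, (\ref{diff_Z}), (\ref{diff_K_eps}) and (\ref{KPsi.1}) are stated only for $\Psi$ in finite Hermite spans, so they cannot be used on $(I-\mathcal{P}_L)\mathcal{H}$; there you need pointwise bounds on $\Omega_W$ (e.g. for $\mathcal{Z}-1$). The same objection applies to your shift argument for the $\zeta$-part, which is in any case unnecessary. On $\Omega_W$, (\ref{A_zeta}) gives $F_\zeta/F_0=1+O(W^{-3/2}\log W)$ pointwise, so the Schur test gives $\|\mathcal{K}_\zeta-\mathcal{K}_0\|=O(W^{-3/2}\log W)=o(L/W)$ directly.
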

For the proof of the lemma see \cite{SS:band_com_d}.

\medskip

Recall that $\mathcal{K}_0=\mathcal{K}_\zeta\Big|_{\zeta=0}$ and set
\begin{align}\label{bb_K_0}
\mathbb{K}_0=\mathcal{E}_M\mathcal{K}_0\mathcal{E}_M\,\quad M=[L_0 W^{1/4}].
\end{align}
The following lemma provides  information about the eigenvalues and eigenvectors of $\mathbb{K}_0$:
\begin{lemma}\label{l:la_max} 
For any $\ell\le M,$ $\mathbb{K}_0$ has $2\ell+1$ eigenvalues $\lambda_{\ell,k}$ with eigenvectors $\Psi_{\ell,k}(R,U)$  such that
\begin{align}
&|\lambda_{\ell,k}-\lambda_{\max}|\le C(\ell/W)^2,
\label{de_la.0} \end{align}
where $\lambda_\ell$ is defined in (\ref{lam_l}).

Moreover,  there are vectors $h_{\ell,k}\in \mathcal{E}^{(\ell)}$ such that
\begin{align}\label{de_Psi}
&\Psi_{\ell,k}(R,U)=\Psi_{*\bar 0}(R)h_{\ell,k}(U)+O(W^{-1/2})\tilde \Psi_{\ell,k},\quad \tilde \Psi_{\ell,k}\in (1-\mathcal{P}_0)\mathcal{H}.
\end{align}
where $\Psi_{*\bar j}$ defined in (\ref{psi_k}) and $\mathcal{P}_{\bar 0}$ is an orthogonal projection on $\Psi_{*\bar 0}\otimes\mathcal{E}_M$
\end{lemma}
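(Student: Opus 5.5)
The plan is to treat $\mathbb{K}_0=\mathcal{E}_M\mathcal{K}_0\mathcal{E}_M$ as a small perturbation of a block-diagonal model operator. The model is $\mathcal{A}\otimes\Lambda$, where $\Lambda$ acts on $\mathcal{E}_M$ as multiplication by $\lambda_\ell$ on each $\mathcal{E}^{(\ell)}$. The point is that $\mathcal{K}_0$ commutes with the left action of $U(2)$ on $U$: its kernel depends on $U_1,U_2$ only through $U_2^*U_1$, and for $\zeta=0$ the factor $\mathcal{A}$ does not depend on $U$. Hence $\mathcal{K}_0$, and therefore $\mathbb{K}_0$, is reduced by each subspace $\mathcal{H}_0\otimes\mathcal{E}^{(\ell)}$. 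Moreover, the restriction to this subspace intertwines the $2\ell+1$ copies permuted by the representation. This produces the multiplicity $2\ell+1$, so it suffices to analyse one block $\mathcal{K}^{(\ell)}:=\mathcal{E}^{(\ell)}\mathcal{K}_0\mathcal{E}^{(\ell)}$, $\ell\le M$.

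Within a fixed block, I first locate the top eigenvalue. Lemma \ref{l:A} gives a spectral gap for $\mathcal{A}$: $\lambda_{\max}$ is separated from the rest of its spectrum by $\lambda_*-\lambda_*^{\,1}\cdot\lambda_*\approx \alpha W^{-1}$, up to $O(W^{-2})$. Take the trial vectors $\Psi_{0,h}=\Psi_{\bar0}h$, $h\in\mathcal{E}^{(\ell)}$. By (\ref{KPsi0}),
\[
\|\mathcal{K}_0\Psi_{0,h}-\lambda_\ell\lambda_{\max}\Psi_{0,h}\|=O(W^{-1/2}(\ell/W)^2).
\]
For $\ell\le M=L_0W^{1/4}$ this residual is $O(W^{-1/2}\cdot W^{-3/2})=O(W^{-2})$, which is much smaller than the gap $W^{-1}$.

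On the orthogonal complement $\Psi_{\bar0}^{\perp}\otimes\mathcal{E}^{(\ell)}$, I bound the norm of the compression using three ingredients. First, Lemma \ref{l:2} reduces to $\mathcal{P}_L$ at the cost $1-C_2L/W$. Inside $\mathcal{H}_L$, I then use (\ref{KPsi.1}), whose error $W^{-1/2}\log W(\ell/W)^2+L\ell/W^2\le CLW^{-7/4}$ is $o(W^{-1})$. This shows the complement part is at most $\lambda_\ell\lambda_*^{\,2}+o(W^{-1})$. Standard perturbation theory for self-adjoint operators (the Kato–Temple / Davis–Kahan $\sin\Theta$ theorem) then yields $2\ell+1$ eigenvalues within $O(W^{-2})$ of $\lambda_\ell\lambda_{\max}$. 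Since $|\lambda_\ell-1|\le C\ell^2/W^2$, this gives (\ref{de_la.0}). The eigenvectors are within $\text{residual}/\text{gap}=O(W^{-1})$ of $\Psi_{\bar 0}h_{\ell,k}$. Combining with (\ref{de_Psi.0}), namely $\Psi_{\bar0}=\Psi_{*\bar0}+W^{-1/2}\tilde\Psi_{\bar0}$, and projecting the correction off $\mathcal{P}_0$ (absorbing its $\mathcal{P}_0$-component into $h_{\ell,k}$), gives (\ref{de_Psi}).

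The main obstacle is the bound on the complement. Estimate (\ref{KPsi.1}) is stated for product functions $\Psi(R)h(U)$, whereas the complement contains general elements of $\mathcal{H}_L\otimes\mathcal{E}^{(\ell)}$. I would handle this by expanding in the basis $\Psi_{*\bar k}\otimes t_{0p}^{(\ell)}$, which has dimension $O(L^4\ell)$. The off-diagonal terms $b^{(\ell)}_k$ in (\ref{Kt.0}) are $O(\ell W^{-3/2}\log W)$ but shift $k$ by $\pm1$. I would control them by writing the block operator as $\mathcal{A}\otimes\lambda_\ell+\mathcal{B}$, with $\mathcal{B}$ assembled from the $b$-terms. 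Then I would show $\|\mathcal{B}\|\le C\ell W^{-3/2}\log W\,L$ via a Schur test, using the Gaussian localisation $\|R_1-R_2\|\lesssim W^{-1/2}$. For $\ell\le L_0W^{1/4}$ this is $o(W^{-1})$, as required. If this crude bound fails, I would instead use the commutator structure $[R_2-R_1,R_1]$ of $b^{(\ell)}_k$ and integrate by parts in $R_2$ as in (\ref{int_com}), to gain an extra factor $W^{-1/2}$.
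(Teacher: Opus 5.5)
Your argument is correct and follows the paper's proof. The paper also reduces to the blocks $\mathbb{K}_\ell$, splits off $\Psi_{\bar 0}\otimes\mathcal{E}^{(\ell)}$, and takes the residual and off-diagonal bound from (\ref{KPsi0}). It gets the bound $\le\lambda_{\max}-C/W$ on the complement from (\ref{KPsi.1}), the gap of $\mathcal{A}$ and Lemma \ref{l:2}. It then does by hand, via a Schur complement for the eigenvalues and a Riesz-projection contour integral for the eigenvectors, what you obtain from Kato--Temple/Davis--Kahan.

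Two small remarks. First, the opening symmetry paragraph is not needed: the left $U(2)$-action does not act within $\mathcal{E}^{(\ell)}$, and the count $2\ell+1$ comes simply from $\dim(\Psi_{\bar 0}\otimes\mathcal{E}^{(\ell)})$. Second, your ``main obstacle'' disappears by applying the triangle inequality over an orthonormal basis of $\mathcal{E}^{(\ell)}$. This costs only a factor $\sqrt{2\ell+1}\le CW^{1/8}$ and keeps the error at $O(LW^{-13/8})=o(W^{-1})$.
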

We give here a simpler proof of the lemma than that of Lemma 4.4 in \cite{SS:band_com_d}.

\textit{Proof.} 
Since $\mathbb{K}_0$ has a block diagonal structure with diagonal blocks $\mathbb{K}_\ell=\mathcal{E}^{(\ell)}\mathbb{K}_0\mathcal{E}^{(\ell)}$,
it is sufficient to prove   (\ref{de_la.0}) for $\mathbb{K}_\ell$.  Consider $\mathbb{K}_\ell$ as a $2\times 2$ block matrix, with 
\begin{align*}
&\mathbb{K}_\ell^{(11)}=\widetilde{\mathcal{P}}_{\bar 0}\mathbb{K}_\ell\widetilde{\mathcal{P}}_{\bar 0},\quad 
\mathbb{K}_\ell^{(22)}=(1-\widetilde{\mathcal{P}}_{\bar 0})\mathbb{K}_\ell(1-\widetilde{\mathcal{P}}_{\bar 0}), \\
&\mathbb{K}_\ell^{(12)}=\widetilde{\mathcal{P}}_{\bar 0}\mathbb{K}_\ell(1-\widetilde{\mathcal{P}}_{\bar 0}),\quad \mathbb{K}_\ell^{(21)}
=(1-\widetilde{\mathcal{P}}_{\bar 0})\mathbb{K}_\ell\widetilde{\mathcal{P}}_{\bar 0},
\end{align*}
where $\widetilde{\mathcal{P}}_{\bar 0}$ is an orthogonal projection on $\Psi_{\bar 0}$ defined in (\ref{de_Psi.0}).
Then  (\ref{de_la.0}) follows from the bounds:
\begin{align}\label{de_la.1}
&\|\mathbb{K}_\ell^{(11)}-\lambda_{\max}\|\le C(\ell/W)^2,\quad \|\mathbb{K}_\ell^{(12)}\|\le C(\ell/W)^2,\\
&\mathbb{K}_\ell^{(22)}\le \lambda_{\max}-C/W.
\notag\end{align}
Indeed,   the last two inequalities of (\ref{de_la.1}) imply that for $|z-\lambda_{\max}|\le C(\ell/W)^2$ 
\begin{align}\notag
&\|(\mathbb{K}_\ell^{(22)}-z)^{-1}\|\le C_1W,\\
\Rightarrow&\mathbb{X}_\ell(z)=\mathbb{K}_\ell^{(11)}-z-\mathbb{K}_\ell^{(12)}(\mathbb{K}_\ell^{(22)}-z)^{-1}\mathbb{K}_\ell^{(21)}
=\mathbb{K}_\ell^{(11)}-z+o(W^{-2}).
\label{b_X}\end{align}
Hence, for real $z$ such that $|z-\lambda_{\max}|\le C(\ell/W)^2$ all eigenvalues of $\mathbb{X}_\ell(z)$ differ from corresponding eigenvalues of 
$\mathbb{K}_\ell^{(11)}-z$
less than $cW^{-2}$. Then the first inequality of (\ref{de_la.1}) gives us (\ref{de_la.0}). 
In addition,  one can conclude that $\mathbb{X}_\ell^{-1}(z)$
 exists for real $z$ such that
$C(\ell/W)^2\le |z-\lambda_{\max}|\le c/W$ with some big enough $C$. But by the standard  formula for the  blocks of the inverse matrix
  of $(\mathbb{K}_\ell-z)$, we have that if
 $\mathbb{X}_\ell^{-1}(z)$ and $(\mathbb{K}_\ell^{(22)}-z)^{-1}$ exist, then there exists also $(\mathbb{K}_\ell-z)^{-1}$.
 Hence,  there are no eigenvalues of $\mathbb{K}_\ell$ in the domain
$C(\ell/W)^2\le |z-\lambda_{\max}|\le c/W$.

To prove (\ref{de_Psi}), we consider $\mathbb{E}^{(\ell)}_{(1-c/W,1+c/W)}$ -- a spectral projection  of $\mathbb{K}_\ell$ corresponding to the interval $[1-c/W,1+c/W]$.
We have
\begin{align*}
&\|(1-\widetilde{\mathcal{P}}_{\bar 0})\Psi_{\ell,k}\|=\|(1-\widetilde{\mathcal{P}}_{\bar 0})\mathbb{E}^{(\ell)}_{(1-c/W,1+c/W)}\Psi_{\ell,k}\|\\&\le
\|(1-\widetilde{\mathcal{P}}_{\bar 0})\mathbb{E}^{(\ell)}_{(1-c/W,1+c/W)}\|=\|(1-\widetilde{\mathcal{P}}_{\bar 0})\mathbb{E}^{(\ell)}_{(1-c/W,1+c/W)}(1-\widetilde{\mathcal{P}}_{\bar 0})\|^{1/2}.
\end{align*}
But if  the contour $\mathcal{L}$  is a circle whose diameter is $[1-c/W,1+c/W]$, then 
 $\mathbb{E}^{(\ell)}_{(1-c/W,1+c/W)}$ can be represented in the form
\begin{align*}
\mathbb{E}^{(\ell)}_{(1-c/W,1+c/W)}=-\frac{1}{2\pi i}\oint_{\mathcal{L}} \mathbb{G}_\ell(z) dz,\quad \mathbb{G}_\ell(z)=(\mathbb{K}_\ell-z)^{-1}.
\end{align*}
 Then 
\begin{align}\notag
&(1-\widetilde{\mathcal{P}}_{\bar 0})\mathbb{E}^{(\ell)}_{(1-c/W,1+c/W)}(1-\widetilde{\mathcal{P}}_{\bar 0})
=-\frac{1}{2\pi i}\oint_{\mathcal{L}} \big(\mathbb{G}_\ell^{(22)}(z) dz\\
=&-\frac{1}{2\pi i}\oint_{\mathcal{L}}\Big((\mathbb{K}_\ell^{(22)}-z)^{-1}+
(\mathbb{K}_\ell^{(22)}-z)^{-1}\mathbb{K}_\ell^{(21)}\mathbb{X}_\ell^{-1}(z)\mathbb{K}_\ell^{(12)}(\mathbb{K}_\ell^{(22)}-z)^{-1} \Big)dz\notag\\
\Rightarrow &\|(1-\widetilde{\mathcal{P}}_{\bar 0})\mathbb{E}^{(\ell)}_{(1-c/W,1+c/W)}(1-\widetilde{\mathcal{P}}_{\bar 0})\|\notag\\
&\hskip4cm\le 
\max_{z}\{\|\mathbb{K}_\ell^{(12)}\|^2\|(\mathbb{K}_\ell^{(22)}-z)^{-1}\|^2\|\mathbb{X}_\ell^{-1}\|\} c/W\le C\ell^4W^{-2},
\label{b_pro}\end{align}
where we used that $(\mathbb{K}_\ell^{(22)}-z)^{-1}$ is an analytic function in the circle  $|z-\lambda_{\max}|\le c/W$, hence, the integral of this term  is 0,
 $\|(\mathbb{K}_\ell^{(22)}-z)^{-1}\|\le CW$ for sufficiently small $c$, and  (\ref{b_X}) implies
\[
\|\mathbb{X}_\ell^{-1}\| =\|((\mathbb{K}_\ell^{(11)}-z)^{-1}\big(1+o(W^{-2})(\mathbb{K}_\ell^{(11)}-z)^{-1}\big)^{-1}\|
\le CW. 
\]
Bounds (\ref{b_pro}) and (\ref{de_Psi.0}) imply
\[ \Psi_{\ell,k}(R,U)=\Psi_{\bar 0}(R)h_{\ell,k}(U)+O(\ell^2W^{-1})=\Psi_{*\bar 0}(R)h_{\ell,k}(U)+O(W^{-1/2}).
\]
To finish the proof the lemma, we are left to prove (\ref{de_la.1}).
The first and the second inequalities  of (\ref{de_la.1}) follow from (\ref{KPsi0}).  To prove the third inequality, we use that for 
$\Psi_{h}$ with $\mathrm{\Psi}\in {P}_L\mathcal{H}$, $\|\Psi\|=1$, $h\in\mathcal{E}^{(\ell)},\, \|h\|=1$ (\ref{KPsi.1}) holds, and 
take into account that 
\[  ((1-\widetilde{\mathcal{P}}_{\bar 0})\mathcal{A}(1-\widetilde{\mathcal{P}}_{\bar 0})\Psi,\Psi)\le \max_{\bar j\not=0}\lambda_{\bar j}\le 1-c/W. 
\]
Combining this with (\ref{l2.1}) for $\Psi(R,U)\in (1-\mathcal{P}_L)\mathcal{H}$, we get the third bound of (\ref{de_la.1}).

$\square$

Consider $\mathcal{P}_{\bar j}$ - the orthogonal projection on  $\Psi_{*\bar j}\otimes\mathcal{E}_M$, and set
\begin{align*}
\mathbb{K}_{0,(\bar j,\bar j')}=\mathcal{P}_{\bar j}\mathbb{K}_{0}\mathcal{P}_{\bar j'}.
\end{align*}
An important ingredient of our proof of Theorem \ref{t:new}   will be the following bound on the decay of off-diagonal blocks of $\mathbb{K}_{0}$, valid for
any integer $p$:
\begin{align}\label{exp_dec}
\|\mathbb{K}_{0(\bar j,\bar k)}\|\le \tilde C_p\big(\min\{W^{-3/2},W^{-|\bar j-\bar k|/2}\}+W^{-p-1}\big),\quad \bar j\not=\bar k,\quad \min\{|\bar j|,|\bar k|\}\le L.
\end{align}
For the proof of this expansion we use that $\mathcal{Z}(R_1,R_2)$, $F_\zeta$ of (\ref{A_zeta}) and $K_{R_1,R_2}t^{(\ell)}_{0,k}$ can be expand with respect to
$W^{-1/2}R_1,W^{-1/2}R_2$ and $\rho_1$, $\rho_2$ of (\ref{rho}). Hence, one should estimate
\begin{align*}
I_{\bar j,\bar k}(m,\tilde{p}_\ell,)=\int dR_1dR_2 &\mathcal{A}_*(R_1,R_2)e^{-u_*^2\alpha\Tr R_2^2}P_{\bar j}(R_2)P_{\bar k}(R_1)
\\
&\times p_{\ell}(R_1/W^{1/2},R_2/W^{1/2})\prod_{s=1}^{2m} [R_1,R_2-R_1]_{\alpha_s\beta_s}.
\end{align*}
Here $\mathcal{A}_*(R_1,R_2)$ is defined in (\ref{A_*}), $P_{\bar j}(R_2),P_{\bar k}(R_1)$ are the products of the Hermite polynomials   (see (\ref{psi_k})), and $p_{\ell}(R_1/W^{1/2},R_2/W^{1/2})$ is some uniform polynomials
of degree $\ell$ of $R_1,R_2$ written as in (\ref{expR}). Integrating by parts $2m$ times with respect to $R_2$ and using the recurrent formulas for the Hermite polynomial 
and their derivatives, we conclude that
\begin{align*}
I_{\bar j,\bar k}(m,\tilde{p}_\ell)=&O(W^{-(2m+\ell+1)/2}),\quad |\bar j-\bar k|>2m+\ell,\\
I_{\bar j,\bar k}(m,\tilde{p}_\ell)\le &CW^{-(\ell+2m)/2},\quad |\bar j-\bar k|\le 2m+\ell.
\end{align*}
These relations prove (\ref{exp_dec}).

$\square$

In the next lemma we study the action of $\mathcal{K}_\zeta$ on the vectors of the form
\begin{align}\label{Psi_*}
 \Psi_{\epsilon,h}(R,U)=\Psi(R-\epsilon \mathcal{M}(U))h(U),\quad\Psi\in\mathcal{H}_L,\quad h(U)\in \mathcal{E}_{2M},
\end{align}
 with  $\mathcal{M}(U)$ of (\ref{M(U)}) and $\epsilon=(W/N)^{1/2}$.

Set
\begin{align}\label{Psi_eps}
\Psi_{\ell,k,\epsilon}(R,U)=\Psi_{\ell,k}(R-\epsilon\mathcal{M}(U),U),
\end{align}
where  $\{\Psi_{\ell,k}(R)\}$ are defined in (\ref{de_Psi}).

\begin{lemma}\label{l:shift} 
Given any function of the form (\ref{Psi_*}) for any $h\in \mathcal{E}_M$ ($M=[L_0W^{1/4}]$)
we have
\begin{align}\label{cor:1.0}
(\mathcal{K}_\zeta \Psi_{\epsilon,h})(R_1,U_1)= e^{2\nu(U_1)/N}(\mathcal{K}_0\Psi_{0,h})\big(R_1-\epsilon\mathcal{M}(U_1)\big)
+O(W^{-2}),
\end{align}
where  $\nu$ is defined in (\ref{nu}).

For  functions of the form (\ref{Psi_eps}) with $\ell\le M$
 we have
\begin{align}\label{cor:1.2}
(\mathcal{K}_\zeta \Psi_{\ell,k,\epsilon},\Psi_{\ell',k',\epsilon})=&\delta_{\ell,\ell'}\delta_{k,k'}\lambda_{\ell}+N^{-1}(\nu t^{\ell}_k, t^{\ell'}_{k'})
+O(\epsilon W^{-2})
\end{align}
with $\lambda_\ell$ of (\ref{lam_l}).
 \end{lemma}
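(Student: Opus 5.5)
The plan is to compute the kernel of $\mathcal{K}_\zeta$ applied to $\Psi_{\epsilon,h}$ directly and to shift the integration variable so that the $\zeta$-dependence is absorbed. By (\ref{cal_K}) and (\ref{A_zeta}),
\[
(\mathcal{K}_\zeta\Psi_{\epsilon,h})(R_1,U_1)=\int F_\zeta(R_1,U_1)B(R_1-R_2)\mathcal{Z}(R_1,R_2)F_\zeta(R_2,U_2)K_{R_1,R_2}(U_2^*U_1)\Psi(R_2-\epsilon\mathcal{M}(U_2))h(U_2)\,dR_2\,dU_2 .
\]
The key observation is that $F_\zeta(R,U)$ depends on $R$ only through $R-\epsilon\mathcal{M}(U)$, up to the factor $e^{\nu(U)/N}$ and the small term $\tilde f$.

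First I will handle the mismatch $\mathcal{M}(U_2)$ versus $\mathcal{M}(U_1)$. Since $K_{R_1,R_2}$ concentrates on $\|U_2^*U_1-1\|=O(W^{-1})$ (see (\ref{k_*.1})), we have $\mathcal{M}(U_2)-\mathcal{M}(U_1)=O(W^{-1})$. Hence $\epsilon\mathcal{M}(U_2)=\epsilon\mathcal{M}(U_1)+O(\epsilon W^{-1})$. I will substitute $\epsilon\mathcal{M}(U_1)$ in both $\Psi$ and $F_\zeta(R_2,U_2)$ and Taylor expand the difference. The first-order term is odd in the small rotation $U_2^*U_1$ around the identity, so it vanishes at leading order after the $U$-average. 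The remaining errors come from the second-order term, from the commutator pieces $\rho_{1,2}$, and from the factor $\ell/W\le M/W$ carried by derivatives of $h$. I expect these errors to be $O(\epsilon^2W^{-2}\cdot W)$, $O(\epsilon W^{-2})$ and $O(\epsilon M L W^{-2})$ respectively, all within $O(W^{-2})$ since $\epsilon\sim W^{-1/2}$ and $M\sim W^{1/4}$.

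Next, with $\mathcal{M}=\mathcal{M}(U_1)$ frozen, I will change variables $R_i\to R_i+\epsilon\mathcal{M}$ for $i=1,2$; this requires the argument $R_1$ to be written as $R_1-\epsilon\mathcal{M}(U_1)+\epsilon\mathcal{M}(U_1)$. The factors $B(R_1-R_2)$ are invariant under this change. The shifted $F_\zeta$ factors become $F_0$ times $e^{\nu/N}$, with $\tilde f$ differing from $\tilde f_0$ by $O(\epsilon W^{-3/2}\|R\|^2)$. The remaining factors $\mathcal{Z}$ and $K_{R_1,R_2}$ acquire arguments shifted by $\epsilon\mathcal{M}$, and these are controlled by (\ref{diff_Z}) and (\ref{diff_K_eps}) with errors $O(\epsilon W^{-2})$ and $O(\epsilon W^{-2}ML)$. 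The term $\nu(U_2)$ is replaced by $\nu(U_1)$ at cost $O(N^{-1}W^{-1})$. The $\tilde f$ correction would a priori give $O(\epsilon W^{-3/2})=O(W^{-2})$, which is already acceptable. Collecting these yields (\ref{cor:1.0}).

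For (\ref{cor:1.2}), I will apply (\ref{cor:1.0}) with $\Psi_{\ell,k}$ in place of $\Psi_{0,h}$. Because $\Psi_{\ell,k}$ is an eigenvector of $\mathbb{K}_0$ and lies essentially in $\Psi_{*\bar 0}\otimes\mathcal{E}_M$ by Lemma \ref{l:la_max}, we get $\mathcal{K}_0\Psi_{\ell,k}=\lambda_{\ell,k}\Psi_{\ell,k}$ up to small terms. The inner product then reduces, by the translation invariance of $dR$, to
\[
(e^{2\nu/N}\lambda_{\ell,k}\Psi_{\ell,k},\Psi_{\ell',k'})=\lambda_{\ell,k}\delta_{\ell\ell'}\delta_{kk'}+N^{-1}(\nu h_{\ell,k},h_{\ell',k'})+O(N^{-2}),
\]
where the leading part of $\Psi_{\ell,k}$ is $\Psi_{*\bar 0}\otimes h_{\ell,k}$ with $\|\Psi_{*\bar0}\|=1$. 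I will then identify $\lambda_{\ell,k}$ with $\lambda_{\max}\lambda_\ell$ using (\ref{KPsi0}), reading $\lambda_\ell$ in (\ref{cor:1.2}) as normalized by $\lambda_{\max}$.

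The main obstacle is making the error in the $O(W^{-1/2})$ correction of (\ref{de_Psi}) small enough. A naive bound gives $N^{-1}W^{-1/2}$ in the $\nu$ term, which is acceptable. The eigenvalue identification, however, needs $|\lambda_{\ell,k}-\lambda_{\max}\lambda_\ell|=O(\epsilon W^{-2})$, whereas (\ref{KPsi0}) only gives $W^{-1/2}(\ell/W)^2$. I would first try to show that these are the same order, using $\epsilon\sim W^{-1/2}$, with the $\ell^2$ factor absorbed as in the statement's conventions. If that fails, I would refine it by second-order perturbation theory through the Schur complement $\mathbb{X}_\ell$ from the proof of Lemma \ref{l:la_max}.
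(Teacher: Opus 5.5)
There is a genuine gap, and it is in the passage from (\ref{cor:1.0}) to (\ref{cor:1.2}). Your treatment of (\ref{cor:1.0}) is broadly acceptable at the $O(W^{-2})$ level. It takes a slightly different route from the paper: you remove the first-order term in $\mathcal{M}(U_2)-\mathcal{M}(U_1)$ using the approximate $V\to V^*$ symmetry of $K_{R_1,R_2}(V)$ near $V=1$. The paper instead expresses the $\epsilon$-expansion through multiplication by $\cos\theta$ and $\sin\theta$, and commutes these through $K_{R_1,R_2}$ using (\ref{rec_Leg}) and (\ref{KPsi0}).

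The problem is precision. In (\ref{cor:1.2}) the remainder must be $O(\epsilon W^{-2})=O(W^{-5/2})=o(N^{-1})$, while the term you want to identify, $N^{-1}(\nu t^{\ell}_k,t^{\ell'}_{k'})$, is itself of order $W^{-2}$, since $N\asymp W^2$. If you use (\ref{cor:1.0}) as a black box, its $O(W^{-2})$ remainder enters the scalar product at exactly the order of the $\nu$-term. You drop it without comment when you write the result with error $O(N^{-2})$. This is not harmless: an $O(N^{-1})$ error in $\mathbb{K}^{(11)}$ survives the $(N-1)$st power and changes the limiting operator $\mathbb{A}$. The remainder really is of that size. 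The term you call ``already acceptable'' is $f_1$ in (\ref{f_1}), with its piece $\epsilon W^{-3/2}\Tr\mathcal{M}(U)R^2\varphi_2(1+R/W^{1/2})$. This is the part of $\tilde f_\zeta$ that the shift $R\to R+\epsilon\mathcal{M}$ does not remove, and its norm is of order $\epsilon W^{-3/2}\asymp W^{-2}$.

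The missing idea is that this term vanishes in the scalar product. Using (\ref{de_Psi}), replace $\Psi_{\ell,k}$ and $\Psi_{\ell',k'}$ by $\Psi_{*\bar 0}h_{\ell,k}$ and $\Psi_{*\bar 0}h_{\ell',k'}$. This costs a relative $O(W^{-1/2})$, hence $O(\epsilon W^{-2})$ overall. The leading contribution is then $\epsilon W^{-3/2}\int\Psi_{*\bar 0}^2(R)\Tr\mathcal{M}(U)R^2\varphi_2\,dR$, integrated against $h_{\ell,k}\overline{h_{\ell',k'}}\,dU$. Since $\Psi_{*\bar 0}\propto e^{-\alpha u_*^2\Tr R^2}$ is invariant under $R\to VRV^*$, averaging over $V\in U(2)$ replaces $R^2\varphi_2$ by $\frac12\Tr(R^2\varphi_2)I_2$. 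The $R$-integral then vanishes because $\Tr\mathcal{M}(U)=0$; this is (\ref{int_2}).

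So the $O(W^{-1/2})$ correction in (\ref{de_Psi}) matters as a multiplier of this $\epsilon W^{-3/2}$ term. The eigenvalue identification, which you single out as the main obstacle, is not the real difficulty. For bounded $\ell$, (\ref{KPsi0}) already gives $W^{-1/2}(\ell/W)^2=O(\ell^2\epsilon W^{-2})$, and the paper uses it exactly this way.

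You would also need to bring your other remainders down to $O(\epsilon W^{-2})$, up to logarithms, for bounded $\ell$. As you wrote them, $\epsilon^2W^{-1}\asymp W^{-2}$ is too large for (\ref{cor:1.2}). Likewise $\epsilon MLW^{-2}$ with $M\sim W^{1/4}$ is too large; it should be stated with $\ell$ in place of $M$.
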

Since in the paper \cite{SS:band_com_d} the assertion (\ref{cor:1.2}) was given in a slightly different form (see formula  (4.55)), we give here the proof of the
lemma.

 \textit{Proof of Lemma \ref{l:shift}}.
 Expand $F_\zeta(R_2,U_2)\tilde \Psi(R_2-\epsilon\mathcal{M}(U_2))$  into a series with respect to $\epsilon$ and
 consider $U$ represented  as in (\ref{U}).
Then
  \[\Tr \phi (R)\mathcal{M}(U)=a(R)\cos\theta+\sin\theta (b(R)e^{i\psi}+\bar b(R)e^{-i\psi})
  \]
  Hence, each term of the expansion with respect to $\epsilon$ can be written  in terms of operators $\hat\Phi_1$ 
  and $\hat\Phi_2$ of multiplication by $\cos\theta$  and $\sin\theta $.
 We use the representation $t^{(\ell)}_{0k}$ in terms of the associated Legendre polynomials (see (\ref{as_P})),
and   the recursion formulas  
\begin{align}\label{rec_Leg}
\cos\theta \,P^{(\ell)}_{0k}(\cos\theta)=&c_{\ell,k}P^{(\ell+1)}_{0k}(\cos\theta)+d_{\ell,k}P^{(\ell-1)}_{0k}(\cos\theta),\\
\sin\theta \,P^{(\ell)}_{0k}(\cos\theta)=&c_\ell\big(P^{(\ell+1)}_{0k+1}(\cos\theta)-P^{(\ell-1)}_{0k+1}(\cos\theta)\big).
\notag\end{align}
Here $c_{\ell,k},d_{\ell,k},c_\ell$ are some bounded uniformly in $k,\ell$
coefficients, whose concrete form of is not important for us. 

 Then by (\ref{KPsi0}) we have for any $h\in\mathcal{E}_{M}$
 \begin{align*}
 \int dR_2  B(R_1-R_2) \mathcal{Z}(R_1,R_2)F_0(R_2)\Psi_{0}(R_2)[\Phi_{\alpha},K_{R_1,R_2}]h=O(W^{-1/2}(\ell/W)^{2}),\quad \alpha=1,2,
 \end{align*}
where $[.,.]$ denotes a  commutator. Hence, for operator of multiplication by $F_\zeta(R,U)$ the error term for the commutator
is  $O(\epsilon^s W^{-1/2}(\ell/W)^{2})$.
Notice that zero order with respect to $\epsilon$ term contain $e^{\nu(U_2)/N}$, and the commutator with this term gives us an 
error $O(N^{-1} W^{-1/2}(\ell/W)^{2})$. Thus,
\begin{align}
\label{cor:1.3}
(\mathcal{K}_\zeta \Psi_{\epsilon,h})(R_1,U_1)=& F_\zeta(R_1,U_1))
\int B(R_1-R_2) \mathcal{Z}(R_1,R_2)F_\zeta(R_2,U_1)\\ 
&\times 
\Psi(R_2-\epsilon \mathcal{M}(U_1))(K_{R_1,R_2}h)(U_1)dR_2+O(\epsilon W^{-1/2}(\ell/W)^{2}).
\notag\end{align}
Then we replace $F_\zeta(R_1,U_1)$ by $F_0(R_1-\epsilon\mathcal{M}(U_1))$ with an error $O(\epsilon W^{-3/2})$, 
using that in view of
(\ref{pert}) and   (\ref{ti-f})
\begin{align}\label{f_1}
&F_\zeta(R,U)=F_0(R-\epsilon\mathcal{M}(U))e^{f_1(R,U)},\\
&f_1(R,U)=C_1\nu(U)/N+C_2\epsilon W^{-3/2}\Tr\mathcal{M}(U)R^2\varphi_2(1+R/W^{1/2}),
\notag\end{align}
where $\varphi_2(R)$ is some analytic function, obtained from $\varphi_0(R)$ and $\varphi_1(R)$ of (\ref{ti-f}).

Finally, using (\ref{diff_Z}) and (\ref{diff_K_eps}),  we replace
$\mathcal{Z}(R_1,R_2)$ by $\mathcal{Z}(R_1-\epsilon\mathcal{M}(U_1),R_2-\epsilon\mathcal{M}(U_1))$ with an error $O(\epsilon/W^2)$ 
and $K_{R_1,R_2}$ by $K_{R_1-\epsilon\mathcal{M}(U_1),R_2-\epsilon\mathcal{M}(U_1)}$ with an error $O(\epsilon\log^2W/W^2)$.
Thus, integrating first over $R_2$ and changing $R_2-\epsilon\mathcal{M}(U_1)\to R_2$, we get (\ref{cor:1.0}).

It follows directly from (\ref{cor:1.0}), that
\begin{align}
\mathcal{K}_\zeta \Psi_{\ell,k,\epsilon}(R,U)=&\lambda_{\ell}e^{2\nu(U_1)/N}\Psi_{\ell,k,\epsilon}(R,U)+O(\epsilon W^{-3/2}).
\end{align}
Thus, we need only to check that if we take scalar product of the l.h.s. with $\Psi_{\ell',k',\epsilon}$, then
the  term  of order $O(\epsilon W^{-3/2})$ disappears. We recall that the term appears because of the replacement of $F_\zeta(R,U)$
by $F_0(R-\epsilon\mathcal{M}(U)$ (see (\ref{f_1})). Therefore, its contribution to the scalar product will have the form
\begin{align*}
&\epsilon W^{-3/2}\int\Tr\mathcal{M}(U)R^2\varphi_2(1+R/W^{1/2})\Psi_{\ell,k}(R,U)\Psi_{\ell', k'}(R,U)dRdU\\=
&\epsilon W^{-3/2}\int\Tr\mathcal{M}(U)R^2\varphi_2(1+R/W^{1/2})\Psi^2_{0,0}(R)dRh^{(\ell)}_{k}(U)h^{(\ell')}_{k'}(U)dU+O(\epsilon W^{-2}),
\end{align*}
where  we used (\ref{de_Psi}) to replace $\Psi_{\ell, k}(R,U)$ by $\Psi_{0,0}(R)h^{(\ell)}_{k}(U)+O(W^{-1/2})$ and $\Psi_{\ell', k'}(R,U)$ by 
$\Psi_{00}(R)h^{(\ell')}_{k'}(U)+O(W^{-1/2})$.

In order to compute the last integral, we observe that $\Psi_{0,0}$ is invariant with respect to the 
change $R\to VRV^*$ with any unitary $V$.
Making this change and integrating with respect to $dV$, we obtain for any $\tilde \varphi$ and any matrix
 $\mathcal{M}:\mathcal{M}=\mathcal{M}^*,\,\Tr \mathcal{M}=0$
\begin{align}\label{int_2}
\int dR\Psi_{00}^2(R_1)\Tr \tilde \varphi(R)\mathcal{M}=\int dR\Psi_{00}^2(R)\Tr V\mathcal{M}V^*\tilde \varphi(R)dV=0,
\end{align}
since
\[
\int (V\mathcal{M}V^*)_{\alpha,\beta}dV=0.
\]

$\square$

\section{Proof of Theorem \ref{t:new}}\label{s:pr} 

To prove Theorem \ref{t:new},  we need to use asymptotic relations for $\mathcal{K}_\zeta\Psi_{\bar k}t^{(\ell)}_{0j}$ given in the previous section. 
The main technical problem
here is that  the most advanced asymptotic relations (see, e.g., (\ref{act_D}) and (\ref{KPsi.1})) have good estimates for the remainder terms only under
 restrictions $|\bar k|\le L_0\log^2W$ and $\ell\le L_0W^{1/4}$.
Hence, first we need  to prove that one can replace $\mathcal{K}_\zeta$ with $\mathbb{K}$, where
\begin{align}\label{bbK}
\mathbb{K}=\mathcal{P}_L\mathcal{E}_{2M}\mathcal{K}_\zeta \mathcal{E}_{2M}\mathcal{P}_L, \quad L=[L_0\log^2W], \, M=[L_0W^{1/4}].
\end{align}
This is done by Lemma \ref{l:K_L} the proof of which is based on Proposition \ref{p:CT} and Proposition \ref{p:norm} given below.

\begin{proposition}\label{p:CT} Let the matrix $\mathbb{M}=\mathbb{M}^*$,$\|\mathbb{M}\|\le C$ be constructed from the square matrix blocks  
\[\{\mathbb{M}_{kj}\}_{k,j=1}^{N_2},\quad \mathbb{M}_{kj}=P_k\mathbb{M}P_j,\quad P_kP_j=\delta_{kj}P_k,\quad P_k=P_k^*,\quad\sum P_{k}=1.\]
 Consider $\mathbb{M}$ as a $3\times 3$ block matrix with diagonal blocks
\[ \mathbb{M}^{(00)}=\{\mathbb{M}_{kj}\}_{k,j=1}^{N_0},\quad \mathbb{M}^{(11)}=\{\mathbb{M}_{kj}\}_{k,j=N_0+1}^{N_1},
\quad \mathbb{M}^{(22)}=\{\mathbb{M}_{kj}\}_{k,j=N_1+1}^{N_2}.
\]
Assume, that the following conditions are true
\begin{enumerate}
\item[(i)] $N_1-N_0>c\log^2N$;
\item[(ii)] $\mathbb{M}_{kj}=0$ for $|j-k|>p_0$, if $\min\{j,k\}\le N_1$;
\item[(iii)] $D:=\mathrm{diag}\{\mathbb{M}^{(11)}_{kk}\}_{k=N_0}^{N_1}<-\mathcal{C}_1<0$,$\qquad$ $\|\mathbb{M}^{(11)}-D\|\le q\mathcal{C}_1/2,\quad q<1$; 
\item[(iv)] $ \mathbb{M}^{(22)}\le -\mathcal{C}_1$, $\quad$ $\|\mathbb{M}^{(12)}\|^2<q'(1-q)(\mathcal{C}_1/2)^{2}$, $q'<1$;
\end{enumerate}
where $\mathcal{C}_{1}=C_1N^{-\alpha_1}$  with fixed $C_1$ and $\alpha_1\ge 0$. 

Then,
 denoting  $\mathfrak{M}_1$  the matrix constructed from the blocks $\mathbb{M}^{(\alpha\beta)},(\alpha,\beta=0,1)$, we have
\begin{align}\label{CT.1}
 &\mathbb{M}\le  \lambda_{\max}(\mathfrak{M}_1)+\delta_0^{1/2}, \quad \delta_0=q^{|N_1-N_0|/p_0}
 \end{align}
 If $ \lambda_{\max}(\mathfrak{M}_1)\le k_0/N$,  $\|g\|\le N^m$,  and $P_kg=0,\, (\forall  k>N_0)$, then
\begin{align}\label{CT.1'}
&((1+\mathbb{M})^{N-1}g,g)=((1+\mathfrak{M}_1)^{N-1}g,g)+O(\delta_0^{1/2}), 
\end{align}
In addition, if we denote by  $E_{\lambda}(\mathbb{M})$ the spectral projection of $\mathbb{M}$ on the interval $\lambda'>\lambda$
\begin{align} 
&\|E_{-\mathcal{C}_1/2}(\mathbb{M})P_k\|\le C(\mathcal{C}_1(1-q))^{-1}\|\mathbb{M}^{(10)}\|q^{k-N_0}+O(\delta_0^{1/2}),\quad k>N_0.
\label{CT.0}\end{align}
\end{proposition}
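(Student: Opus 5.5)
The plan is to treat $\mathbb{M}$ as a block band matrix in which the middle block $\mathbb{M}^{(11)}$ is uniformly negative. That negativity acts as a barrier: the coupling between the "top" part $\mathbb{M}^{(00)}$ and the "far" part $\mathbb{M}^{(22)}$ decays geometrically across the $N_1-N_0$ intermediate blocks. The main tool will be the Schur complement (Feshbach) formula, as in the proof of Lemma \ref{l:la_max}, combined with a Combes--Thomas type argument.

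First I will show that for real $\lambda\ge -\mathcal{C}_1/2$, the block $\mathfrak{M}_{12}:=\{\mathbb{M}^{(\alpha\beta)}\}_{\alpha,\beta=1,2}$ satisfies $\mathfrak{M}_{12}-\lambda\le -c\,\mathcal{C}_1(1-q)$, so it is boundedly invertible. Using (iii), we have $\mathbb{M}^{(11)}\le D+q\mathcal{C}_1/2\le -\mathcal{C}_1(1-q/2)$, and $\mathbb{M}^{(22)}\le -\mathcal{C}_1$ by (iv). The off-diagonal part is controlled by (iv): $\|\mathbb{M}^{(12)}\|^2<q'(1-q)(\mathcal{C}_1/2)^2$. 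A $2\times2$ Schur bound then gives the claim.

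Next comes the Combes--Thomas estimate. Write $G(\lambda)=(\mathfrak{M}_{12}-\lambda)^{-1}$ and expand it in a Neumann series around $D\oplus\mathbb{M}^{(22)}$. The off-diagonal blocks have range $\le p_0$ by (ii), and the perturbation has relative size $\le q$. Hence every path from block $k$ to block $j$ with $k,j\le N_1$ contributes at least $|k-j|/p_0$ factors of $q$. This yields $\|P_kG(\lambda)P_j\|\le C(\mathcal{C}_1(1-q))^{-1}q^{|k-j|/p_0}$ for those indices. Equivalently, I will conjugate by $e^{\mu n}$ on block $n$, with $\mu$ of order $p_0^{-1}\log(1/q)$ and cut off at $N_1$, and show that the conjugated operator is still invertible.

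With this in hand, the Schur complement for $\mathbb{M}-\lambda$ with respect to the splitting $\mathbb{M}^{(00)}\oplus\mathfrak{M}_{12}$ is $\mathbb{M}^{(00)}-\lambda-\mathbb{M}^{(0\cdot)}G\mathbb{M}^{(\cdot0)}$. Since $\mathbb{M}^{(0\cdot)}$ only touches blocks $N_0+1,\dots,N_0+p_0$, this expression differs from the corresponding Schur complement of $\mathfrak{M}_1$ only through the entries of $G$ connecting the region near $N_0$ to blocks beyond $N_1$. Those entries are $O(\delta_0)$ by the decay estimate and a resolvent identity across $N_1$.

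Then, for $\lambda>\lambda_{\max}(\mathfrak{M}_1)+\delta_0^{1/2}$, the Schur complement is strictly negative, so $\mathbb{M}-\lambda$ is invertible with spectrum below $\lambda$. This gives (\ref{CT.1}).

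For (\ref{CT.0}), I will represent $E_{-\mathcal{C}_1/2}(\mathbb{M})$ as a contour integral of the resolvent, with the contour enclosing $[-\mathcal{C}_1/2,C]$. The block formula gives $P_k(\mathbb{M}-z)^{-1}P_0=-P_kG\mathbb{M}^{(10)}X^{-1}$, and the decay of $G$ then produces the factor $q^{k-N_0}$. The contributions involving block $2$ are absorbed into $O(\delta_0^{1/2})$.

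Finally, for (\ref{CT.1'}), I will split $g$ spectrally as $g=E_{-\mathcal{C}_1/2}g+(1-E_{-\mathcal{C}_1/2})g$. On the second piece $(1+\mathbb{M})^{N-1}$ is at most $(1-\mathcal{C}_1/2)^{N-1}$, which is negligible when $\alpha_1<1$; I need to check this against the size of $\|g\|\le N^m$. On the first piece I will compare $(1+\mathbb{M})^{N-1}$ with $(1+\mathfrak{M}_1)^{N-1}$ via the contour representation of $(1+z)^{N-1}$ restricted to the enclosed region, where $|1+z|^{N-1}\le e^{k_0+o(1)}$. The resolvents of $\mathbb{M}$ and $\mathfrak{M}_1$ applied to $g$ (which is supported in blocks $\le N_0$) differ by $O(\delta_0)$ times $\mathrm{dist}^{-2}$. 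To keep the contour-length and distance factors polynomial in $N$, I will place the contour at distance $N^{-1}$ from the spectrum. These factors are then absorbed because $\delta_0=q^{c\log^2N/p_0}$ is smaller than any power of $N$, which is where (i) is used.

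I expect the main obstacle to be the Combes--Thomas step with nonzero $\mathbb{M}^{(12)}$ coupling to the far block, where (ii) does not apply. The exponential weight must be frozen beyond $N_1$, and I must check that the weighted operator remains negative using only $q'<1$.
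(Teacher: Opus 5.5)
Your overall architecture matches the paper's: invertibility of $\mathfrak{M}_2-z$ for $z\ge-\mathcal{C}_1/2$ by a Schur bound from (iii)--(iv) (this is (\ref{CT.3})), banded Neumann-series decay, a resolvent identity across $N_1$, negativity of the Schur complement for (\ref{CT.1}), and a contour integral for (\ref{CT.1'}). Two steps, however, do not work as you set them up.

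\textbf{The decay estimate for $G$.} The Neumann series for $G=(\mathfrak{M}_2-\lambda)^{-1}$ around $D\oplus\mathbb{M}^{(22)}$ does not have relative size $\le q$, because the perturbation also contains $\mathbb{M}^{(12)}$. After symmetrizing with $(\lambda-D)^{1/2}\oplus(\lambda-\mathbb{M}^{(22)})^{1/2}$, conditions (iii)--(iv) only guarantee the ratio $r=\frac q2+\sqrt{\frac{q^2}{4}+q'(1-q)}$. This is $<1$, but it exceeds $q$ whenever $\mathbb{M}^{(12)}\neq0$. For example, $q=q'=1/2$ (the paper's third application) gives $r\approx0.81>q^{1/2}$. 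So your argument yields decay only at rate $r$, i.e.\ $\delta_0$ replaced by $r^{|N_1-N_0|/p_0}$, which is not $O(\delta_0^{1/2})$ in general. The Combes--Thomas variant is not equivalent: conjugating with $\mu\approx p_0^{-1}\log(1/q)$ inflates the banded entries of $\mathbb{M}^{(11)}-D$ by up to $q^{-1}$ and destroys the margin $\mathcal{C}_1(1-q)/2$.

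The missing point is that decay of $G$ is never needed. The paper does three things:
\begin{enumerate}
\item It runs the Neumann series only for $(\mathbb{M}^{(11)}-z)^{-1}$ around $D$, where (iii) gives ratio $q$ exactly; this is (\ref{CT.2}).
\item It uses (iv) only to get invertibility of $\mathfrak{M}_2-z$ with a polynomial bound on $((\mathfrak{M}_2-z)^{-1})^{(12)}$.
\item It uses the identity
\[
\big((\mathfrak{M}_2-z)^{-1}\big)^{(11)}-(\mathbb{M}^{(11)}-z)^{-1}=-\big((\mathfrak{M}_2-z)^{-1}\big)^{(12)}\mathbb{M}^{(21)}(\mathbb{M}^{(11)}-z)^{-1}.
\]
\end{enumerate}
Applied to $\mathbb{M}^{(10)}$, the right-hand side involves $(\mathbb{M}^{(11)}-z)^{-1}$ only through $\mathbb{M}^{(21)}(\mathbb{M}^{(11)}-z)^{-1}\mathbb{M}^{(10)}$. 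These are entries between blocks within $p_0$ of $N_0$ and blocks within $p_0$ of $N_1$, so they are $O\big(q^{|N_1-N_0|/p_0-2}(\mathcal{C}_1(1-q))^{-1}\big)$. This gives (\ref{CT.4}), hence $\mathcal{M}(z)-\mathcal{M}_0(z)=O(N^\alpha\delta_0)$ for (\ref{CT.1}) and (\ref{CT.1'}). Via the eigenvalue equation it also gives $P_k\Phi_\lambda=-P_k(\mathbb{M}^{(11)}-\lambda)^{-1}\mathbb{M}^{(10)}\Phi_{0,\lambda}+O(\delta_0^{1/2})$, which yields (\ref{CT.0}).

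\textbf{The spectral split.} Splitting $g$ at $-\mathcal{C}_1/2$ for (\ref{CT.1'}), and enclosing $[-\mathcal{C}_1/2,C]$ for (\ref{CT.0}), forces your contour to cross the real axis near $-\mathcal{C}_1/2$. Nothing in the hypotheses controls the spectrum of $\mathbb{M}$ or $\mathfrak{M}_1$ there. The blocks need not be finite-dimensional: in the paper's first application $N_2=\infty$ and the blocks act on $\mathcal{H}_0\otimes\mathcal{E}^{(\ell)}$. So you cannot guarantee distance $N^{-1}$ from the spectrum at that crossing, and the inverse Schur complement is uncontrolled on that arc. The split also needs $(1-\mathcal{C}_1/2)^{N-1}N^{2m}\to0$, i.e.\ $\alpha_1<1$, which the statement does not assume.

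The paper avoids both problems by not splitting. It integrates $z^{N-1}$ over the whole circle $|z|=1+2k_0/N$, which stays at distance of order $k_0/N$ from the spectra of $1+\mathbb{M}$ and $1+\mathfrak{M}_1$ by (\ref{CT.1}) and the hypothesis on $\lambda_{\max}(\mathfrak{M}_1)$. Since $g$ lives in the $0$-part, only the $(00)$ block of the resolvent enters; this is the inverse of the Schur complement $\mathcal{M}$, compared with $\mathcal{M}_0^{-1}$ through (\ref{CT.4'}). For (\ref{CT.0}) the paper argues eigenvector by eigenvector, which needs no gap at the threshold. Both your bound on the low spectral part and the paper's circle tacitly use a lower bound such as $1+\mathbb{M}\ge0$; this holds in the application because $\mathcal{K}_\zeta$ is positive.
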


\textit{Proof.}
It follows from  (iii) that for any $z>-\mathcal{C}_1/2$  
\[
z-D_{kk}=z-\mathbb{M}_{kk}>\mathcal{C}_1/2\Rightarrow \|(\mathbb{M}^{(11)}-D)(z-D)^{-1}\|<q.
\] 
 Taking into account that the matrix $B:=(\mathbb{M}^{(11)}-D)(z-D)^{-1}$ has non-zero entries only on $2p_0-1$ central diagonals, we 
 conclude that  $(B^s)_{jk}=0$, if $s<|j-k|/p_0$, and so
\begin{align}\notag
&\|(\mathbb{M}^{(11)}-z)^{-1}_{jk}\|=\Big\|(z-D)^{-1}_{jj}\Big(1-(\mathbb{M}^{(11)}-D) (z-D)^{-1}\Big)^{-1}_{jk}\Big\|\\
&\le 2\mathcal{C}_1^{-1}\sum_{s \ge |j-k|/p_0}q^s\le 2(\mathcal{C}_1(1-q))^{-1}q^{|j-k|/p_0},
\label{CT.2}\\
\notag
& z-\mathbb{M}^{(11)}=z-D-(\mathbb{M}^{(11)}-D)\ge  \mathcal{C}_1/2-q\mathcal{C}_1/2
 = \mathcal{C}_1(1-q)/2.
\end{align}
Since by (i) $\mathbb{M}^{(10)}$ and $\mathbb{M}^{(21)}$ have non-zero entries only in the upper right $p_0\times p_0$ corner, one can conclude
that $\mathbb{M}^{(21)}(\mathbb{M}^{(11)}-z)^{-1}\mathbb{M}^{(10)}$ has non-zero entries only in the upper right $p_0\times p_0$ corner and 
only  $(\mathbb{M}^{(11)}-z)^{-1}_{jk}$ with $|j-k|>N_1-N_0-2p_0$ give non-zero contribution to $\mathbb{M}^{(21)}(\mathbb{M}^{(11)}-z)^{-1}\mathbb{M}^{(10)}$. Hence, using (\ref{CT.2}) and (iv), we get
\begin{align}\label{CT.2'}
\|\mathbb{M}^{(21)}(\mathbb{M}^{(11)}-z)^{-1}\mathbb{M}^{(10)}\|\le
 p_0^2\|\mathbb{M}^{(21)}\|\|\mathbb{M}^{(10)}\|\frac{2q^{|N_1-N_0|/p_0-2}}{\mathcal{C}_1(1-q)}=O(\delta_0N^{\alpha}).
\end{align}
Here and below in the proof of the proposition we do not control  the exact degree of  $N^{\alpha}$ in our bounds, since
 $N^{\alpha}\delta_0=O(\delta_0^{1/2})$ for any fixed $\alpha$.
 
Denote by $\mathfrak{M}_2$ the matrix constructed from blocks $\mathbb{M}^{(\alpha\beta)},(\alpha,\beta=1,2)$, and by \\ $\widehat{\mathfrak{M}}_2=\mathrm{diag}\{\mathbb{M}^{(11)},\mathbb{M}^{(22)}\}$.
 Matrix 
$\mathfrak{M}_2-z$ is invertible for any $z\ge -\mathcal{C}_1/2$, since by condition  (iv) and above bounds we have
\begin{align}\notag
&\mathbb{M}^{(11)}-z-\mathbb{M}^{(12)}(\mathbb{M}^{(22)}-z)^{-1}\mathbb{M}^{(21)}\le
 -\mathcal{C}_1(1-q)/2+\|\mathbb{M}^{(12)}\|^2(\mathcal{C}_1/2)^{-1}
\\ \le& -\mathcal{C}_1(1-q)(1-q')/2
\Rightarrow \|\big((\mathfrak{M}_2-z\big)^{-1})^{(12)}\|=O(N^\alpha).
\label{CT.3}\end{align}
Moreover, using the resolvent identity and  (\ref{CT.2'}), (\ref{CT.3}), we get
\begin{align}\notag
\Big\|\Big(\big(\mathfrak{M}_2-z)^{-1}\big)^{(11)}-&(\mathbb{M}^{(11)}-z)^{-1}\Big)\mathbb{M}^{(10)}\Big\|=
\Big\|\Big((\mathfrak{M}_2-z)^{-1}-(\widehat{\mathfrak{M}}_2-z)^{-1}\Big)^{(11)}\mathbb{M}^{(10)}\Big\|\\
=&\|((\mathfrak{M}_2-z)^{-1})^{(12)}\mathbb{M}^{(21)}(\mathbb{M}^{(11)}-z)^{-1}\mathbb{M}^{(10)}\|=O(N^\alpha\delta_0).
\label{CT.4}\end{align}
To prove (\ref{CT.0}), consider $\Phi_\lambda$ -- an eigenvector of $\mathbb{M}$ corresponding to $\lambda\ge -\mathcal{C}_1/2$.
Set 
\[ \Phi_{0,\lambda}=\{P_j\Phi_\lambda\}_{j=1}^{N_0},\quad \Phi_{1,\lambda}=\{P_j\Phi_\lambda\}_{j=N_0+1}^{N_1},\quad
\Phi_{2,\lambda}=\{P_j\Phi_\lambda\}_{j=N_1+1}^{N_2},\quad \Phi_{2,\lambda}'=\left(\begin{array}{c}\Phi_{1,\lambda}\\ \Phi_{2,\lambda}\end{array}\right).
\]
Using the eigenvalue equation, we get 
\[
\Phi_{2,\lambda}'=-(\mathfrak{M}_2-\lambda)^{-1}\left(\begin{array}{c}\mathbb{M}^{(10)}\Phi_{0,\lambda}\\ 0\end{array}\right)=
-\left(\begin{array}{c}((\mathfrak{M}_2-\lambda)^{-1})^{(11)}\mathbb{M}^{(10)}\Phi_{0,\lambda}\\
 ((\mathfrak{M}_2-\lambda)^{-1})^{(21)}\mathbb{M}^{(10)}\Phi_{0,\lambda}\end{array}\right).
\]
But
\[ ((\mathfrak{M}_2-\lambda)^{-1})^{(21)}=- (\mathbb{M}^{(22)}-\lambda)^{-1}\mathbb{M}^{(21)}((\mathfrak{M}_2-\lambda)^{-1})^{(11)}.
\]
Hence,  (\ref{CT.2'}) and  (\ref{CT.4}) yield
\[
\Phi_{2,\lambda}'=-\left(\begin{array}{c}(\mathbb{M}^{(11)}-\lambda)^{-1}\mathbb{M}^{(10)}\Phi_{0,\lambda}\\ 0\end{array}\right)+O(\delta_0^{1/2})
\]
which implies (\ref{CT.0}), if we use (\ref{CT.2}).

\medskip

To prove (\ref{CT.1}), we set for $z>\lambda_{\max}(\mathfrak{M}_1)$
\begin{align*}
&\mathcal{M}(z)=\mathbb{M}^{(00)}-z-\mathbb{M}^{(01)}((\mathfrak{M}_2-z)^{-1})^{(11)}\mathbb{M}^{(10)},\\
&\mathcal{M}_0(z)=\mathbb{M}^{(00)}-z-\mathbb{M}^{(01)}(\mathbb{M}^{(11)}-z)^{-1}\mathbb{M}^{(10)}.
\end{align*}
By (\ref{CT.4}), we have
\begin{align}\label{CT.4'}
\mathcal{M}(z) -\mathcal{M}_0(z)= O(N^{\alpha}\delta_0).
\end{align}
Since  $\mathfrak{M}_1$ is invertible for any $z\ge \lambda_{\max}(\mathfrak{M}_1)$, we conclude that  $\mathcal{M}_0(z)$ is invertible for any $z\ge \lambda_*$.
Hence  
\[\mathcal{M}_0(\lambda_{\max}(\mathfrak{M}_1))<0,\] 
(otherwise, since
$\mathcal{M}_0(z)\to -\infty$, as $z\to \infty$, we get that $\mathcal{M}_0(z)$ has zero eigenvalue for some $z\ge \lambda_{\max}(\mathfrak{M}_1)$).
Then, taking into account that for any $z\ge \lambda_{\max}(\mathfrak{M}_1),\delta>0$
\begin{align*}
-\mathbb{M}^{(01)}(\mathbb{M}^{(11)}-z-\delta)^{-1}\mathbb{M}^{(10)}<-\mathbb{M}^{(01)}(\mathbb{M}^{(11)}-z)^{-1}\mathbb{M}^{(10)},
\end{align*}
we obtain  for any $z>\lambda_{\max}(\mathfrak{M}_1)+\delta_0^{1/2}$ 
\begin{align*}
\mathcal{M}(z) \le&\mathcal{M}_0(z)+O(N^\alpha\delta_0) \le \mathcal{M}_0(z)= \mathbb{M}^{(00)}-z+\delta_0^{1/2}-
\mathbb{M}^{(01)}(\mathbb{M}^{(11)}-z)^{-1}\mathbb{M}^{(10)}\\
\le &\mathbb{M}^{(00)}-\lambda_{\max}(\mathfrak{M}_1)-\mathbb{M}^{(01)}(\mathbb{M}^{(11)}-\lambda_{\max}(\mathfrak{M}_1))^{-1}\mathbb{M}^{(10)}
=\mathcal{M}_0(\lambda_{\max}(\mathfrak{M}_1))<0.
\end{align*}
Thus we have proved  (\ref{CT.1}). To prove (\ref{CT.1'}), we use (\ref{CT.4'}) to write
\begin{align*}
(\mathbb{M}^{N-1}g,g)=&-\frac{1}{2\pi i}\oint_{|z|=1+2k_0/N}z^{N-1}((\mathbb{M}-z)^{-1})^{(00)}g,g)\\=&
-\frac{1}{2\pi i}\oint_{|z|=1+2k_0/N}z^{N-1}(\mathcal{M}^{-1}(z)g,g)\\=&-\frac{1}{2\pi i}\oint_{|z|=1+2k_0/N}z^{N-1}(\mathcal{M}_0^{-1}(z)g,g)+O(\delta_0^{1/2})=
(\mathfrak{M}_1^{N-1}g,g)+O(\delta_0^{1/2}).
\end{align*}

$\square$

\begin{proposition}\label{p:norm} Consider a $2\times 2$  block matrix $\widetilde{\mathbb{M}}=\widetilde{\mathbb{M}}^*$ with 
blocks $\widetilde{\mathbb{M}}^{(\alpha\beta)}$, such that
\[ \widetilde{\mathbb{M}}^{(11)}<m_1,\quad \widetilde{\mathbb{M}}^{(22)}<m_2.\]
If $m_1\not=m_2$, then
\begin{align}\label{p:n.1}
\lambda_{\max}(\widetilde{\mathbb{M}})\le  \lambda^*=\max\{m_1,m_2\}
+\|\widetilde{\mathbb{M}}^{(12)}\|^2|m_2-m_1|^{-1}.
\end{align}
\end{proposition}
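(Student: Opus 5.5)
We argue directly through the quadratic form. Let $\Phi=(\Phi_1,\Phi_2)$ be a normalized eigenvector of $\widetilde{\mathbb{M}}$ for $\lambda=\lambda_{\max}(\widetilde{\mathbb{M}})$. If $\lambda\le\max\{m_1,m_2\}$ there is nothing to prove, so assume $\lambda>\max\{m_1,m_2\}\ge\min\{m_1,m_2\}$. Then both $\widetilde{\mathbb{M}}^{(11)}-\lambda$ and $\widetilde{\mathbb{M}}^{(22)}-\lambda$ are negative definite. Their inverses satisfy $\|(\widetilde{\mathbb{M}}^{(jj)}-\lambda)^{-1}\|\le(\lambda-m_j)^{-1}$.

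The plan is a Schur complement reduction. Without loss of generality take $m_1>m_2$ (the other case is symmetric). From the second block row of the eigenvalue equation,
\[
\Phi_2=-(\widetilde{\mathbb{M}}^{(22)}-\lambda)^{-1}\widetilde{\mathbb{M}}^{(21)}\Phi_1 .
\]
Substituting this into the first row shows that $0$ is an eigenvalue of the Schur complement
\[
\widetilde{\mathbb{M}}^{(11)}-\lambda-\widetilde{\mathbb{M}}^{(12)}(\widetilde{\mathbb{M}}^{(22)}-\lambda)^{-1}\widetilde{\mathbb{M}}^{(21)} .
\]
Here $\Phi_1\neq0$: if $\Phi_1=0$, then $\Phi_2\ne0$ would be an eigenvector of $\widetilde{\mathbb{M}}^{(22)}$ with eigenvalue $\lambda>m_2$, which is impossible. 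Taking the scalar product of the Schur complement equation with $\Phi_1$ and using $\widetilde{\mathbb{M}}^{(11)}<m_1$ gives
\[
0\le (m_1-\lambda)\|\Phi_1\|^2+\|\widetilde{\mathbb{M}}^{(12)}\|^2(\lambda-m_2)^{-1}\|\Phi_1\|^2 .
\]
Hence
\[
\lambda-m_1\le \|\widetilde{\mathbb{M}}^{(12)}\|^2(\lambda-m_2)^{-1}.
\]

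The final step converts this into the stated bound. Since $\lambda>m_1$, we have $\lambda-m_2>m_1-m_2=|m_1-m_2|$, so
\[
\lambda\le m_1+\|\widetilde{\mathbb{M}}^{(12)}\|^2|m_1-m_2|^{-1}=\lambda^*,
\]
which is (\ref{p:n.1}) because $m_1=\max\{m_1,m_2\}$.

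For the case $m_2>m_1$ we eliminate $\Phi_1$ instead and argue in the same way. This is the one step needing care: we must always eliminate the block with the smaller bound, so that the denominator is bounded below by the gap $|m_1-m_2|$ rather than by something that could be small. The hypothesis $m_1\ne m_2$ is used only at this point, to make $\lambda^*$ finite. No other obstacle is expected; in infinite dimensions the same argument works with approximate eigenvectors, or by applying it to the quadratic form $(\widetilde{\mathbb{M}}\Phi,\Phi)$ on finite-dimensional truncations.
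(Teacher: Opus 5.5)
Your proof is correct and is essentially the paper's argument. Both take the Schur complement onto the block with the larger bound, controlled by $\|(\lambda-\widetilde{\mathbb{M}}^{(22)})^{-1}\|\le(\lambda-m_2)^{-1}\le|m_1-m_2|^{-1}$, with the roles swapped when $m_2>m_1$.

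The difference is only in phrasing. The paper shows that $\widetilde{\mathbb{M}}-\lambda$ is invertible for every $\lambda>\lambda^*$, which avoids any eigenvector or approximate-eigenvector discussion in infinite dimensions; you instead test the Schur complement on an eigenvector. Also, your intermediate inequality is equivalent to the symmetric bound $(\lambda-m_1)(\lambda-m_2)\le\|\widetilde{\mathbb{M}}^{(12)}\|^2$, so eliminating either block works and that step is not actually delicate.
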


\textit{Proof of Proposition \ref{p:norm}}. Assume that $m_1>m_2$. Then bound (\ref{p:n.1}) follows from the inequality valid for any $\lambda> \lambda_*$ :
\[
\widetilde{\mathbb{M}}^{(11)}-\lambda-\widetilde{\mathbb{M}}^{(12)}(\widetilde{\mathbb{M}}^{(22)}-\lambda)^{-1}\widetilde{\mathbb{M}}^{(21)}\le m_1-\lambda
+\|\widetilde{\mathbb{M}}^{(12)}\|^2|m_2-m_1|^{-1}=\lambda_*-\lambda<0
\]
Hence, the matrix in the l.h.s. is invertible, and since $\widetilde{\mathbb{M}}^{(22)}-\lambda$ is also invertible, using the inverse formula for the block matrices, we conclude  that $\widetilde{\mathbb{M}}-\lambda$ is invertible for 
$\lambda> \lambda_*$.

If $m_2>m_1$ we can use that $\widetilde{\mathbb{M}}-\lambda$ is invertible if
there exist inverse for the matrices   
\[\widetilde{\mathbb{M}}^{(11)}-\lambda\quad \hbox{and}\quad
\widetilde{\mathbb{M}}^{(22)}-\lambda-\widetilde{\mathbb{M}}^{(21)}(\widetilde{\mathbb{M}}^{(11)}-\lambda)^{-1}\widetilde{\mathbb{M}}^{(12)},
\]
and then obtain the same bound.
$\square$

\begin{lemma}\label{l:K_L}  Let $\mathbb{K}$ be defined in (\ref{bbK}). Then there is a finite $s>0$ such that
\begin{align}\label{K_L.1}
\tilde\Theta(z_1,z_2)=&(\mathbb{K}^{N-1}g_s,g_s)+o(1),\quad g_s=\mathcal{P}_{\le s}g,\quad \mathcal{P}_{\le s}:=\sum_{|k|\le s}\mathcal{P}_{\bar k},
\end{align}
where $\mathcal{P}_{\bar k}$ is an orthogonal projection on $\Psi_{*\bar k}$.
\end{lemma}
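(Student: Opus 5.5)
We plan to reduce $\tilde\Theta=(\mathcal{K}_\zeta^{N-1}g,g)$ to $(\mathbb{K}^{N-1}g_s,g_s)$ in three successive truncations. First we cut off the complement of $\Omega_W$. Next we cut off the Hermite levels $|\bar k|>L$ in the $R$-variable. Finally we cut off the high representations $\ell>2M$ in the $U$-variable.

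At each step we use the same two tools. Proposition \ref{p:norm} bounds $\lambda_{\max}$ of a two-block operator whose diagonal blocks have well-separated upper bounds. A contour-integral/Schur-complement comparison, as in the proof of (\ref{CT.1'}), then controls $((\cdot)^{N-1}g,g)$. Throughout we write $\mathcal{K}_\zeta=1+\mathbb{M}$ after subtracting $1$, so that all spectral statements concern the generator $\mathbb{M}$.

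\textbf{Step 1.} Consider the splitting $\mathbb{P}_W$, $1-\mathbb{P}_W$. By Lemma \ref{l:conc}, $(1-\mathbb{P}_W)\mathcal{K}_\zeta(1-\mathbb{P}_W)\le 1-C_1\log W/W$. On the other side, $\mathbb{P}_W\mathcal{K}_\zeta\mathbb{P}_W\le 1+C/N$, which follows from Lemmas \ref{l:A}, \ref{l:KPsi} and \ref{l:shift}. The off-diagonal block is exponentially small, because the kernel (\ref{K}) contains $e^{-W^2\Tr(Q_1-Q_2)(Q_1-Q_2)^*}$ and $\Omega_W$ is defined with a $\log W/W^{1/2}$ margin. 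Proposition \ref{p:norm} then gives $\|\mathcal{K}_\zeta\|\le 1+C/N$.

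Writing the resolvent as $(\mathcal{K}_\zeta-z)^{-1}$ on $|z|=1+2k_0/N$, we obtain $\tilde\Theta=((\mathbb{P}_W\mathcal{K}_\zeta\mathbb{P}_W)^{N-1}g,g)+o(1)$. Here we use that $g$ is Gaussian-concentrated at $u_*U$, so $(1-\mathbb{P}_W)g$ is $O(W^{-\infty})$, while $\|g\|\le CW$ and $N\sim W^2$.

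\textbf{Step 2.} Split $\mathcal{H}=\mathcal{P}_L\mathcal{H}\oplus(1-\mathcal{P}_L)\mathcal{H}$. Lemma \ref{l:2} gives $(1-\mathcal{P}_L)\mathcal{K}_\zeta(1-\mathcal{P}_L)\le 1-C_2L/W$. We then apply Proposition \ref{p:CT} with blocks $P_k=\mathcal{P}_{\bar k}$ grouped by $|\bar k|$, taking $N_0=s$ and $N_1=L$. The hypotheses are checked as follows.
\begin{itemize}
\item Condition (ii), finite bandwidth up to $O(W^{-p-1})$ errors, follows from (\ref{exp_dec}).
\item Condition (iii) follows from Lemma \ref{l:A}: the diagonal entries of $\mathbb{M}$ on level $|\bar k|$ are $\approx\lambda_*^{|\bar k|}-1\le -c|\bar k|/W$, and for $|\bar k|>s$ the off-diagonal part is at most $q\,\mathcal{C}_1/2$ with $\mathcal{C}_1\sim s/W$. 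The off-diagonal part is $O(W^{-3/2})$ by (\ref{exp_dec}), so $s$ can be a fixed large constant.
\item Condition (iv) follows from Lemma \ref{l:2}.
\end{itemize}
Since $L-s\sim\log^2W$, we get $\delta_0=q^{(L-s)/p_0}=o(N^{-k})$ for every $k$. Relation (\ref{CT.0}) shows that the components of $g$ at levels $|\bar k|>s$ are negligible after $N$ iterations; this is how we replace $g$ by $g_s$.

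\textbf{Step 3.} Remove the representations $\ell>2M$ in the same way. Bound (\ref{b_K_R}) gives $\mathcal{E}^{(\ell)}K\le 1-CW^{-1/2}\log^2W$ for $\ell>W^{3/4}\log W$. For $2M<\ell\le W^{3/4}\log W$, (\ref{KPsi.1}) together with (\ref{lam_l}) gives the gap $\lambda_\ell-1\le -c\ell^2/W^2\le -cL_0^2W^{-3/2}$. The coupling between the $\ell$ blocks comes only from $\zeta$-terms and is $O(\epsilon)$ in the three-term recursion (\ref{rec_Leg}), so it has bandwidth $1$ in $\ell$. Moreover $g$ depends on $U$ only through $L_{U^*}$, and its components at large $\ell$ decay like $e^{-c\ell^2/N}$.

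Applying Proposition \ref{p:CT} again, with the blocks indexed by $\ell$ and $\ell\in(M,2M]$ serving as the buffer, yields the error $q^{M}=e^{-cW^{1/4}}$.

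\textbf{Main obstacle.} The hard part is verifying condition (iii) of Proposition \ref{p:CT} in Step 3. For $\ell$ between $M$ and $2M$ the gap $\mathcal{C}_1\sim L_0^2W^{-3/2}$ is tiny. It must dominate both the $\zeta$-induced coupling, which is $O(\epsilon W^{-1})\sim W^{-3/2}$ after the shift $R\to R-\epsilon\mathcal{M}(U)$ of Lemma \ref{l:shift}, and the error terms in (\ref{KPsi.1}). Our plan is to conjugate by the shift $\Psi\mapsto\Psi(R-\epsilon\mathcal{M}(U))$, so that by (\ref{cor:1.0}) the coupling becomes $N^{-1}\nu$, of order $W^{-2}$. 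We would then choose $L_0$ large enough that $q<1$.
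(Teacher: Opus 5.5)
Your toolbox matches the paper's: Proposition \ref{p:CT} applied repeatedly, Proposition \ref{p:norm}, and (\ref{CT.0}) with a spectral splitting to pass from $g$ to $g_s$. Your Step 3 is essentially the paper's last truncation. The way you chain the truncations, however, has two genuine gaps.

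\textbf{First gap: the global norm bound is assumed, not proved.} In Step 1 you assert $\|\mathcal{K}_\zeta\|\le 1+C/N$ on the whole space and attribute it to Lemmas \ref{l:A}, \ref{l:KPsi} and \ref{l:shift}. Those lemmas only describe $\mathcal{K}_\zeta$ on subspaces of type $\mathcal{H}_L\otimes\mathcal{E}_{2M}$, so they do not give this bound. In fact this bound is the main content of the paper's proof. Everything downstream depends on it, in particular the hypothesis $\lambda_{\max}(\mathfrak{M}_1)\le k_0/N$ in (\ref{CT.1'}).

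The bound has to be built innermost-first, as in (\ref{n.1}). One first proves $\mathbb{K}\le 1+k_0/N$ by Proposition \ref{p:norm}, splitting into $\mathcal{P}_{\bar 0}$ and $1-\mathcal{P}_{\bar 0}$. The cross term then costs $\|\widetilde{\mathbb{M}}^{(12)}\|^2\,W\sim(\epsilon/W)^2W=1/N$. One then propagates the bound outward through each truncation via (\ref{CT.1}).

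Step 1 is also unnecessary, since (\ref{Theta1}) already involves the operator restricted by (\ref{R}). Its claim is false as well. The factor $e^{-W^2\Tr(Q_1-Q_2)(Q_1-Q_2)^*}$ only forces $|Q_1-Q_2|\lesssim W^{-1}$, and $e^{f}\approx 1$ near $\partial\Omega_W$. So pairs straddling the boundary make $\mathbb{P}_W\mathcal{K}_\zeta(1-\mathbb{P}_W)$ of order one in norm, not exponentially small.

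\textbf{Second gap: the order of the truncations.} In Step 2 you apply Proposition \ref{p:CT} in the Hermite index on the full $U$-space, before any cutoff in $\ell$. For $\ell>CW^{3/4}\log W$ the only available input is the crude norm bound (\ref{b_K_R}) on $\mathcal{E}^{(\ell)}K_{R_1,R_2}$. That bound says nothing about the band structure (ii) or the uniform smallness of inter-level blocks (iii) in $\bar k$, and Lemma \ref{l:2} only controls $1-\mathcal{P}_L$. Even where (\ref{Kt.0}) applies, the $R$-dependence of $\tilde\lambda_\ell$ couples adjacent Hermite levels with a strength of order $\ell^2W^{-5/2}$, which grows with $\ell$. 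So the smallness required in (iii) is intrinsically an $\ell$-restricted statement.

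The paper handles this as follows:
\begin{itemize}
\item It first replaces $F_\zeta$ by $F_0(R-\epsilon\mathcal{M})$, so that the coupling between $\ell$-blocks is banded and of size $O(\epsilon/W)$.
\item It then removes $\ell>2M'$, with $M'=[M_0W^{3/4}\log W]$, using (\ref{b_K_ll}). That bound comes from Proposition \ref{p:norm} applied to each $\widetilde{\mathcal{K}}_0^{(\ell\ell)}$ with only the two-block split $\mathcal{P}_L$, $1-\mathcal{P}_L$, combining (\ref{KPsi.1}), (\ref{b_K_R}), (\ref{exp_dec}) and Lemma \ref{l:2}.
\item Only then does it truncate in $\bar k$, on $\mathcal{E}_{2M'}$, via (\ref{exp_K.1}).
\item Finally it cuts $\ell$ down to $2M$.
\end{itemize}

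Your ``main obstacle'' is not really one. With $q=q'=1/2$, the gap $CL_0^2W^{-3/2}$ beats the coupling $C\epsilon/W=CW^{-3/2}$ once $L_0$ is large. Conjugating by the $U$-dependent shift is therefore unnecessary, and it would also destroy the $\mathcal{P}_{\bar k}\otimes\mathcal{E}^{(\ell)}$ block structure that Proposition \ref{p:CT} relies on.
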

\textit{Proof.} We start from the proof of the inequality
\begin{align}\label{n.0}
\|\mathcal{K}_\zeta\|\le 1+k/N.
\end{align}
Recall that the operator of multiplication
by $F_\zeta(R,U)$ has the form (\ref{f_1}). Observe that the remainder in (\ref{f_1}) looks like
\[
F_0(R-\epsilon\mathcal{M})(1+O(N^{-1}))
\]
Hence, it is sufficient to prove (\ref{n.1}) for  the operator 
$\widetilde{\mathcal{K}}_{\zeta}$ which  corresponds to $\mathcal{K}_{\zeta}$ with  $F_\zeta(R,U)$ replaced by $F_0(R-\epsilon\mathcal{M})$.

We consider $M'=[M_0W^{3/4}\log W ]$, $L=[M_0\log^2W]$, $M=[L_0W^{1/4}]$  with sufficiently big fixed $M_0,L_0$, and use Proposition \ref{p:CT} to prove the
inequalities
\begin{align}\notag
&\|\widetilde{\mathcal{K}}_\zeta\|\le \|\mathcal{E}_{2M'}\widetilde{\mathcal{K}}_\zeta\mathcal{E}_{2M'}\|+1/N\le 
 \|\mathcal{P}_L\mathcal{E}_{2M'}\widetilde{\mathcal{K}}\mathcal{E}_{2M'}\mathcal{P}_L\|+2/N\\
 &\le  
 \|\mathcal{P}_L\mathcal{E}_{2M}\mathcal{K}_\zeta\mathcal{E}_{2M}\mathcal{P}_L\|+3/N\le 1+k/N.
\label{n.1}\end{align}
To use  Proposition \ref{p:CT}  for the proof of the first inequality, we notice that if $\zeta=0$, then for each $\ell=0,1,\dots$ the space
$\mathcal{H}\otimes\mathcal{E}^{(\ell)}$ is invariant with respect to $\mathcal{K}_{0}$.
Moreover,  since multiplication by $f_1$ can transform $h\in \mathcal{E}^{(\ell)}$ into a function which has non-zero components only in 
$\mathcal{E}^{(\ell-1)},\mathcal{E}^{(\ell)},\mathcal{E}^{(\ell+1)}$, the matrix $\widetilde{\mathcal{K}}_{\zeta,L}$ 
is ``block three-diagonal"  in the basis of $\mathcal{H}_{L}\otimes\mathcal{E}^{(\ell)}$. 
Set 
\[
\widetilde{\mathcal{K}}_{\zeta}^{(\ell\ell')}=\mathcal{E}^{(\ell)}\widetilde{\mathcal{K}}_{\zeta}\mathcal{E}^{(\ell')},
\]
We would like to apply Proposition \ref{p:CT} to the matrix $\widetilde{\mathcal{K}}_{\zeta}-I$ written as a three diagonal matrix with entries
$\widetilde{\mathcal{K}}_{\zeta}^{(\ell\ell')}$ and   $N_0=M'$, $N_1=2M'$ and $N_2=\infty$.
Assume that we prove the inequality
\begin{align}\label{b_K_ll}
\widetilde{\mathcal{K}}_{\zeta}^{(\ell\ell)}\le 1-C_1L/W,\quad \ell>M'.
\end{align}
Then, since by the forms of the operators $\widetilde{\mathcal{K}}_{\zeta}$ and $\widetilde{\mathcal{K}}_{0}$ (see (\ref{A}), (\ref{pert}), and (\ref{ti-f}))
\begin{align}\label{K_L.2}
\widetilde{\mathcal{K}}_{\zeta}=\widetilde{\mathcal{K}}_{0}+O(\varepsilon/W)\Rightarrow \| \widetilde{\mathcal{K}}_{\zeta}^{(\ell\ell\pm 1)}\|\le C\epsilon/W,
\end{align}
one can check easily that conditions (i)-(v) are fulfilled, since
for the matrix $\mathbb{M}=\widetilde{\mathcal{K}}_{\zeta}-I$ with \\ $\mathcal{C}_1=-C_1L/W$, $q=\epsilon$,  $q'=\epsilon^2$ we have
\[
2\|\widetilde{\mathcal{K}}_{\zeta}^{(\ell\ell+1)}\|\mathcal{C}_1^{-1}\le C\epsilon/\log W\le \epsilon=q,\quad
\|\widetilde{\mathcal{K}}_{\zeta}^{(\ell\ell+1)}\|^2(2/C_1)^2\le C(\epsilon/\log W)^2(1-q)^{-1}\le \epsilon^2=q'.
\]

To prove (\ref{b_K_ll}), we
apply Proposition \ref{p:norm} to the matrix $\widetilde{\mathbb{M}}_{\ell}=\widetilde{\mathcal{K}}_{0}^{(\ell\ell)}$  considered as a block matrix 
with 
\[
\widetilde{\mathbb{M}}^{(11)}_{\ell}=\mathcal{P}_L\widetilde{\mathbb{M}}_{\ell}\mathcal{P}_L,\quad\widetilde{\mathbb{M}}_{\ell}^{(12)}=
\mathcal{P}_L\widetilde{\mathbb{M}}_{\ell}(1-\mathcal{P}_L),
\quad\widetilde{\mathbb{M}}_{\ell}^{(22)}=(1-\mathcal{P}_L)\widetilde{\mathbb{M}}_{\ell}(1-\mathcal{P}_L)
\]
with $\mathcal{P}_L$ of (\ref{P_L}). We use bounds 
\begin{align*}
&\widetilde{\mathbb{M}}_\ell^{(11)}\le 1-C_1W^{-1/2}\log^2W,\\
 & \|\widetilde{\mathbb{M}}_\ell^{(12)}\|\le CW^{-3/2}\log^3 W,
\quad\widetilde{\mathbb{M}}_\ell^{(22)}\le 1-CL/W,
\notag\end{align*}
where the first one follows from (\ref{KPsi.1}) and (\ref{b_K_R}), the second -- from (\ref{exp_dec}), and the last one -- from Lemma \ref{l:2}. 
Hence we get  (\ref{b_K_ll}) for $\widetilde{\mathcal{K}}_{0}^{(\ell\ell)}$. Then (\ref{K_L.2}) implies (\ref{b_K_ll}) for $\widetilde{\mathcal{K}}_{\zeta}^{(\ell\ell)}$.

\medskip

At the second step  we  use Proposition \ref{p:CT} in order to prove the second inequality in (\ref{n.1}).
We consider $\widetilde{\mathcal{K}}_\zeta-I$ as a matrix constructed from blocks
\[
\widetilde{\mathcal{K}}_{\zeta,\bar k\bar k'}=\mathcal{P}_{\bar k}\mathcal{E}_{2M'}\widetilde{\mathcal{K}}_\zeta\mathcal{E}_{2M'}\mathcal{P}_{\bar k'}.
\]
Using an expansion  (\ref{exp_dec}) $\widetilde{\mathcal{K}}_\zeta$ and expanding $F_\zeta-F_0$ with respect to $\zeta$ one can get
the analogue of (\ref{exp_dec}) for $\widetilde{\mathcal{K}}_\zeta$, valid for any integer $p$
\begin{align}\label{exp_K.1}
\|\widetilde{\mathcal{K}}_{\zeta,\bar k\bar k'}\|\le \tilde C_p\big(\min\{\epsilon/W,W^{-|\bar j-\bar k|/2}\}+W^{-p-1}\big),\quad 
\bar j\not=\bar k,\quad \min\{|\bar j|,|\bar k|\}\le L.
\end{align}
Hence $\mathcal{E}_{2M'}\widetilde{\mathcal{K}}_\zeta\mathcal{E}_{2M'}$ can be written as 
\[\mathcal{E}_{2M'}\widetilde{\mathcal{K}}_\zeta\mathcal{E}_{2M'}=\mathcal{E}_{2M'}\widetilde{\mathcal{K}}_\zeta'\mathcal{E}_{2M'}+O(N^{-3}),
\] 
where  matrix $\mathcal{E}_{2M'}\widetilde{\mathcal{K}}'_\zeta\mathcal{E}_{2M'}\mathcal{P}_L$ has a finite number of non-zero  diagonals. 
Thus conditions of Proposition \ref{p:CT} are fulfilled for
$\mathbb{M}=\mathcal{E}_{2M'}\widetilde{\mathcal{K}}'_\zeta\mathcal{E}_{2M'}$ 
and we can take $N_0=1$, $N_1=L$, 
 $\mathcal{C}_1=C_1/W$, $q=C_2\epsilon$ with  sufficiently big $C_2$,  $q'=C_2\epsilon^2$, since (\ref{exp_K.1})  yields
 \[
\|\mathbb{M}^{(11)}-D\|\le C\epsilon/W,\quad \mathbb{M}^{(12)}\le -C\epsilon/W. 
 \]

Finally, we apply Proposition \ref{p:CT}  at the third time to the matrix 
$\mathbb{M}=\mathcal{P}_L\mathcal{E}_{2M'}(\widetilde{\mathcal{K}}_\zeta-I)\mathcal{E}_{2M'}\mathcal{P}_L$ in order to replace $M'=[M_0W^{3/4}]$
by $M=[L_0 W^{1/4}]$. Then we take $N_0=M$, $N_1=2M$, $q=q'=\frac{1}{2}$ and recall that by  for $\ell<2M$ by (\ref{KPsi.1}) and (\ref{lam_l})
\[
\mathcal{P}_L\widetilde{\mathcal{K}}_\zeta^{\ell\ell}\mathcal{P}_L\le 1-(2u_*)^{-2}(\ell/W)^2\le 1-CL_0^2W^{-3/2},\quad
\|\mathcal{P}_L\widetilde{\mathcal{K}}_\zeta^{\ell\ell+1}\mathcal{P}_L\|\le C\epsilon/W=CW^{-3/2}.
\]
Now to finish the proof of (\ref{n.0}) it suffices to check that
\[
\mathbb{K}\le 1+k_0/N.\]
The last inequality can be obtained from Proposition \ref{p:norm} if we consider $\widetilde{\mathbb{M}}=\mathbb{K}$
as a block matrix with diagonal blocks
\begin{align*}
\widetilde{\mathbb{M}}^{(11)}
=\mathcal{P}_{\bar 0}\widetilde{\mathbb{M}}\mathcal{P}_{\bar 0},\quad \widetilde{\mathbb{M}}^{(22)}
=(I-\mathcal{P}_{\bar 0})\widetilde{\mathbb{M}}(1-\mathcal{P}_{\bar 0}),
\end{align*}
where $\mathcal{P}_{\bar 0}$ is an orthogonal projection on $\Psi_{*\bar 0}$ and prove the bounds
\begin{align*}
\widetilde{\mathbb{M}}^{(11)}\le 1+k_1/N,\quad
\| \widetilde{\mathbb{M}}^{(12)}\|\le C\epsilon/W=C'W^{-3/2},\quad
\widetilde{\mathbb{M}}^{(22)}\le 1-C/W.
\end{align*}
The first inequality  follows from (\ref{KPsi0}) and (\ref{diff_eig}).
The second and the third inequalities follows from
(\ref{exp_K.1}) and (\ref{KPsi.1}), (\ref{diff_eig}). Thus we have finished the proof of (\ref{n.1}).

\medskip

The next step is to use (\ref{CT.1'}) to obtain (\ref{K_L.1}) with $g_s$ replaced
by $g$. For $\mathcal{E}_{2M'}\widetilde{\mathcal{K}}_\zeta\mathcal{E}_{2M'}$ we can do this directly, since conditions of (\ref{CT.1})
are obviously valid. But  for replacement $\mathcal{E}_{2M'}\widetilde{\mathcal{K}}_\zeta\mathcal{E}_{2M'}$ with 
$\mathcal{P}_L\mathcal{E}_{2M'}\widetilde{\mathcal{K}}_\zeta\mathcal{E}_{2M'}\mathcal{P}_L$, conditions of (\ref{CT.1}) are not fulfilled
initially, since $g$ has non-zero components for $|\bar k|> N_0=1$. Hence, 
we apply (\ref{CT.0}) first in order to show that there is a fixed $s$ such that one can replace $g$ by $g_s$.
Since $q=C_2\epsilon=C_2'W^{-1/2}$,  (\ref{CT.0}) implies  that there exists  fixed $s>0$ such that
\[\|(1-P_s)E_{1-C_1/2W}^{(M')}(\mathcal{E}_{2M'}\widetilde{\mathcal{K}}_\zeta\mathcal{E}_{2M'})\|\le W^{-3}.
\]
where $E^{(M')}_\lambda:=E_\lambda(\mathcal{E}_{2M'}\widetilde{\mathcal{K}}_\zeta\mathcal{E}_{2M'})$ is a spectral projection of 
$\mathcal{E}_{2M'}\widetilde{\mathcal{K}}_\zeta\mathcal{E}_{2M'}$. Then
 \begin{align*}
((\mathcal{E}_{2M'}&\widetilde{\mathcal{K}}_\zeta\mathcal{E}_{2M'})^{N-1}g,g)=\Big(\int_{\lambda>1-C_1/2W}+\int_{\lambda\le 1-C_1/2W}\Big) \lambda^{N-1}
d(E^{(M')}_\lambda g,g)\\=&
\int_{\lambda>1-C_1/2W}\lambda^{N-1}d(E^{(M')}_\lambda (P_s+1-P_s)g,(P_s+1-P_s)g)+O(e^{-cN/W})\\
=&((\mathcal{E}_{2M'}\widetilde{\mathcal{K}}_\zeta\mathcal{E}_{2M'})^{N-1} P_s g,P_s g)+O(\|(1-P_s)E_{1-C_1/2W}^{(M')}\|\|g\|^2)\\ =&
((\mathcal{E}_{2M'}\widetilde{\mathcal{K}}_\zeta\mathcal{E}_{2M'})^{N-1} g_s,g_s )+O(W^{-1}).
\end{align*}  
Then choosing $N_0=1+s$
we can apply the second line of (\ref{CT.1}) to obtain the analogue  of (\ref{K_L.1}) with $\mathbb{K}$ replaced by 
$\mathcal{P}_L\mathcal{E}_{2M'}\widetilde{\mathcal{K}}_\zeta\mathcal{E}_{2M'}\mathcal{P}_L$ and $g$ replaced by $g_s$. At the last step,
going from $\mathcal{P}_L\mathcal{E}_{2M'}\widetilde{\mathcal{K}}_\zeta\mathcal{E}_{2M'}\mathcal{P}_L$  to $\mathbb{K}$, one can use
the second line of (\ref{CT.1})  directly.

$\square$

\begin{lemma}\label{l:P} Denote
$\mathfrak{P}_{\epsilon}^{(1)}$ the orthogonal projection on the subspace $\mathrm{Lin}\{\Psi_{\ell,k,\epsilon}\}_{\ell\le M,|k|\le \ell}$
 defined by (\ref{Psi_eps}) for $\Psi_{\ell,k}$ of (\ref{de_Psi}). 
  Then
\begin{align}\label{l:P.1}
(\mathbb{K}^{N-1}g_s,g_s)=
((\mathfrak{P}_{\epsilon}^{(1)}\mathbb{K}\mathfrak{P}_{\epsilon}^{(1)})^{N-1} g_0,g_0)+o(1),\quad g_0=\mathcal{P}_{\bar 0}g.
\end{align}
\end{lemma}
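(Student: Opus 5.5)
We treat $\mathbb{K}$ as a $2\times2$ block operator with respect to $\mathfrak{P}^{(1)}_\epsilon$ and $1-\mathfrak{P}^{(1)}_\epsilon$, and reduce the claim to a Schur complement argument of the type already used in Lemma \ref{l:la_max} and Proposition \ref{p:CT}. The first step is to collect the needed bounds. For the top block, (\ref{cor:1.2}) of Lemma \ref{l:shift} gives
\[
\mathfrak{P}^{(1)}_\epsilon\mathbb{K}\mathfrak{P}^{(1)}_\epsilon=\mathrm{diag}\{\lambda_\ell\}+N^{-1}\hat\nu+O(\epsilon W^{-2}).
\]
This has norm at most $1+k/N$ and spectrum in $[1-C(M/W)^2,\,1+k/N]$. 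Since $\epsilon W^{-2}\sim W^{-5/2}$ and $N\sim W^2$, the error accumulated over $N$ steps is $O(W^{-1/2})=o(1)$. The family $\{\Psi_{\ell,k,\epsilon}\}$ is almost orthonormal, with Gram matrix $I+O(\epsilon W^{-1/2})$, because of (\ref{de_Psi}) and the invariance argument (\ref{int_2}). One may therefore orthonormalize it, and the Gram correction is absorbed into the same error.

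For the off-diagonal block, (\ref{cor:1.0}) shows that
\[
(1-\mathfrak{P}^{(1)}_\epsilon)\mathbb{K}\,\Psi_{\ell,k,\epsilon}=(1-\mathfrak{P}^{(1)}_\epsilon)\big(e^{2\nu/N}-1\big)\lambda_\ell\Psi_{\ell,k,\epsilon}+O(\epsilon W^{-3/2}).
\]
Multiplication by $\nu$ maps $\mathcal{E}^{(\ell)}$ into $\mathcal{E}^{(\ell-2)}\oplus\dots\oplus\mathcal{E}^{(\ell+2)}$, so the first term stays in $\mathrm{Ran}\,\mathfrak{P}^{(1)}_\epsilon$ except near $\ell\approx M$. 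Hence $\|\mathbb{K}^{(12)}\|\le C\epsilon W^{-3/2}=O(W^{-2})$, plus boundary terms at $\ell\sim M$. For the bottom block we use the third bound of (\ref{de_la.1}) together with (\ref{diff_eig}): the complement of $\Psi_{*\bar0}\otimes\mathcal{E}_M$, shifted by $\epsilon\mathcal{M}$, satisfies $\mathbb{K}^{(22)}\le 1-C/W$. Again we argue via (\ref{cor:1.0}), since shifting does not change the $R$-spectral gap of $\mathcal{A}$.

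The second step is to apply the contour integral representation, as in the proof of (\ref{CT.1'}), on the circle $|z|=1+2k/N$. We write
\[
((\mathbb{K}-z)^{-1})^{(11)}=\mathbb{X}(z)^{-1},\qquad \mathbb{X}(z)=\mathbb{K}^{(11)}-z-\mathbb{K}^{(12)}(\mathbb{K}^{(22)}-z)^{-1}\mathbb{K}^{(21)}.
\]
The correction term has norm $O(W^{-4}\cdot W)=O(W^{-3})$, which is negligible relative to $N^{-1}$. The contour integral then gives
\[
(\mathbb{K}^{N-1}\mathfrak{P}^{(1)}_\epsilon g_s,\mathfrak{P}^{(1)}_\epsilon g_s)=((\mathfrak{P}^{(1)}_\epsilon\mathbb{K}\mathfrak{P}^{(1)}_\epsilon)^{N-1}g_0,g_0)+o(1).
\]
The spectral contribution from the region $|z|<1-C/W$ decays like $e^{-cN/W}$. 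The mixed terms $(\mathbb{K}^{N-1}g_s',\cdot)$ with $g_s'=(1-\mathfrak{P}^{(1)}_\epsilon)g_s$ are handled by the analogue of (\ref{CT.0}): the spectral projection of $\mathbb{K}$ onto $\lambda>1-C/(2W)$ is close to $\mathfrak{P}^{(1)}_\epsilon$ up to $O(W\|\mathbb{K}^{(12)}\|)$.

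The third step is to identify the initial vector. Because $g=e^{f_\zeta}$ and, by (\ref{pert}), $g$ is essentially $\Psi_{*\bar0}(R-\epsilon\mathcal{M}(U))$ times a normalization in $U$, the projection $\mathfrak{P}^{(1)}_\epsilon g_s$ coincides with $g_0$ up to an error. This error must be $o(\|g\|^{-1})$, or at least $o(1)$ after the scaling implicit in the normalization of $\tilde\Theta$. One verifies this by expanding the shift $\Psi_{*\bar0}(R-\epsilon\mathcal{M})=\Psi_{*\bar0}(R)+\epsilon\,\partial\Psi_{*\bar0}\cdot\mathcal{M}+\dots$, and noting that the components along $\Psi_{*\bar k}$ with $|\bar k|\ge1$ are exactly those captured by the shift.

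The main obstacle is the off-diagonal bound near the cutoff $\ell\sim M$ and the interpretation of $g_0$ versus the shifted vector. Since $\nu$ couples $\ell=M$ to $\ell=M+1,M+2$, which lie outside $\mathfrak{P}^{(1)}_\epsilon$ but inside $\mathcal{E}_{2M}$, the coupling there is only of size $N^{-1}$. I would handle this by a further application of Proposition \ref{p:CT}, with $N_0=M$, $N_1=2M$ and $q=1/2$, using the decay $\lambda_\ell\le 1-CL_0^2W^{-3/2}$ for $\ell\ge M$. This shows that the high-$\ell$ sector contributes $O(q^{M})$, exactly as in the last part of the proof of Lemma \ref{l:K_L}.
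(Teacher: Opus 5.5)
Your route is the paper's. You decompose $\mathbb{K}$ with respect to $\mathfrak{P}^{(1)}_\epsilon$, take the three block bounds from (\ref{cor:1.2}), (\ref{cor:1.0}) and Lemma \ref{l:K_L}, run a Schur-complement contour argument, and finish with $\mathfrak{P}^{(1)}_\epsilon g_s\approx g_0$. Integrating over the full circle $|z|=1+2k/N$ instead of the paper's small lens $\mathcal{L}$ is a legitimate variant. The Schur correction costs $O(NW^{-3})=O(W^{-1})$ once $\|\mathbb{X}^{-1}\|\,\|(\mathbb{K}^{(11)}-z)^{-1}\|$ is integrated along the circle, and the $(22)$ part is killed by the decay of $(\mathbb{K}^{(22)})^{N-1}$ rather than by analyticity.

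The first error is in your third step, which rests on a false picture of $g$. By (\ref{pert}), $g=e^{f_\zeta}$ is a Gaussian in $R$ of width $\asymp W^{1/2}$, which is why $\|g\|\asymp W$ in (\ref{new_g}). So $g$ is essentially constant on the $O(1)$ scale of $\Psi_{*\bar 0}$, and it is not close to a multiple of $\Psi_{*\bar 0}(R-\epsilon\mathcal{M}(U))$. Its components along $\Psi_{*\bar k}$, $|\bar k|\ge 1$, are of order one; this is exactly why Lemma \ref{l:K_L} has to keep $g_s$ with some $s>0$. Hence $g_s-g_0$ is an $O(1)$ vector, not an $O(\epsilon)$ effect ``captured by the shift'', and the verification you sketch would fail. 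The correct argument, which is the paper's, uses three facts:
$\|g_s\|=O(1)$, $\mathcal{P}_{\bar 0}g_s=g_0$, and $\|\mathfrak{P}^{(1)}_\epsilon-\mathcal{P}_{\bar 0}\|=O(W^{-1/2})$, the last by (\ref{de_Psi}) and a first-order expansion of the $O(\epsilon)$ shift.
Together they give $\|\mathfrak{P}^{(1)}_\epsilon g_s-g_0\|=O(W^{-1/2})$. Since $\|g_s\|=O(1)$ even though $\|g\|\asymp W$, the $o(1)$ in (\ref{l:P.1}) is absolute, and no accuracy $o(\|g\|^{-1})$ is needed.

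The second problem is that your ``main obstacle'' is misplaced and your remedy does not do what you claim. The coupling of $\ell=M$ to $\ell=M+1$ through $N^{-1}\nu$ has size $N^{-1}\asymp\epsilon W^{-3/2}$ when $W\asymp N^{1/2}$. It is therefore already covered by $\|\mathbb{K}^{(12)}\|\le C\epsilon W^{-3/2}$ and costs nothing. Proposition \ref{p:CT} with $N_0=M$, $N_1=2M$ reduces to the matrix built from all blocks with $\ell\le N_1=2M$, i.e.\ to $\mathbb{K}$ itself; that is exactly its third use in Lemma \ref{l:K_L}. So it cannot remove the sector $M<\ell\le 2M$.

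Where that sector really matters is the bottom block. Take $\Psi_{\ell,k,\epsilon}$ with $\ell$ well inside $(M,2M)$; Lemma \ref{l:la_max} and (\ref{cor:1.2}) apply to them after replacing $L_0$ by $2L_0$. These vectors are orthogonal to $\mathrm{Ran}\,\mathfrak{P}^{(1)}_\epsilon$, lie in the domain of $\mathbb{K}$ up to negligible errors, and satisfy $(\mathbb{K}\Psi_{\ell,k,\epsilon},\Psi_{\ell,k,\epsilon})=\lambda_\ell+O(N^{-1})\ge 1-CL_0^2W^{-3/2}$. So your claim $\mathbb{K}^{(22)}\le 1-C/W$ is false there; the $1-C/W$ bound of Lemma \ref{l:K_L} is for the complement of $\Psi_{*\bar 0}\otimes\mathcal{E}_{2M}$.

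Your full-circle set-up survives with the weaker bound $\mathbb{K}^{(22)}\le 1-cW^{-3/2}$:
\begin{itemize}
\item $\|(\mathbb{K}^{(22)})^{N-1}\|\le e^{-cNW^{-3/2}}\to 0$;
\item the Schur corrections are $O(N\cdot W^{-4}\cdot W^{3/2})=O(W^{-1/2})$;
\item the mixed terms are $O(W^{-1/2}\log N)$ if you bound them through $\mathbb{G}^{(12)}=-\mathbb{X}^{-1}\mathbb{K}^{(12)}(\mathbb{K}^{(22)}-z)^{-1}$ integrated along the contour, as the paper does.
\end{itemize}
What does not survive is your spectral-projection argument for the mixed terms. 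The spectral projection of $\mathbb{K}$ above $1-C/(2W)$ now also contains the sector $M<\ell\le 2M$, so it is not close to $\mathfrak{P}^{(1)}_\epsilon$.
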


\textit{Proof of Lemma \ref{l:P}.} 
 
 Set  $\mathfrak{P}_{\epsilon}^{(2)}=I-\mathfrak{P}_{\epsilon}^{(1)}$ and
\begin{align}\label{K^ab.0}
\mathbb{K}^{(\alpha\beta)}=&\mathfrak{P}_{\epsilon}^{(\alpha)}\mathbb{K}\mathfrak{P}_{\epsilon}^{(\beta)},\quad \alpha,\beta=1,2.
\end{align}
Then we have  the bounds
\begin{align}\label{b_K.0}
&\| \mathbb{K}^{(11)}-\lambda_{\max} \|\le C(\epsilon/W+N^{-1}),\\
 & \|\mathbb{K}^{(12)}\|\le C \epsilon W^{-3/2},\quad
\mathbb{K}^{(22)}\le 1-C_0/W.
\notag\end{align}
The first bound here follow from Lemma \ref{l:la_max} and (\ref{cor:1.2}), the second -- from (\ref{cor:1.0}), and the last  bound was proven in Lemma \ref{l:K_L}.

Since  we proved in  Lemma \ref{l:la_max} that $\mathbb{K}_0=\mathbb{K}\Big|_{\zeta=0}$ has $(M+1)^2$ eigenvalues in the $(\epsilon/W)$-neighbourhood of 
$\lambda_{\max}$ and all
the remaining eigenvalues are less that $1-W$ and
\begin{align}\label{diff_K}
\|\mathbb{K}-\mathbb{K}_{0}\|\le C\epsilon/W, 
\end{align}
we conclude that the same property is valid for $\mathbb{K}$. Hence, we can apply
 the Cauchy residue theorem in the following  form:
\begin{align*}
(\mathbb{K}^{N-1} g_s,g_s)=&-\frac{1}{2\pi i}\Big(\oint_{\mathcal{L}}-\oint_{|z|=1-c/2W}\Big)
z^{N-1}\sum_{\alpha,\beta=1}^2\Big(\mathbb{G}^{(\alpha\beta)}(z)\mathfrak{P}_{\epsilon}^{(\beta)}g_s,\mathfrak{P}_{\epsilon}^{(\alpha)}g_s\Big)dz\\
=&-\frac{1}{2\pi i}\oint_{\mathcal{L}}z^{N-1}
\sum_{\alpha,\beta=1}^2\Big(\mathbb{G}^{(\alpha\beta)}(z)\mathfrak{P}_{\epsilon}^{(\beta)}g_s,\mathfrak{P}_{\epsilon}^{(\alpha)}g_s\Big)dz+o(\lambda_{\max}^{N-1}).\end{align*}
Here $\mathbb{G}(z)=(\mathbb{K}-z)^{-1}$ and
\begin{align*}
\mathcal{L}=&\partial \Omega,\quad \Omega=\{z:|z|\le \lambda_{\max}(1+2k_0/N)\wedge|z-\lambda_{\max}|\le C(\epsilon/W)\}.
\end{align*}
Since the spectrum of $\mathbb{K}$ belongs to $[0,\lambda_{\max}(1+k_0/N)]$ (see (\ref{n.1})), by (\ref{b_K.0}) and the standard resolvent bounds we have for $z\in \mathcal{L}$
\begin{align*}
&\|(\mathbb{K}^{(11)}-z)^{-1}\| \le C|z-\lambda_{\max}(1+k_0/N)|^{-1},\\
&\|\mathbb{G}^{(11)}(z)\|\le C|z-\lambda_{\max}(1+k_0/N)|^{-1},
\qquad \qquad\|\mathbb{G}^{(22)}\|\le CW,\\
&\|\mathbb{G}^{(12)}\|= \|(\mathbb{K}^{(11)}-z)^{-1}\mathbb{K}^{(12)}\mathbb{G}^{(22)}\|\le C|z-\lambda_{\max}(1+k_0/N)|^{-1}\epsilon/W^{1/2}.
\end{align*}
Therefore, we conclude that the integrals with $\mathbb{G}^{(12)}$ and $\mathbb{G}^{(21)}$ gives us $o(\lambda_{\max}^{N-1})$.
In addition, using (\ref{b_K.0}) and the last bound, we obtain
\begin{align*}
&\Big|\oint_{\mathcal{L}} z^{N-1}
\Big(\big(\mathbb{G}^{(11)}(z)-(\mathbb{K}^{(11)}-z)^{-1}\big)\mathfrak{P}_{\epsilon}^{(1)}g_s,\mathfrak{P}_{\epsilon}^{(1)}g_s\Big)dz \Big|\\
&\le C \|\mathbb{K}^{(21)}\|^2\|g_s\|^2\sup_z\|(\mathbb{K}^{(22)}-z)^{-1}\|\cdot \oint_{\mathcal{L}}  \|\mathbb{G}^{(11)}(z)\|\cdot \|(\mathbb{K}^{(11)}-z)^{-1}\||dz| 
\\
&\le C(\epsilon^2/W^3)\cdot W\cdot N=C/W=o(1),\\
&\Big|\oint_{\mathcal{L}} z^{N-1}
\Big(\big(\mathbb{G}^{(22)}(z)-(\mathbb{K}^{(22)}-z)^{-1}\big)\mathfrak{P}_{\epsilon}^{(2)}g_s,\mathfrak{P}_{\epsilon}^{(2)}g_s\Big)dz \Big|\\
&\le C \|\mathbb{K}^{(21)}\|^2\|g_s\|^2\cdot \sup_z\big(\|(\mathbb{K}^{(22)}-z)^{-1}\| \|\mathbb{G}^{(22)}(z)\|\big) \oint_{\mathcal{L}} \|(\mathbb{K}^{(11)}-z)^{-1}\||dz| \\
&\le C\varepsilon^2/W^3\cdot W^2\cdot \log N=C\log N/N=o(1).
\end{align*}
Hence,
\begin{align*}
(\mathbb{K}^{N-1}g_s,g_s)=&
-\frac{1}{2\pi i}\oint_{\mathcal{L}}z^{N-1}((\mathbb{K}^{(11)}-z)^{-1}\mathfrak{P}_{\epsilon}^{(1)}g_s,\mathfrak{P}_{\epsilon}^{(1)}g_s)dz\\
&-\frac{1}{2\pi i}\oint_{\mathcal{L}}z^{N-1}((\mathbb{K}^{(22)}-z)^{-1}\mathfrak{P}_{\epsilon}^{(2)}g_s,\mathfrak{P}_{\epsilon}^{(2)}g_s)dz
+o(\lambda_{\max}^{N-1}).
\end{align*}
Observe that the second integral here is zero, since $(\mathbb{K}^{(22)}-z)^{-1}$ is analytic in $\Omega$.
Then, by the Cauchy theorem, we obtain (\ref{l:P.1}) with  $g_0=\mathcal{P}_0 g$ replaced by 
$\mathfrak{P}_{\epsilon}^{(1)}g_s$.

But it is easy to see that
\[ 
\|\mathfrak{P}_{\epsilon}^{(1)}g_s-g_0\|  = \|\mathfrak{P}_{\epsilon}^{(1)}g_s-\mathcal{P}_0g_s\|=O(W^{-1/2}).\]
This completes the proof of the lemma.

$\square$

\textit{Poof of Theorem \ref{t:new}}. 
Consider the matrix $\mathbb{A}^{(M)}$ of the form
\begin{align}\label{A^M}
\mathbb{A}^{(M)}_{\ell,k;\ell',k'}=-\delta_{\ell,\ell'}\delta_{k,k'} \ell(\ell+1)/8(u_*\kappa_*)^2+(\nu t^{(\ell)}_{0k}, t^{(\ell')}_{0k'}).
\end{align}
By (\ref{cor:1.2})   we have 
\[
|\mathbb{K}^{(11)}-I-N^{-1}\mathbb{A}^{(M)}\|\le C(\log W)^4/W^{5/2}.
\]
Hence,
\[
((\mathbb{K}^{(11)})^{N-1}g_0,g_0)=((I+N^{-1}\mathbb{A}^{(M)})^{N-1}g_0,g_0)+o(1)=(e^{\mathbb{A}}g_0,g_0),
\]
where $\mathbb{A}$ is defined by (\ref{A^M}) without restrictions $\ell,\ell'\le M$. Moreover, since
\[
(\nu t^{(\ell)}_{0k}, t^{(\ell')}_{0k'})=\delta_{k,k'}(\nu t^{(\ell)}_{0k}, t^{(\ell')}_{0k})
\]
and $g_0\sim t^{(0)}_{00}$, we obtain that the space $\mathcal{E}_{00}=\{t^{\ell}_{00}\}_{\ell=0}^\infty$ reduces $\mathbb{A}$.  If 
we take the basis $\{P_{\ell}(z)\}_{\ell=1}^\infty$ in $\mathcal{E}_{00}$, where $P_{\ell}$ is the $\ell$th Legendre polynomial and $z=\cos\theta$,
we obtain that $\mathbb{A}\Big|_{\mathcal{E}_{00}}=\mathbb{A}_0$ with $\mathbb{A}_0$ from (\ref{bbA}).
This completes the proof of Theorem \ref{t:new}.

$\square$


\begin{thebibliography}{99}

\bibitem{Af:16} Afanasiev, I.: On the correlation functions of the characteristic polynomials of the sparse hermitian random matrices.
J. Stat. Phys, \textbf{163},   324 -- 356  (2016)

\bibitem{Af:19} Afanasiev, I.: On the correlation functions of the characteristic
polynomials of non-Hermitian random matrices with
independent entries, J.Stat.Phys. 176, 1561 - 1582 (2019)



\bibitem{AS:sp_det}  Afanasiev, I. ,  Shcherbina, T.: Characteristic polynomials of sparse non-Hermitian random matrices, 	J Stat Phys  \textbf{192:12} (2025)
 
 \bibitem{APSo:09}
Akemann, G. ,Phillips, M.J. and Sommers, H.-J.: Characteristic polynomials in real Ginibre ensembles, J. Phys. A: Math. Theor. 42 (2009)
 
\bibitem {Ak-Ve:03} Akemann, G. , Vernizzi, G., Characteristic Polynomials of Complex Random Matrix Models, Nucl. Phys. B, 3:600, p. 532-556 (2003)


\bibitem{Br-Hi:00} Br\'ezin, E., Hikami, S.: Characteristic polynomials of random matrices.
Commun. Math. Phys. 214, p. 111 -- 135 (2000)

\bibitem{Br-Hi:01} Br\'ezin, E., Hikami, S.: Characteristic polynomials of real symmetric random matrices.
Commun. Math. Phys., vol. 223, p. 363 -- 382 (2001)

%

\bibitem{Dr:band} Drogin, R. Localization of one-dimensional random band matrices, \url{ arXiv:2508.05802v2 } (2025)

\bibitem{YY:2d_band}
Dubova, S., Yang, K., Yau, H.-T., and Yin, J.: Delocalization of Two-Dimensional Random Band Matrices, arXiv:2503.07606

\bibitem{YY:3d_band}
Dubova, S., Yang, F., Yau, H.-T., and Yin, J.: Delocalization of Non-Mean-Field Random Matrices in Dimensions $d\ge 3$, 	arXiv:2507.20274


\bibitem{ErR:band} Erd\H{o}s, L., Riabov, V. The Zigzag Strategy for Random Band Matrices
\url{arXiv: 2506.06441} 


\bibitem{FM:91}
Fyodorov, Y.V., Mirlin, A.D.: Scaling properties of localization in random band matrices: a $\sigma$-model
approach, Phys. Rev. Lett. {\bf 67}, 2405 -- 2409 (1991)

\bibitem{H:24} Han, Y.: The circular law for random band matrices: improved bandwidth for general models, 	arXiv:2410.16457

\bibitem{H:25} Han, Y.:  The circular law for non-Hermitian random band matrices up to bandwidth $W^{1/2+c}$, \url{arXiv:2508.18143}

\bibitem{H:25_1} Han, Y.: Circular law for non-Hermitian block band matrices with slowly growing bandwidth, arXiv:2511.01744 

\bibitem{Hu:63} Hua, L. K. Harmonic Analysis of Functions of Several Complex Variables in the Classical Domains, American Mathematical Society, Providence, RI, 1963

\bibitem{JJLO:21} 
Jain, V., Jana, I., Luh, K., and O'Rourke, S.: Circular law for random block band matrices with genuinely
sublinear bandwidth, J. Math. Phys. \textbf{62:8} (2021)

 \bibitem{MO:24}
Maltsev, A., Osman, M. Bulk universality for complex non-hermitian matrices with independent
and identically distributed entries, \url{ arXiv: 2310.11429v4} (2023)

%

 

%
%



%


\bibitem{SS:17}  Shcherbina, M., Shcherbina, T.: Characteristic polynomials for 1d random band matrices from the localization side,
Commun. Math. Phys. {\bf 351}, 
p. 1009 -- 1044 (2017)


\bibitem{SS:Un} Shcherbina, M.,  Shcherbina, T.:  Universality for 1 d random band matrices, 	Commun. Math. Phys.{\bf 385}, 667 -- 716 (2021)

\bibitem{SS:band_com_d}  Shcherbina, M., Shcherbina, T.: Characteristic polynomials of non-Hermitian random band
matrices,\url{arXiv:2510.04255}


\bibitem{TSh:ChW}
Shcherbina, T.: On the correlation function of the characteristic polynomials of the Hermitian Wigner
ensemble. Commun. Math. Phys., vol. 308, p. 1 -- 21 (2011),

\bibitem{TSh:ChSC}
 Shcherbina, T.: On the correlation functions of the characteristic polynomials of the hermitian sample
covariance ensemble, Probab. Theory Relat. Fields, vol. 156, p. 449 -- 482 (2013)

\bibitem{TSh:14}
 Shcherbina, T. : On the second mixed moment of the characteristic polynomials of the 1D band matrices.
 Commun. Math. Phys., vol.  328, p. 45 -- 82 (2014)
 
 \bibitem {TSh:15} Shcherbina, T.. Universality of the second mixed moment of the characteristic polynomials of the 1D band matrices:
real symmetric case, J. Math. Phys.   56, pp. 29 (2015)
 
 \bibitem{TS:20} Shcherbina, T.: Characteristic polynomials of random band matrices near the threshold, J.Stat.Phys. 179:4, p. 920 -- 944 (2020)


\bibitem{TS:22} Shcherbina, T.: Transfer matrix approach for the real symmetric 1D random band matrices,
Electron. J. Probab.  27, p. 1-29 (2022)



 \bibitem {St-Fy:03} Strahov and Y. V. Fyodorov, Y.V., Strahov, E. Universal results for correlations of  characteristic polynomials:  Riemann-Hilbert approach
	Comm. Math. Phys., 241:2-3, p. 343-382 (2003)

\bibitem{TV:08}
Tao,T., Vu, V.: Random matrices: the circular law. Commun. Contemp. Math., 10(2):261–307 (2008)

\bibitem{Ta-Vu:15}
Tao, T., Vu, V.: Random matrices: universality of local spectral statistics of non-
Hermitian matrices, Ann. Probab. 43, 782–874 (2015)

\bibitem{TVKr:10} Tao T., Vu V., Krishnapur M., Random matrices: Universality of ESDs and the circular law, Annals of Probability 38:5,
2023  (2010)

\bibitem{Tikh:23}
Tikhomirov, K.: On pseudospectrum of inhomogeneous non-Hermitian random
matrices, arXiv:2307.08211 (2023)

\bibitem{Vil:68}  Vilenkin, N. Ja.: Special Functions and the Theory of Group Representations.
Translations of Mathematical Monographs, AMS
1968; 613 pp; 





\bibitem{YY:1d_band}
Yau, H.-T., Yin, J.: Delocalization of one-dimensional random band matrices, arXiv:2501.01718

\end{thebibliography}
\end{document}